\tikzset{dotmark/.style={circle,fill,inner sep=1.5pt}}
\tikzset{emptymark/.style={circle,draw,fill=white,inner sep=1.5pt}}
\tikzset{crossmark/.style={thick,inner sep=1.5pt}}
\newcommand{\newhat}{\scalebox{1.5}[.75]{\trimbox{0pt 1.1ex}{\textasciicircum}}}
\newcommand{\stretchedhat}[1]{\accentset{\newhat}{#1}}
\def\fragmentco#1#2{{[}#1\,{.\,.}\,#2{)}}
\def\setn#1{{[}#1{]}}
\def\problembox#1{%
    \vspace{2mm}%
    \noindent\fbox{%
    \begin{minipage}{.985\linewidth}%
        #1
    \end{minipage}%
    }%
    \vspace{2mm}%
}
\newcommand{\defproblemp}[4]{%
    \problembox{%
        \textsc{#1}\\
        {\bf{Input:}} #2  \\
        {\bf{Output:}} #3\\
        {\bf{Parameter:}} #4
    }%
}
\renewenvironment{cases}{%
  \matrix@check\cases\env@cases
}{%
  \endarray\right.%
}
\def\env@cases{%
  \let\@ifnextchar\new@ifnextchar
  \left\lbrace
  \def\arraystretch{1.1}%
  \array{@{\;}c@{\quad}l@{}}%
}
\def\mid{\ensuremath :}
\def\emptyset{\varnothing}
\newcommand{\nat}{\mathbb{N}}
\newcommand{\Q}{\mathbb{Q}}
\newcommand{\Z}{\mathbb{Z}}
\newcommand{\field}[1]{\smash{\mathbb{F}}_{\smash{#1}\vphantom{q}}}
\def\NUM{\text{\#}}
\newcommand{\NUMP}[1]{\NUM{}_#1}
\def\MOD{\text{MOD}_p}
\newcommand{\MODP}[1]{\text{MOD}_#1}
\def\W#1{\ensuremath {\sf W{\bm{[}\,#1\,\bm{]}}}}
\def\W#1{\ensuremath {\sf W{\bm{[}#1\bm{]}}}}
\def\w{\NUM\W1}
\def\wMOD{\MOD\W1}
\newcommand{\wMODP}[1]{\text{MOD}_{#1}\W1}
\newcommand{\wP}[1]{\NUM_{#1}\W1}
\newcommand{\sylow}{\textup{Syl}}
\newcommand{\push}{\overrightarrow{\varphi}}
\newcommand{\wrGraph}{\circ}
\newcommand{\prodGroup}{\times}
\newcommand{\BigProdGroup}{\bigtimes}
\def\bigtimes{%
  \DOTSB\mathop{\mathpalette\mattos@bigtimes\relax}\slimits@
}
\newcommand\mattos@bigtimes[2]{%
  \vcenter{\hbox{%
    \sbox\z@{$#1\sum$}%
    \resizebox{!}{0.9\dimexpr\ht\z@+\dp\z@}{\raisebox{\depth}{$\m@th#1\times$}}%
  }}%
  \vphantom{\sum}%
}
\def\bigbowtie{%
  \DOTSB\mathop{\mathpalette\mattos@bigbowtie\relax}\slimits@
}
\newcommand\mattos@bigbowtie[2]{%
  \vcenter{\hbox{%
    \sbox\z@{$#1\sum$}%
    \resizebox{!}{0.9\dimexpr\ht\z@+\dp\z@}{\raisebox{\depth}{$\m@th#1\triangledown$}}%
  }}%
  \vphantom{\sum}%
}
\DeclareMathOperator*{\BigJoinGraph}{\bigbowtie}
\newcommand{\UnionGraph}{\uplus}
\newcommand{\unionSet}{\uplus}
\DeclareMathOperator{\IS}{IS}
\DeclareMathOperator{\aut}{Aut}
\DeclareMathOperator{\fp}{FP}
\DeclareMathOperator{\sylelm}{\overline{\varphi}}
\newcommand{\homs}[2]{\mbox{\ensuremath{\mathrm{Hom}(#1 \to #2)}}}
\newcommand{\indsub}{\ensuremath{\mathrm{IndSub}}}
\newcommand{\indsubs}[2]{\mbox{\ensuremath{\mathrm{IndSub}(#1 \to #2)}}}
\newcommand{\auts}[1]{\ensuremath{\mathrm{Aut}(#1)}}
\newcommand{\subs}[2]{\mbox{\ensuremath{\mathrm{Sub}(#1 \to #2)}}}
\newcommand{\primepower}[1]{q_{#1}}
\newcommand{\cphom}{\ensuremath{\mathrm{cp}\text{-}\mathrm{Hom}}}
\newcommand{\cphoms}[2]{\ensuremath{\mathrm{cp}\text{-}\mathrm{Hom}}(#1 \to #2)}
\newcommand{\cpindsub}{\ensuremath{\mathrm{cp}\text{-}\mathrm{IndSub}}}
\newcommand{\cpindsubs}[2]{\ensuremath{\mathrm{cp}\text{-}\mathrm{IndSub}}(#1 \to #2)}
\newcommand{\sat}[1]{\ensuremath{#1\text{-}\textsc{SAT}}}
\newcommand{\UniSat}[1]{\ensuremath{\textsc{Unique}\text{-}#1\text{-}\textsc{SAT}}}
\newcommand{\UniClique}{\ensuremath{\textsc{Unique}\text{-}\textsc{Clique}}}
\newcommand{\UniKClique}[1]{\ensuremath{\textsc{Unique}\text{-}#1\text{-}\textsc{Clique}}}
\newcommand{\clique}{\ensuremath{\textsc{Clique}}}
\newcommand{\homsprob}{\ensuremath{\textsc{Hom}}}
\newcommand{\cphomsprob}{\ensuremath{\textsc{cp-Hom}}}
\newcommand{\indsubsprob}{\ensuremath{\textsc{IndSub}}}
\newcommand{\cpindsubsprob}{\ensuremath{\textsc{cp-IndSub}}}
\newcommand{\Scat}[2]{\ensuremath #1_{\text{Sc}, #2}}
\newcommand{\ScatSet}[2]{\ensuremath \text{Sc}_{#1, #2}}
\newcommand{\CoSet}[2]{\ensuremath \text{Co}_{#1, #2}}
\newcommand{\subprob}{\ensuremath{\textsc{Sub}}}
\newcommand{\primebase}{\ensuremath{p}^{-}}
\newcommand{\colorT}{\ensuremath T}
\newcommand{\colorF}{\ensuremath F}
\newcommand{\colorB}{\ensuremath B}
\DeclareMathOperator{\wrLevel}{\varepsilon}
\DeclareMathOperator{\Hasselevel}{\ell}
\def\fps#1{\ensuremath\fp(\Gamma, #1)}
\def\fpb#1#2{\ensuremath\fp(#1, #2)}
\def\hl#1{\ensuremath \Hasselevel(#1)}
\def\nt#1{\ensuremath M_{#1}}
\def\aename#1{\ensuremath \stretchedhat{#1}}
\def\ae#1#2{\ensuremath \aename{#1}(#2)}
\def\aek#1#2#3{\ensuremath \aename{#1}_{#2}(#3)}
\def\ess#1#2{\ensuremath #1{\{}#2{\}}}
\newcommand{\fpt}{\leq_{\mathrm{T}}^{\mathrm{fpt}}}
\def\od#1{\ensuremath\mathbb{O}(#1)}
\def\sym#1{\ensuremath \mathfrak{S}_{#1}}
\def\cgr#1#2{\ensuremath \smash{C_{#1}^{#2}}\vphantom{{}_q^d}}
\newcommand{\graphs}[1]{\mathcal{G}_{#1}} % set of graphs with #1 vertices
\newcommand{\image}[2]{\mathrm{Im}_{#1}^{#2}} % Image of graph parameter on  \graphs{#1}
\newcommand{\metaGraph}[2]{#1\big\langle#2\big\rangle} % meta Graph
\newcommand{\reductionGraph}[3]{\Tilde{#1}}
\newcommand{\subCoeff}[3]{\alpha_{#3}^{#2}}
\newcommand{\VC}[1]{\mathrm{VC}(#1)}
\newcommand{\codo}[2]{f_{#1}(#2)}
\renewenvironment{cases}{%
  \matrix@check\cases\env@cases
}{%
  \endarray\right.%
}
\def\env@cases{%
  \let\@ifnextchar\new@ifnextchar
  \left\lbrace
  \def\arraystretch{1.1}%
  \array{@{\;}c@{\quad}l@{}}%
}
\def\mid{\ensuremath :}
\def\emptyset{\varnothing}
\def\epsilon{\varepsilon}
\title{From Graph Properties to Graph Parameters: Tight~Bounds~for~Counting~on~Small~Subgraphs}
\author{Simon Döring}{Max Planck Institute for Informatics and\\Saarbrücken Graduate School
of Computer Science\\Saarland Informatics Campus\\Saarbrücken, Germany}{sdoering@mpi-inf.mpg.de}{https://orcid.org/0009-0002-6667-5257}{}
\author{Dániel Marx}{CISPA Helmholtz Center for Information Security\\Saarbrücken, Germany}
{marx@cispa.de}{https://orcid.org/0000-0002-5686-8314}{}
\author{Philip Wellnitz}{National Institute of Informatics\\The Graduate University for Advanced Studies, SOKENDAI\\Tokyo, Japan}{wellnitz@nii.ac.jp}{https://orcid.org/0000-0002-6482-8478}{}
\authorrunning{S. Döring, D. Marx, and P. Wellnitz}
\begin{document}
\pagenumbering{roman}
\maketitle
\begin{abstract}
    A {\em graph property} is a function $\Phi$ that maps every graph to $\{0,1\}$ and is
    invariant under isomorphism. In the $\NUM{}\indsubsprob(\Phi)$ problem, given a graph
    $G$ and an integer $k$, the task is to count the number of $k$-vertex induced
    subgraphs $G'$ with $\Phi(G')=1$. For example, this problem family includes counting
    $k$-cliques or induced $k$ vertex subgraphs that are connected, among others. There has been
    extensive work on determining the parameterized complexity of
    $\NUM{}\indsubsprob(\Phi)$ for various properties $\Phi$. Very recently, Döring, Marx,
    and Wellnitz [STOC 2024] showed the general result that $\NUM{}\indsubsprob(\Phi)$ is
    \NUM{}\W1-hard for every nontrivial edge-monotone property $\Phi$ and, assuming ETH, cannot
    be solved in time $f(k)\, n^{o(\log k)}$.

    $\NUM{}\indsubsprob(\Phi)$ can be naturally generalized to {\em graph parameters,}
    that is, to functions $\Phi$ on graphs that do not necessarily map to $\{0,1\}$: now
    the task is to compute the sum $\sum_{G'}\Phi(G')$ taken over all $k$-vertex induced
    subgraphs $G'$. This problem setting can express a wider range of counting problems
    (for instance, counting $k$-cycles or $k$-matchings) and can model problems involving expected
    values (for instance, the expected number of components in a subgraph induced by $k$ random
    vertices).
    Our main results are lower bounds on $\NUM{}\indsubsprob(\Phi)$ in this setting, which
    simplify, generalize, and tighten the lower bounds of Döring, Marx, and Wellnitz in
    various ways.
    \begin{enumerate}
        \item We show a lower bound for every nontrivial edge-monotone graph parameter $\Phi$ with finite codomain
        \\ (not only for parameters that take value in $\{0,1\}$).
        \item The lower bound is tight: we show that, assuming ETH, there is no $f(k)n^{o(k)}$ time algorithm.
        \item The lower bound applies also to the modular counting versions of the problem.
        \item The lower bound applies also to the multicolored version of the problem.
        \end{enumerate}
      We can extend the \NUM{}\W1-hardness result to the case when the codomain of $\Phi$ is not finite,
      but has size at most $(1-\epsilon)\sqrt{k}$ on $k$-vertex graphs. However, if there is no bound on
      the size of the codomain, the situation changes significantly: for example, there is a nontrivial
      edge-monotone function   $\Phi$ where the size of the codomain is $k$ on $k$-vertex graphs and
      $\NUM{}\indsubsprob(\Phi)$ is FPT.
\end{abstract}

\clearpage
\thispagestyle{plain}
\tableofcontents
\clearpage
\pagenumbering{arabic}

\section{Introduction}

Many scientific fields deal with \emph{data}---very often in the form of networks or \emph{graphs}.
To understand said data, it is typically instructive to \emph{count} how often certain
substructures or patterns appear in a given input graph;
for instance in the study of database systems \cite{10.1145/380752.380867, 10.1145/3651614},
neural and social networks \cite{doi:10.1126/science.298.5594.824, 10.1007/978-3-319-21233-3_5}, and
computational biology \cite{10.1007/11599128_7, Uncovering}, to name just a few examples.
Unfortunately, this pattern analysis
becomes an algorithmically challenging task whenever our graphs grow rapidly in size.
Therefore, there is already a need for algorithms
that are able to count just small patterns efficiently.
Hence, a long line of research is devoted to efficiently count certain types of small
subgraphs in large graphs
\cite{hom:basis,sub,10.1007/978-3-642-39206-1_30,topo,alge,Roth2021,hamming,peyerimhoff2021parameterized,hereditary,DBLP:journals/algorithmica/GoldbergR24,edge_monotone}.

For example, if we wish
to compute a function that depends somehow on patterns that have at most $k$ vertices in a graph $G$ with
$n$ vertices, then one could do so by simply considering all induced subgraphs of
size $k$ in $G$.
However, there are $\binom{n}{k}$ many induced subgraphs of size \(k\), meaning that enumerating
all of them would require a running time of $O(n^k)$ which is unfeasible if the
pattern size $k$ becomes too large.
Instead, we require an algorithm
whose complexity does not explode with our pattern size $k$.

We use concepts from parameterized complexity theory to better understand the complexity of such problems.
To be more precise, we would like to know whether our problems are {\em fixed-parameter
tractable (FPT)}, that is, if they can be solved in time $f(k)n^{O(1)}$ for some computable
function $f$. An algorithm with such a running time might be inefficient for large patterns.
However, for small patterns (as they typically occur in practice), an FPT algorithm is
efficient even for large input graphs.

Also, a polynomial running time is often not achievable since many
counting problems are NP-complete and it
is widely believed that many famous counting problems, such as
\textsc{\NUM{}Clique} are {\em not} FPT. The class of \w-hard problems is defined as the
class of problems that are at least as hard as \textsc{\NUM{}Clique}, meaning that
it is very unlikely that such problems are FPT.

A common formalism to unify different counting problem is the following \cite{hard_families,connected,hereditary,hamming,edge_monotone,alge}. Let $\Phi$ be a {\em graph property} such as bipartite, clique, or planar; formally, $\Phi$ is a function that maps every graph to $\{0,1\}$ and closed under isomorphism. Then given a graph $G$ and integer $k$, the task is to count the number of $k$-vertex induced subgraphs that satisfy property $\Phi$, or more formally, to compute the value
\[\NUM{}\indsubs{(\Phi, k)}{G} \coloneqq \sum_{A \subseteq \binom{V(G)}{k}} \Phi(G[A]).\]
For a fixed graph property $\Phi$, the problem $\NUM{}\indsubsprob(\Phi)$ is defined
as the problem that gets as input an integer $k$ and a graph $G$ and computes
$\NUM{}\indsubs{(\Phi, k)}{G}$.

To express a wider range of problems, it is natural to generalize $\NUM{}\indsubs{(\Phi, k)}{G}$ from graph properties to {\em graph parameters:} that is, the codomain of $\Phi$ is not necessarily $\{0,1\}$. For example, when $\Phi$ is the number of perfect matchings or the number of Hamiltonian cycles, then $\NUM{}\indsubs{(\Phi, k)}{G}$ is the number of $k$-matchings and the number of $k$-cycles in $G$, respectively. If $\Phi$ is, say, the chromatic number of the graph, then $\NUM{}\indsubs{(\Phi, k)}{G}$ can be interpreted (after dividing by $\binom{|V(G)|}{k}$) as the expected chromatic number of a subgraph induced by a subset of $k$ random vertices. Thus the $\NUM{}\indsubsprob(\Phi)$ problem allows us to formulate problems involving the expected value of graph parameters over random subsets of vertices.

For a fixed graph property $\Phi$, Jerrum and Meeks \cite{connected,
hard_families} were the first who study the $\NUM{}\indsubsprob(\Phi)$ problem.
In later years, the $\NUM{}\indsubsprob(\Phi)$ problem has attracted a lot of research and it is now known that
$\NUM{}\indsubsprob(\Phi)$ is \w-hard for many classes of
graph properties \cite{even_odd, hom:basis, topo, hamming, alge,
hereditary, edge_monotone}.

For certain properties like \emph{being a complete graph} it is easy to see that $\NUM{}\indsubsprob(\Phi)$ is \w-hard.
On the other hand, there are \emph{trivial} graph parameters for which $\NUM{}\indsubsprob(\Phi)$ is
definitely FPT computable. We say that a graph property or a graph parameter
{\em $\Phi$ is trivial on $k$} if $\Phi$ is constant on $k$-vertex graphs. It is obvious that $\NUM{}\indsubsprob(\Phi)$ is easy if $\Phi$ is trivial on all $k$.

It is known that $\NUM{}\indsubsprob(\Phi)$ is \w-hard for certain $\Phi$ and
FPT computable for other $\Phi$. An implicit result by Curticapean, Dell, and Marx (see \cite{hom:basis}) shows
that for all graph properties $\Phi$ the problem $\NUM{}\indsubsprob(\Phi)$ is always either FPT or \w-hard. This
result can easily extend to graph parameters. However, this result does not give any easy criteria
to check which case holds. For many large classes of graph properties $\Phi$
it is known that $\NUM{}\indsubsprob(\Phi)$ is \w-hard (see \cref{sec:prior})
and current research suggests that $\NUM{}\indsubsprob(\Phi)$ is always \w-hard
whenever $\Phi$ is a nontrivial graph property (see \cite[Conjecture 1]{hereditary,edge_monotone}).

Recently, Döring, Marx and Wellnitz~\cite{edge_monotone} showed that
$\NUM{}\indsubsprob(\Phi)$ is \w-hard for graph properties that are \emph{edge monotone} (closed under
deletion of edges). They also provided some quantitative lower bounds on the exponent of $n$ under
the {\em Exponential-Time Hypothesis (ETH)} \cite{IMPAGLIAZZO2001512}. We extend their results in the following four ways.

\begin{enumerate}[(1)]
    \item We show \w-hardness for nontrivial edge-monotone graph parameters with finite codomain.
    Here, edge-monotone means that $\Phi(H) \geq \Phi(G)$ whenever $H$ is an edge-subgraph
    of $G$. This extends the main result of \cite{edge_monotone} to the much
    larger space of graph parameters. However, we also show in \cref{sec:sub:base}
    that there are nontrivial edge-monotone graph
    parameters with unbounded codomain that are FPT.
    \item We show that $\NUM{}\indsub(\Phi)$ has tight lower bounds under ETH.
    In \cite{edge_monotone}, the authors archived tight lower bounds only for some
    special cases (like $\Phi$ is nontrivial on infinitely many prime powers $p^t$).
    \item We show that $\NUM{}\indsub(\Phi)$ stays hard under modulo counting.
    \item We show that (1) to (3) also hold if
    we consider the colored version $\NUM{}\cpindsub(\Phi)$ of $\NUM{}\indsubsprob(\Phi)$, where the graph $G$ is partitioned into $k$ classes in the input, and we are considering only induced $k$-vertex subgraphs with exactly one vertex in each class.
\end{enumerate}

\subparagraph*{Finite Codomain.}
Our first main result generalizes and tightens the result of Döring, Marx and Wellnitz~\cite{edge_monotone} by considering graph parameters with arbitrary finite codomain (not only $\{0,1\}$) and making the lower bound tight in all cases.

\begin{restatable*}{mtheorem}{thmedgemon}\label{theo:edge_mono}
    Let $\Phi \colon \graphs{} \to \Q$ denote some edge-monotone graph parameter with finite codomain.
    \begin{itemize}
        \item If $\Phi$ is nontrivial, then $\NUM{}\indsub(\Phi)$ (and $\NUM{}\cpindsub(\Phi)$) is $\w$-hard.
        \item Further, assuming ETH, if the codomain of $\Phi$
        is equal to $\{0, \dots, c\}$ then there is a universal constant $\delta_c > 0$
        (that only depends on the size of the codomain $c$)
        such that for any $k \geq 3$ on which $\Phi$ is nontrivial, no algorithm (that reads the whole input)
        computes for every graph $G$ the number $\NUM{}\indsubs{(\Phi, k)}{G}$
        (or $\NUM{}\cpindsubs{(\Phi, \graphs{k})}{G}$) in time $O(|V(G)|^{\delta_c  k})$.
        \qedhere
    \end{itemize}
\end{restatable*}

An important feature of edge-monotone graph parameters is that it is very well possible that the parameter is nontrivial only on graphs of certain sizes. This can be contrasted with the case of, say, hereditary graph classes, where if the class is nontrivial on sizes $k_1$ and $k_2$, then it is nontrivial on all sizes in between.

Given an edge-monotone graph parameter $\Phi$, let $\nt{\Phi}$
contain every size $k$ such that $\Phi$ is nontrival on $k$-vertex graphs.
The proof of Döring, Marx and Wellnitz~\cite{edge_monotone} considers
two main cases and handles them with very different techniques:
(1) when $\nt{\Phi}$ contains arbitrary large primes $p$ and (2)
when $\nt{\Phi}$ contains arbitrary large prime powers $p^t$ for some fixed prime $p$.
The proof is completed with a delicate reduction: if $k\in \nt{\Phi}$ and $p^t$ is the largest prime factor of $k$, then a nontrivial edge-monotone problem on $p^t$ vertices can be reduced to computing $\NUM{}\indsubs{(\Phi, k)}{G}$. While this proves $\NUM{}$W[1]-hardness, the largest prime power $p^t$ dividing $k$ can be as small as logarithmic in $k$, resulting in weak quantitative lower bounds.

Our main technical contribution is a more refined reduction technique that allows reducing a problem on $p^t$ vertices to computing $\NUM{}\indsubs{(\Phi, k)}{G}$, where $p^t$ is now the largest prime power not larger than $k$ (and not necessarily dividing $k$). This allows us to completely bypass case (1) above an reuse (2) essentially verbatim, resulting in a much simpler proof.

\subparagraph*{Infinite Codomain.} The behavior of the $\NUM{}\indsub(\Phi)$ problem
can change significantly if $\Phi$ has infinite domain. Apparently, the behavior of
the problem is influenced by how large the codomain can be on $k$-vertex graphs.
We can extend the $\NUM{}$W[1]-hardness of \cref{theo:edge_mono} to the more general
case when the codomain of $\Phi$ on $k$-vertex graphs is of size at most $(1-\varepsilon)\sqrt{k}$.
However, the quantitative lower bound on the exponent becomes weaker. Here,
$\Phi \colon \graphs{k} \to \{0, \dots, \codo{\Phi}{k}\}$ means that the codomain
of $\Phi$ is $\{0, \dots, \codo{\Phi}{k}\}$ when restricted to $k$-vertex graphs.

\begin{restatable*}{mtheorem}{thmedgemonDomain}\label{theo:edge_mono:domain}
    Let $\Phi \colon \graphs{k} \to \{0, \dots, \codo{\Phi}{k}\}$ be an edge-monotone graph parameter.
    Further, let $\alpha \in \nat$ with $\alpha \geq 2$ and $\varepsilon > 0$ such that
    $\codo{\Phi}{k} = \lfloor (1 - \varepsilon)k^{1/\alpha} \rfloor$.
    \begin{itemize}
        \item If $\Phi$ is nontrivial then $\NUM{}\indsubsprob(\Phi)$ and $\NUM{}\cpindsub(\Phi)$ are both $\w$-hard.
        \item There exists a constant $\delta > 0$ (that only depends on $\alpha$ and $\varepsilon$)
        such that for all nontrivial $k \geq 3$,
         no algorithm (that reads the whole input) computes the function $\NUM{}\indsubs{(\Phi, k)}{\star}$
         (or $\NUM{}\cpindsubs{(\Phi, \graphs{k})}{\star}$) in time $O(|V(G)|^{\delta k^\beta})$,
         where $\beta \coloneqq 1 - {1}/{\alpha}$.
         \qedhere
    \end{itemize}
\end{restatable*}

When the codomain of $\Phi$ on $k$-vertex graphs is $\{0,\dots, k\}$, then it is no longer
true that $\NUM{}\indsub(\Phi)$ is hard for every edge-monotone graph parameter $\Phi$.
A simple example is when $\Phi$ is the number of isolated vertices, in which case an easy
formula gives a formula for $\NUM{}\indsubs{(\Phi, k)}{G}$.
In contrast, we show that our techniques can be
used to obtain hardness results for specific graph parameters with unbounded codomains.
In \cref{theo:hard:common}, we show hardness for $\NUM{}\indsubsprob(\Phi)$ for some common graph parameters like the number
of connected components, chromatic number and maximal degree.

Beyond the example of isolated vertices above, there seem to be many nontrivial graph parameters for which $\NUM{}\indsubsprob(\Phi)$ is FPT.
For example, in \cref{sec:sub:base} we show that, for every fixed $c\ge 1$,  $\NUM{}\indsubsprob(\Phi_c)$ is FPT,  where $\Phi_c(G) = \NUM{}E(G)^c$.
Thus a simple analog of \cref{theo:edge_mono} for edge-monotone $\Phi$ with infinite codomain is out of question. In \cref{sec:sub:base},
we present a large class of nontrivial graph parameters $\Phi$ for which $\NUM{}\indsubsprob(\Phi)$ is FPT. This class could be a
conjectured candidate for the easy cases in an eventual dichotomy generalizing \cref{theo:edge_mono} to arbitrary edge-monotone $\Phi$.
We define this class to contain every $\Phi$ that is a a linear combination of subgraph counts of a set of graphs with bounded
vertex cover number. The idea of using different bases to analyze counting problems
was originally used in \cite{hom:basis}.

Finally, another interesting aspect of unbounded codomain is that it is no longer true that $\NUM{}\cpindsubsprob(\Phi)$
and $\NUM{}\indsubsprob(\Phi)$ have the same complexity. As an example, let $\Phi$ be the number of perfect matchings.
Then $\NUM{}\indsubsprob(\Phi)$ is the problem of counting $k$-matchings, which is well known to be $\NUM{}$W[1]-hard.
On the other hand, given a graph $G$ partitioned into $k$ classes, counting $k$-matchings where each of vertices contains
exactly one endpoint of the matching can be done in FPT time, showing that $\NUM{}\cpindsubsprob(\Phi)$ is FPT.

\subparagraph*{Modulo Counting.}
In \cref{sec:mod}, we further extend the results of \cite{edge_monotone} to modulo counting.
For a fixed prime number $p$ and a counting problem $\NUM{}A$, we define
$\MODP{p}A$ as the problem that decides on input $x$ whether $\NUM{}P(x)$ is
divisible by $p$. Further, we define $\NUMP{p}A$ as the problem that computes
$A(x) \mod p$. If $\NUM{}A$ is parameterized by $\kappa$, then
$\MODP{p}A$ and $\NUMP{p}A$ are both parameterized by $\kappa$.

We say that a problem $\MODP{p}A$ is $\wMODP{p}$-hard if there is a parameterized Turing
reduction from $\MODP{p}\clique{}$ to $\MODP{p}A$. Further, we say that a problem $\NUMP{p}A$
is $\wP{p}$-hard if there is a parameterized Turing
reduction from $\NUMP{p}\clique{}$ to $\NUMP{p}A$. Observe that there is always an easy Turing
reduction from $\NUMP{p}A$ to $\MOD{}A$ (simply return true if the output of $\MOD{}A(x)$ is zero).

The complexity of modulo counting gives us additional information about the hardness of our counting problem.
For example, it is well-known that computing the permanent of a matrix is $\NUM{}P$-complete
(see \cite{VALIANT1979189}), thus there is probably no algorithm that can compute the
permanent in polynomial time. However, observe that the permanent modulo $2$ is equal to the determinant modulo $2$.
Hence, this value can be computed in polynomial time. Hence, a counting problem that is
$\wMOD$-hard or $\wP{p}$-hard for all prime numbers $p$ can therefore be regarded as
a particularly hard counting problem. We show that this is the case for nontrivial
edge-monotone graph parameters. Further, we show tight bounds assuming
the randomized version of ETH (rETH).

\begin{restatable*}{mtheorem}{thmedgemonMOD}\label{theo:edge_mono:mod}
    Let $\Phi \colon \graphs{} \to \Z$ denote an edge-monotone graph parameter with finite codomain and
    $p$ be a prime number.
    \begin{itemize}
        \item If $\Phi$ is nontrivial, then $\NUMP{p}\indsubsprob(\Phi)$ and $\NUMP{p}\cpindsubsprob(\Phi)$
        are both $\wMODP{p}$-hard and $\wP{p}$-hard.
        \item Further, assuming rETH, if the codomain of $\Phi$
        is equal to $\{0, \dots, c\}$, with $c < p$, then there is a universal constant $\delta_c > 0$
        (that only depends on the size of the codomain $c$)
        such that for any $k \geq 3$ on which $\Phi$ is nontrivial, no algorithm (that reads the whole input)
        solves for every graph $G$ the problem $\NUMP{p}\indsubsprob(\Phi)$
        (or $\NUMP{p}\cpindsubsprob(\Phi, \graphs{k})$) in time $O(|V(G)|^{\delta_c  k})$.
        \qedhere
    \end{itemize}
\end{restatable*}

If we apply \cref{theo:edge_mono:mod} to a nontrivial edge-montone graph parameter
$\Phi \colon \graphs{} \to \{0, 1\}$, then we obtain that the computation of
$\NUM{}\indsubsprob(\Phi)$ modulo any prime number $p$ is still hard.

\begin{restatable}{corollaryq}{coredgemonMOD}\label{cor:hardness:mod}
    Let $\Phi \colon \graphs{} \to \{0, 1\}$ denote some edge-monotone graph property and
    $p$ be a prime number.
    \begin{itemize}
        \item If $\Phi$ is nontrivial, then $\NUMP{p}\indsubsprob(\Phi)$ (and $\NUMP{p}\cpindsubsprob(\Phi)$)
        is $\wMODP{p}$-hard and $\wP{p}$-hard.
        \item Further, assuming rETH, universal constant $\delta > 0$
        (independent of $\Phi$) such that for any prime number p and integer $k \geq 3$, on which $\Phi$ is nontrivial,
        no algorithm (that reads the whole input) solves for every graph $G$ the problem
        $\NUMP{p}\indsubsprob(\Phi)$ (and $\NUMP{p}\cpindsubsprob(\Phi)$) in time
        $O(|V(G)|^{\delta  k})$.
        \qedhere
    \end{itemize}
\end{restatable}

Observe that \cref{cor:hardness:mod} does not follow directly from the results of~\cite{edge_monotone}.
Imagine that we have an edge-monotone graph property $\Phi$ that is nontrivial only on
prime numbers.
Then for this case, the authors of~\cite{edge_monotone} rely on the  hardness of
$\NUM{}\homsprob(\mathcal{H})$ for a class of graphs $\mathcal{H}$ with unbounded treewidth
(see~\cite[Theorem 5.16]{edge_monotone}). However, observe that $\NUM{}\homsprob(\mathcal{H})$
might be trivial modulo $p$.

For example, fix a prime number $p$ and let $\mathcal{H} \coloneqq \{K_n : n \geq 2\}$.
Now, for all \(n \ge p\), we have
\[\NUM{}\homs{K_n}{G} = n!~\cdot~\NUM{}\subs{K_n}{G} \equiv_p 0.\]
This
implies that $\MOD{}\homsprob(\mathcal{H})$ is not $\wMOD$-hard although $\NUM{}\homsprob(\mathcal{H})$
is $\w$-hard. Hence, we cannot rely on the \w-hardness of $\NUM{}\homsprob(\mathcal{H})$ to
show the $\wMOD$-hardness of $\MOD{}\indsubsprob(\Phi)$. The parameterized complexity of
$\MOD{}\homsprob(\mathcal{H})$ was analyzed in \cite[Theorem 1.8]{peyerimhoff2021parameterized}.

\subsection{Prior Work}\label{sec:prior}

In this section, we summarize recent results for the parameterized complexity of the
$\NUM{}\indsubsprob(\Phi)$ problem. We order the results by their date of first
announcement.

\begin{enumerate}[(a)]
    \item The problem $\NUM{}\indsubsprob(\Phi)$ was first studied for some specific graph
    parameters $\Phi$. It was shown that $\NUM{}\indsubsprob(\Phi)$ is \w-hard for
    \begin{enumerate}[(i)]
        \item $\Phi(G) = 1$ if and only if $G$ is connected \cite{connected}
        \item $\Phi$ has a low edge densities \cite{hard_families}
        \item $\Phi(G) = 1$ if and only if $|E(G)|$ is even,  $\Phi(G) = 1$ if and only if
            $|E(G)|$ is odd (both in \cite{even_odd})
    \end{enumerate}

    \item In \cite{alge}, Dörfler, Roth, Schmitt, and Wellnitz proved that
        $\NUM{}\indsubsprob(\Phi)$ is \w-hard if there are infinitely many prime powers $t$
        such that $\Phi(K_{t, t}) \neq \Phi(\IS_{2t})$. We rely heavily on their techniques
        to prove our results.
    \item In \cite{hamming}, Roth, Schmitt, and Wellnitz proved
    that $\NUM{}\indsubsprob(\Phi)$ is \w-hard if $\Phi$ is
        \emph{monotone}, meaning closed under taking subgraphs.
\item In \cite{hereditary}, Focke and Roth proved that $\NUM{}\indsubsprob(\Phi)$ is
        \w-hard whenever $\Phi$ is a nontrivial and hereditary graph property.
        A say that $\Phi$ is \emph{hereditary} if it is closed under
        vertex-deletion, meaning that if $G$
        satisfies $\Phi$, then each induced subgraph of $G$ also satisfies $\Phi$.
    \item Lastly in \cite{edge_monotone}, Döring, Marx and Wellnitz showed that $\NUM{}\indsubsprob(\Phi)$ is
    \w-hard if $\Phi$ is nontrivial and edge-monotone. In this work, we refine their work by simplifying,
    improving, and extending their main results.
\end{enumerate}

For modulo counting the following results are known.

\begin{enumerate}[(a)]
    \item Beigel, Gill, and Hertrampf were the first who considered
    the decision problem $\MODP{p}A$ in \cite{Beigel_MOD_p}
    by introducing the complexity class $\MODP{p}P$ (the set of problems
    $\MODP{p}A$ that can be solved in polynomial time). The showed the
    surprising result that $\MODP{p}P = \MODP{{p^t}}P$. There work
    was later extended by Hertrampf in \cite{HERTRAMPF1990325}.

    \item Faben introduced in \cite{Faben_sharp_p} the idea of
    modulo counting by introducing $\NUMP{p}A$. He used this
    class to give a dichotomy result for the Generalised Satisfiability Problem
    modulo $p$.

    \item In \cite{alge}, Dörfler, Roth, Schmitt, and Wellnitz proved that
    $\NUMP{p}\indsubsprob(\Phi)$ is $\wP{p}$-hard if there are infinitely many prime powers $t$
    such that $\Phi(K_{t, t}) \not \equiv_p \Phi(\IS_{2t})$. Note that they used
    $\MODP{p}\indsubsprob(\Phi)$ to denote $\NUMP{p}\indsubsprob(\Phi)$.

    \item In \cite{peyerimhoff2021parameterized}, Peyerimhoff, Roth, Schmitt, Stix, and Vdovina
    analyzed the complexity of $\NUMP{p}\homsprob(\mathcal{H})$ for
    recursively enumerable classes of graphs $\mathcal{H}$. They showed
    that  $\NUMP{p}\homsprob(\mathcal{H})$ is $\wP{p}$-hard if and
    only if the class $\mathcal{H}^\ast_p$ has unbounded treewidth.
    Otherwise, the problem is FPT solvable.

    \item In \cite{graphom_mod}, Bulatov, and Kazeminia analyzed the complexity
    of the problem $\NUMP{p}\texttt{GraphHom}(H)$ that for a fixed graph $H$
    and an input graph $G$ asks to compute the number $\NUM{}\homs{G}{H} \mod p$.
    They complete classified when $\NUMP{p}\texttt{GraphHom}(H)$ is $\NUMP{p}P$-complete
    for square-free graphs $H$.

\end{enumerate}

Besides the aforementioned results, there is a lot of research specifically about the complexity class $\MODP{2}P$
which is also know as \emph{parity P} and is usually denoted by $\oplus P$ (see \cite{10.5555/647211.720053, DBLP:journals/siamdm/FockeGRZ21, DBLP:journals/algorithmica/GoldbergR24}).

\subsection{Technical Overview}

This section gives an overview of the techniques and ideas that we use to prove our main theorems.
We start by discussing techniques that allow us to get hardness results and lower bounds under ETH.

\paragraph*{The Alternating Enumerator and Reductions}

We start by considering the $\clique{}$ problem that is $\w$-hard by definition and has tight
lower bounds under ETH (see \cref{theo:k-clique:ETH}). Next,
we construct a reduction from $\clique{}$ to $\NUM{}\cpindsub(\Phi)$ that preserves the
tight lower bounds. Later, we use a reduction from $\NUM{}\cpindsub(\Phi)$ to
$\NUM{}\indsub(\Phi)$. Such reductions were first developed for \emph{graph properties}
in \cite{alge} and later refined in \cite{edge_monotone}. We present the reduction for
\emph{graph parameters} in \cref{sec:reduction}.

Our reduction relies on the crucial observation
that we can find, for each edge-monotone graph parameter $\Phi \colon \graphs{} \to \Q$,
graphs whose \emph{alternating enumerator} is nonvanishing, where the alternating enumerator is defined as
\begin{align*}
    \ae{\Phi}{G} \coloneqq \sum_{S \subseteq E(G)} \Phi(\ess{G}{S}) (-1)^{\#S}.
\end{align*}
We say that a graph $G$ is \emph{nonvanishing} if the alternating enumerator $\ae{\Phi}{G}$ is not zero.
The graph parameter $\Phi$ is always clear from context. The alternating enumerator was originally
introduced in~\cite{alge} and later used in~\cite{edge_monotone}.

The main observation of \cref{sec:reduction} is that we can solve the $\ell$-$\clique{}$ problem by using an oracle for
$\NUM{}\indsubs{(\Phi,k)}{\star}$ whenever we can find a nonvanishing graph $H$ that has $k$ vertices that
contains $K_{\ell, \ell}$ as a subgraph (see \cref{theo:lower:bound:bicliques}). Thus, to get tight
bounds, we try to find nonvanishing graphs $\{H_k\}$ that contain bicliques of unbounded size.
This leads to the main result of \cref{sec:reduction}.

\begin{restatable*}{corollary}{cortightbounds}\label{cor:tight:bounds}
    Let $\Phi \colon \graphs{} \to \Q$ be a graph parameter and $0 < \psi \leq 1$.
    Assume that there is a set of graphs $A \coloneqq \{G_k\}$ and a function $h \colon A \to \nat$
    such that each $G_k$ has $k$ vertices, contains $K_{h(G_k), h(G_k)}$
    as a subgraph and has a nonvanishing alternating enumerator.
    \begin{itemize}
        \item If $h(G) \in \omega(1)$, then the problem $\NUM{}\indsub(\Phi)$ (and $\NUM{}\cpindsub(\Phi)$) is $\w$-hard.
        \item If $h(G) \in \Omega(|V(G)|^\psi)$, then there is a global constant $\gamma > 0$
            such that for each fixed $k = |V(G_j)|$, no algorithm (that reads the whole input)
            that for every \(G\) computes
            $\NUM{}\indsubs{(\Phi, k)}{G}$ (or $\NUM{}\cpindsubs{(\Phi, \graphs{k})}{G}$) in time $O(|V(G)|^{\gamma  k^\psi})$, unless ETH fails.
            Here, the constant $\gamma$ depends on $h$ but is independent of $\Phi$ and $k$.
        \item If $h(G) \in \Omega(|V(G)|^\psi)$, then there is no algorithm (that reads the whole input) that for every \(G\) computes
            $\NUM{}\indsubs{(\Phi, k)}{G}$ (or $\NUM{}\cpindsubs{(\Phi, \graphs{k})}{G}$) in time $O(f(k) \cdot |V(G)|^{o(k^\psi)})$ for all computable functions $f$ unless ETH fails.
            \qedhere
    \end{itemize}
\end{restatable*}

\paragraph*{Computing the Alternating Enumerator using Fixed Points}

When using the alternating enumerable, we have the problem that the computation of this alternating sum over all edge subgraphs
is highly nontrivial. Nevertheless, there are some cases where the computation of $\ae{\Phi}{H}$ is straightforward.
For example, assume that $\Phi(H) = a$ and $\Phi(H') = b$ for all proper edge-subgraphs $H'$ of $H$. Now, using
the binomial theorem, it is easy to see
that $\ae{\Phi}{H} = (-1)^{\NUM{}E(H)}(b - a)$. Of course, this observation is only useful in a few special cases.

Thus, we rely on techniques that were originally developed
in~\cite{alge} and later refined in~\cite{edge_monotone}. These techniques compute the alternating enumerator
modulo a prime number $p$ by using the fixed points of a
specific group action that is defined using $p$-subgroup of the automorphism group of $K_n$. A $p$-group is a group
whose order is a $p$-power. Let $\Gamma \subseteq \aut(H)$ be $p$-subgroup of the automorphism group $\aut(G)$ of $G$,
then we say that an edge-subgraph $A$ is an
\emph{fixed point of $H$ with respect to $\Gamma$} if $g A = A$ for all $g \in \Gamma$, where $g A$ is the graph
with vertex set $A$ and edge set $\{\{g(u), g(v)\} : \{u, v\} \in E(A)\}$. We use $\fp(\Gamma, H)$
to denote the set of all fixed points.
If $A$ and $B$ are two fixed points with
respect to the same group $\Gamma$ and $B$ is an edge-subgraph of $A$, then we say that $B$ is a \emph{sub-point} of $A$.
This is sometimes denoted by $B \subseteq A$. Now, we exploit that these fixed points form a \emph{lattice
structure} that we can analyze. A complete introduction to this topic can be found in \cref{app:fix:basics}.

The advantage of using fixed points is that they allow us to compute $\ae{\Phi}{H}$ modulo a prime number $p$ by
considering only the fixed points of this specific group action (see \cref{lem:chi:comp}). Usually, the number
fixed is much lower than the number subgraphs. Further, the set of fixed points has a lot of structure that
helps us with our analyses. Now, similar to the example
given at the beginning of this section, we can consider the case that we have a fixed point $A$ with $\Phi(A) = a$
such that $\Phi(B) = b$ for all fixed points \(B\) that lie in \(A\). As it turns out, we again
get $\ae{\Phi}{A} \equiv_p (-1)^{\NUM{}E(H)}(b - a)$,
which is enough to show that $\ae{\Phi}{A}$ is nonvanishing. This is our main tool to show
that we have a nonvanishing alternating enumerator.

\begin{restatable*}[See {\cite[Lemma 4.8]{edge_monotone}}]{lemma}{lemcompae}\label{lem:ae:nonvanishing:sub-points}
    Let \(H\) denote a graph, let \(\Gamma \subseteq \auts{H}\) denote a \(p\)-group, and
    let \(\Phi \colon \graphs{} \to \{0, \dots, c\} \) denote a graph parameter with $c < p$.
    Further, let \(A \in \fps{H}\) denote a fixed point with \(\Phi(A) = a\),
    such that \(\Phi(B) = b\) for all of the proper sub-points \(B\) of \(A\).
    If $b \neq a$, then $\ae{\Phi}{A}$ is nonvanishing and $\ae{\Phi}{A} \not \equiv_p 0$.
\end{restatable*}

\paragraph*{The Main Steps of the Proof of \cref{theo:edge_mono}}

Now, to prove \cref{theo:edge_mono}, we need to find graphs that contain large bicliques and whose
alternating enumerator is nonvanishing. However, we cannot always find such graphs.
Thus, we rely on additional ideas. The proof works in multiple steps.

\subparagraph*{The Prime Power Case: Sylow Groups.}

In \cref{sec:prime:power}, we show that \cref{theo:edge_mono} is true for the prime power case. Similar to \cite{edge_monotone},
we archive this result by proving that if $\Phi$ is nontrivial on a prime power $p^t$ then we can find a nonvanishing graph $H$
that contains $K_{p^{t-1}, p^{t-1}}$ as a subgraph. This is enough due to \cref{cor:tight:bounds}.
We find the graph $H$ by analyzing the fixpoints
of the $p$-sylow group of $\aut(K_{p^t})$ (see \cref{sec:sylow}). In \cref{lem:treewidth:wreath:level} we show that these fixed points
contain large bicliques.

\subparagraph*{Reducing to the Prime Power Case.}

In order to make use of the prime power case, we have to find a way to go from a nonprime power $k$
to a prime power $p^t$. This is archived by using \emph{inhabited graphs}.
Inhabited graphs allow us to take a graph $G$, a graph $C$, and graphs $H_2, \dots H_s$ and combine them
into a larger graph (see {\cite[Definition 6.1]{edge_monotone}}).

\begin{restatable*}{definition}{definhab}\label{def:10.31-1}
    For graphs $G_1, \dots, G_m$ and $C \in \graphs{m}$,
    we define the \emph{inhabited graph} $\metaGraph{C}{G_1, \dots, G_m}$ via
    \begin{align*}
        V(\metaGraph{C}{G_1, \dots, G_m}) &\coloneqq V(G_1) \unionSet \dots \unionSet V(G_m) \\
        E(\metaGraph{C}{G_1, \dots, G_m}) &\coloneqq \{\{(i, v_i), (j, u_j)\} \mid \{i,
        j\} \in E(C) \text{ or } (i = j \text{ and } \{v_i, u_i\} \in E(G_i)) \}. \tag*{\qedhere}
    \end{align*}
\end{restatable*}

Assume that a graph parameter $\Phi$ is nontrivial on $k \coloneqq c + p^t$. Then, we could try
to find a graph $C$ and graphs $H_2, \dots, H_s$ with $c = \sum_{i = 2}^s |V(H_i)|$ such that the graph
parameter $\Tilde{\Phi}(G) = \Phi(\metaGraph{C}{G, H_2, \dots, H_s})$ is nontrivial on $p^t$. Here,
we use the fact that the graph $\metaGraph{C}{G, H_2, \dots, H_s}$ has $k$ vertices and
that $\Phi$ is nontrivial on $k$. Also $\Tilde{\Phi}(G)$
is still edge-monotone, and we get a reduction from $\NUM{}\cpindsubs{\Tilde{\Phi}, \graphs{p^t}}{\star}$ to
$\NUM{}\cpindsubs{\Phi, \graphs{k}}{\star}$.

\begin{restatable*}{lemma}{lemincexc}\label{lem:inc:exc}
    Let \(\Phi \colon \graphs{} \to \Q\) denote a graph parameter and suppose that there is an algorithm
    \(\mathbb{A}\) that computes for each graph \(H\) and each $H$-colored graph $G$ the value
    \(\NUM{}\cpindsubs{(\Phi,H)}{G}\) in time \(g(|V(H)|, |V(G)|)\) for some computable
    function \(g\) that is monotonically increasing.
    Further, for a graph \(C \in \graphs{s}\) and graphs $H_2, H_3, \dots, H_s$, write
    \(
        \reductionGraph{\Phi}{C}{H_2, \dots, H_s}(G) = \Phi(\metaGraph{C}{G, H_2 \dots, H_s})
    \)
    for the graph parameter that inserts $G$ into $\metaGraph{C}{\cdot, H_2, \dots, H_s}$ and evaluates the result on \(\Phi\).

    Then, there is an algorithm \(\mathbb{B}\) with oracle access to \(\mathbb{A}\) that computes for
    each graph \(H\) and each $H$-colored graph $G$ the value $\NUM{}\cpindsubs{(\reductionGraph{\Phi}{C}{H_2, \dots, H_s}, H)}{G}$
    in time $O(g(c + k, c + |V(G)|) + (c + |V(G)|)^2)$,
    where $c \coloneq \sum_{i = 2}^s \#V(H_i)$.
    The algorithm \(\mathbb{B}\) queries \(\mathbb{A}\) only graphs on $H$ with
    \(c + |V(H)|\) many vertices.
\end{restatable*}

\subparagraph*{Concentrated and Reducible Parameters.}

In the following, let $p$ be a prime number and let $\Phi$ be nontrivial on $k$.
We now have two techniques to get hardness. Either, we can directly find a graph $H$ with
$k$ vertices that contains a large enough biclique and has nonvanishing alternating enumerator; or we can
find a graph $C$ and graphs $H_2, \dots, H_s$ such that $\Tilde{\Phi}(G) = \Phi(\metaGraph{C}{G, H_2, \dots, H_s})$
is nontrivial on $p^t$, where $k = p^t + c$ and $c \coloneqq \sum_{i = 2}^s |V(H_i)|$. In order to obtain
good lower bounds, we have to ensure that, for the first case, the graph $H$ contains a large enough biclique. For
the second case, we have to ensure that the value $p^t$ is large enough (i.e. that we do not \emph{lose} too much
when doing the reduction). Otherwise, our reduction would not propagate the tight lower bounds under ETH.
These two approaches motivate the following two definitions.

\begin{restatable*}{definition}{defconsca}
    Let $p$ be a prime number and let $\Phi \colon \graphs{k}  \to \{0, \dots, \codo{\Phi}{k}\} $ denote a graph parameter. We say that
    \begin{itemize}
        \item $\Phi$ is \emph{concentrated on $k$ with respect to $p^t$} if  there is a
        graph $H \in \graphs{k}$ with $\ae{\Phi}{H} \not \equiv_p 0$ and $H$
        contains $K_{p^t, p^t}$ as a subgraph.
        \item $\Phi$ is \emph{reducible on $k$ with respect to $p^t$} if $\codo{\Phi}{k} < p$ and there is a graph $C \in \graphs{s}$ and graphs
        $H_2, \dots, \dots, H_{s} \in \graphs{}$ such that the graph
        parameter $\reductionGraph{\Phi}{C}{H_2, \dots, H_{s}}(G) \coloneq \Phi(\metaGraph{C}{G, H_2 \dots, H_{s}})$ is
        nontrivial on $p^t$ where $k = p^t + \sum_{i = 2}^s |V(H_i)|$.
        \qedhere
    \end{itemize}
\end{restatable*}

To obtain lower bounds that are as tight as possible, we need for all $k$ a prime power $p^t$ such that the
quotient $k/p^t$ is as small as possible. For this, we use the following lemma that ensures that this
is always possible as long as we pick $p^t$ carefully.

\begin{restatable*}{lemma}{lemlargebi}\label{lem:param:large:biclique}
    Let $p$ be prime and let $\Phi \colon \graphs{k} \to \{0, \dots, c\}$ be an edge-monotone graph parameter with $c < p$.
    If $\Phi$ is nontrivial on $k$ and $p^t + p^{t+1} \leq m$, then $\Phi$ is \emph{concentrated on $k$ with respect to $p^t$} or
    $\Phi$ is \emph{reducible on $k$ with respect to $p^{t+1}$}.
\end{restatable*}

\cref{lem:param:large:biclique} is one of the main technical tools that we develop to prove  our main
theorems since this lemma allows us to use a win-win strategy where we either always simply use \cref{cor:tight:bounds}
when $\Phi$ is concentrated or a reduction to the prime power case when $\Phi$ is reducible.

A similar strategy was used by the authors of \cite{edge_monotone} to obtain their lower bounds.
Their approach was to analyze the largest prime power inside a given number $k$ and to use their
techniques on this specific prime power (see \cite[Definition 6.17]{edge_monotone}). However, observe
that the largest prime power of $k$ could be logarithmically small which is why the authors of \cite{edge_monotone}
only archived a lower bound of $O(|V(G)|^{\gamma \sqrt{\log k}/ \log \log k })$. On the other side,
our \cref{lem:param:large:biclique} is much stronger by allowing arbitrary prime powers as long
as $p^t + p^{t+1} \leq k$. It is easy to see that for a fixed prime number $p$, the largest
prime power $p^t$ with $p^t + p^{t+1} \leq k$ has to property that $p^t \geq k/p$ meaning that $p^t$
grows linear with respect to $k$. Thus, if $p$ is fixed then we can archive tight lower bounds
since $k/p^t$ is bounded by a constant.

\subparagraph*{Bringing the Parts Together.}

Assume that $\Phi \colon \graphs{k} \to \{0, \dots, \codo{\Phi}{k}\}$
is edge-monotone and nontrivial. \cref{lem:param:large:biclique}
gives us the tools to prove \cref{theo:edge_mono} by using the following
win-win approach. Let $q(k)$ be a function that returns on input $k$ a prime power $q(k)$ with base $g(k)$
(i.e $q(k) = g(k)^t$ for some $t$) such that $\codo{\Phi}{k} < g(k)$ and $q(k) + q(k) \cdot g(k) \leq k$. Now,
\cref{lem:param:large:biclique} ensures that $\Phi$ is either
concentrated on infinitely many $k$ with respect to $q(k)$ or $\Phi$ is reducible on infinitely $k$ with respect to $q(k) \cdot g(k)$.

If $q(k) \in \Omega(k^{\psi})$ for a fixed $0 < \psi \leq 1$, then we can use \cref{cor:tight:bounds}
to show \w-hardness for the concentrated case.
Further, since each graph $H_k$ contains the graph $K_{q(k), q(k)}$ as a subgraph, we also get a lower bound
of $O(|V(G)|^{\delta k^{\psi}})$.  Otherwise, we are in the reducible case.
In this case, we can construct a graph property $\Scat{\Phi}{q}$ that is nontrivial on infinitely many primer powers $q(k)\cdot g(k)$.
We can use \cref{theo:edge_mono:prime:power:bicliques} to show that $\NUM{}\cpindsub(\Scat{\Phi}{q})$
is \w-hard and to get a lower bound of $O(|V(G)|^{\delta k^{\psi}})$ (see \cref{lem:scat:graph:parameter}).
Observe that we lose a prime factor during this process which is why we need that $\Scat{\Phi}{q}$
is nontrivial on $q(k)\cdot g(k)$ to mitigate this effect.
Now, we can use the reduction from $\NUM{}\cpindsub(\Scat{\Phi}{q})$ to $\NUM{}\cpindsub(\Phi)$
to show that $\NUM{}\cpindsub(\Phi)$ is also \w-hard and has the same lower bounds as
 $\NUM{}\cpindsub(\Scat{\Phi}{q})$ under ETH (see \cref{theo:scat:hard}).

If $\codo{\Phi}{k} \equiv c$, then we can choose the smallest prime
number $p$ with $c < p$ and set $q(k) \coloneqq p^{t-1}$ and $g(k) \equiv p$. Here, $p^t$ is the largest
prime power $p^t$ with $p^t \leq k$. Now, we get that $q(k) \in \Omega(k)$ which is why we get tight
lower bounds. This leads us to \cref{theo:edge_mono}. On the other side,
if $\codo{\Phi}{k} = \lfloor (1 - \varepsilon) k^{1/\alpha} \rfloor$, then we can choose
$q(k)$ in way that $q(k) \in \Omega(k^{1 - {1}/{\alpha}})$ which leads us to \cref{theo:edge_mono:domain}.

In \cref{sec:mod}, we show that our hardness results remain true under modulo counting
as long as $\codo{\Phi}{k}$ is bounded by a constant. These
results use the fact that our reduction from \cref{sec:reduction}
also holds for modulo counting as long as the alternating enumerator
does not vanish modulo $p$, which leads to a reduction from $\MOD{}\clique{}$
to $\MOD{}\indsubsprob(\Phi)$. Lastly, we show that $\MOD{}\indsubsprob(\Phi)$
has tight bounds under rETH. For this, we rely on the fact that
the problem $\UniKClique{k}$ has tight bounds
under rETH which is shown in \cref{sec:appendix:mod}.

\section{Preliminaries}

For natural numbers $n$,
we write $\setn{n}$ for the set $\{1, \dots, n\}$.
For natural numbers \(a\), \(b\) and \(m\), we write $a \equiv_m b$ as a shorthand for $a \equiv b \bmod m$.
Next, we write $\binom{n}{k}$ for the binomial coefficient.
If $A$ is a set then $\binom{A}{k}$ is the set of all subsets $B \subseteq A$ that have size $k$.
For a function $f \colon A \to B$ and a set $C \subseteq A$, we write $f(C)$ for the
set $\{f(c) : c \in C\}$. We say that a set $A$ is \emph{computable} if there exists
an algorithm that decides on input $x$ if $x \in A$. We say that a function $f \colon \nat \to \nat$
is \emph{unbounded} if $\lim_{n \to \infty} f(n) = \infty$. If $G$ is a graph and $h \colon \nat \to \nat$
is function, then $h(G)$ is defined as $h(|V(G)|)$.

\paragraph*{Graphs}

For the whole paper, we only consider {simple} graphs, meaning undirected graphs that have
neither weights, loops, nor parallel edges.
We denote with $\graphs{}$ the set of all (simple) graphs, and use \(\graphs{n}\) for the
set of all (simple) graphs with vertex set \(\setn{n}\). Further, we order
the set $\graphs{}$ lexicographically. This way, we construct the set \(\graphs{}^*\) that
constant of the first graph, in each isomorphism class of graphs, that appears $\graphs{}$. We use \(\graphs{n}^*\) to denote
the set of graphs in \(\graphs{}^*\) with vertex set \(\setn{n}\).

For a graph \(G\), we use \(V(G)\) to denote the set of vertices and we use \(E(G)\) to denote the set of edges.
For a subset of edges $A \subseteq E(G)$, we write $\ess{G}{A}$ for the graph
with vertex set $V(\ess{G}{A}) \coloneqq V(G)$ and edge set $E(\ess{G}{A}) \coloneqq A$.
We say that such a graph is an \emph{edge-subgraph} of $G$. For a set
of vertices $S \subseteq V(G)$,
we write $G[S]$ for the graph induced by $S$. For a graph $H$ and a graph $G$,
we write $\subs{H}{G}$ for the set of
all subgraphs of $G$ that are isomorphic to $H$.

We write $\IS_{k}$ for the independent set that has $k$ vertices.
Further, we use \(K_n\) to denote the complete graph with vertex set $\setn{n}$,
and we write \(K_{n, m}\) for the complete bipartite graph with vertex set
\(\setn{n} \times \setn{m}\).

\paragraph*{Group Theory and Morphisms between Graphs}\label{sec:group:theory}

For a finite set \(X\) of size \(n\), the set of all bijections from $X$ to $X$ forms
together with function composition a group that is usually called the symmetric group \(\sym{n}\).
We use the notation \(\sym{X}\) to emphasize the set \(X\).

We say that a group $\Gamma$ is a \emph{permutation group} of $X$ if \(\Gamma\) is a subgroup of the
symmetric group \(\sym{X}\) for a finite set \(X\).
This group is \emph{transitive} if every \(x \in X\) can be mapped to any other element \(y \in X\) via
a group element \(g \in \Gamma\).
Finally, we say that $\Gamma$ is a \emph{$p$-group} if its order is a power of $p$
(that is, the number of elements that appear in \(\Gamma\) is a power of \(p\)).

A graph \emph{homomorphism} $h \colon V(H) \to V(G)$ from $H$ to $G$ is a function between the
vertex sets of two graphs $H$ and $G$ that preserves adjacencies (but not necessarily
non-adjacencies). This means that \(h\) maps the vertices of every edge $\{u, v\} \in E(H)$
to an edge $\{h(u), h(v)\} \in E(G)$.
We denote the set of all homomorphisms from $H$ to $G$ by $\homs{H}{G}$.
We say that a graph homomorphism $h \colon V(H) \to V(G)$ is \emph{surjective}
if all vertices and edges of $G$ are in the image of $h$. The \emph{spasm} of $G$
is the set of all graphs $G$ such that there is a surjective homomorphism
from $H$ to $G$.

A {homomorphism} $h : V(H) \to V(G)$ is called an
\emph{isomorphism} if $h$ is a bijection on the vertex sets and $h^{-1} : V(H) \to V(G)$
is a homomorphism, too. This definition is equivalent to saying that
$\{u, v\} \in E(H)$ if and only if $\{h(u), h(v)\} \in E(G)$.
We call two graphs \(G\) and \(H\) \emph{isomorphic},
if there is an {isomorphism} between them, which is denoted by \(G \cong H\).
An \emph{automorphism} of $G$ is an {isomorphism} from $G$ to itself.
The set of all \emph{automorphism} of $G$ is denoted by $\aut(G)$.

The set of automorphism $\auts{G}$ together with composition
forms a group. Observe that the automorphism
group of the clique \(\auts{K_n}\) is just the symmetric group \(\sym{n}\).

\subparagraph*{Parameterized Problems and Parameterized Complexity}
A \emph{parameterized (counting) problem} is defined via a function $P \colon \Sigma^\ast \to \nat$ and
a computable parameterization $\kappa \colon \Sigma^\ast \to \nat$. To goal is to find an
algorithm that computes in input $x \in \Sigma^\ast$ to value $P(x)$.
We say that a parameterized problem \((P, \kappa)\) is \emph{fixed-parameter
tractable} (FPT) if there is a computable function $f$ and a deterministic algorithm
\(\mathbb{A}\) such that the algorithm \(\mathbb{A}\) computes the value $P(x)$
in time  $f(\kappa(x)) |x|^{c}$, for a fixed constant $c$, for all $x \in \Sigma^\ast$.

A parameterized Turing reduction from $(P, \kappa)$ to $(P', \kappa')$ is a deterministic
FPT algorithm with oracle access to $P'$ that computes $P(x)$ such that there is a
computable function $g$ with the property that $\kappa'(y) \leq g(\kappa(y))$ holds for
each oracle access to $P'$. We write $A \fpt B$ to denote that there is a parameterized Turing
reduction from $A$ to $B$.

The counting version $\NUM{}P$ of a decision problem $P$ is the problem to compute the
number of valid solutions for a given input $x$.
One very important example of such a counting problem is the \NUM{}\clique problem.
The input is a graph $G$ and a natural number $k$ and
the output is the number of induced
subgraphs of $G$ of size $k$ which form a $k$-clique.
We parameterize \(\NUM{}\clique\) by $\kappa(G, k) \coloneqq k$.
Much research has been focused on understanding the complexity of \(\NUM{}\clique\) and it is
widely believed that no FPT algorithm exists \cite{path_cycle, CHEN2005216}.
We say therefore that a problem $(P',\kappa')$ is \w-hard if there is parameterized Turing reduction
from $(\NUM{}\clique, \kappa)$ to $(P', \kappa')$, which intuitively means that $(P',\kappa')$ is
at least as hard as \(\NUM{}\clique\). Thus, \w-hardness rules out FPT algorithms of $(P',\kappa')$
based on said beliefs.

In the following, we further define parameterized counting problems.
For a recursively enumerable class of graphs $\mathcal{H}$,
a problem instance of $\NUM{}\homsprob(\mathcal{H})$
is a graph $H \in \mathcal{H}$ and a graph
$G \in \mathcal{G}$. The output is the number of homomorphisms from $H$ to $G$ (that
is, $\NUM{}\homs{H}{G}$); the parameterization is given by $\kappa(H, G) \coloneqq |V(H)|$.
In \cite{DALMAU2004315}, Dalmau and Jonsson proved that
$\NUM{}\homsprob(\mathcal{H})$ is \w-hard if and only if the treewidth of
the set $\mathcal{H}$ is unbounded (that is, there is no constant $c$ such that the
treewidth of all elements in $\mathcal{H}$ is below $c$).

Next, we define the colored version of $\NUM{}\homsprob(\mathcal{H})$. Here, we get
two graphs $G$ and $H$, with a coloring $c \colon V(G)\to V(H)$. Here, $c$ is simply
a homomorphism from $G$ to $H$. A~color-prescribed homomorphism $h$ from $H$ to \(G\)
is a homomorphism from $H$ to $G$ such that $c(h(v)) = v$ for
all $v \in V(H)$.

We write \(\cphoms{H}{G}\) for the set of all color-prescribed homomorphism from $H$ to
a graph \(G\) that is \(H\)-colored via \(c\). Observe that \(\cphoms{H}{G}\) might
change if we use a different coloring.
For a recursively enumerable class of graphs $\mathcal{H}$,
the problem $\NUM{}\cphomsprob(\mathcal{H})$ gets as input a
graph $H \in \mathcal{H}$ and a graph $G$ that is \(H\)-colored via \(c\), the task is
to compute the value $\NUM{}\cphoms{H}{G}$.
We parameterize $\NUM{}\cphomsprob(\mathcal{H})$ by $\kappa(H, G) \coloneqq |V(H)|$.

\paragraph*{Graph Parameters and the $\NUM{}\indsubsprob(\Phi)$ problem}

\emph{Graph parameters} (also known as graph invariants) are
functions $\Phi$ that assign a value to a given graph
and are invariant under graph isomorphisms that is $\Phi(G) = \Phi(H)$ whenever $G \cong H$. In this paper,
we consider only graph parameters that map graphs to rational numbers. Further, we exclusively consider
computable graph parameters. A graph parameter $\Phi$ is considered \emph{computable} if there is an algorithm
that gets as input a graph $G$ and computes the value $\Phi(G)$. We say that $\Phi$ is \emph{nontrivial on $n$}
if $\Phi$ is not constant on the set of $n$-vertex graphs, and we say that $\Phi$ is \emph{nontrivial} if it
is nontrivial on infinitely many values $n$. \emph{Graph properties} are graph parameters with codomain $\{0, 1\}$.

We sometimes write $\Phi \colon \graphs{k} \to \{0, \dots, \codo{\Phi}{k}\}$ where
$\codo{\Phi}{k} \colon \nat \to \nat$ is some computable function. This notation means
that the graph parameter $\Phi$ has codomain $\{0, \dots, \codo{\Phi}{k}\}$ when restricted
to $k$-vertex graphs. For instance, if $\Phi(G) = \#E(G)$, then we could
use $\codo{\Phi}{k} = k^2$.

Next, we say that a graph parameter
$\Phi \colon \graphs{} \to \Q$ is \emph{edge-monotone} if we have $\Phi(G) \leq \Phi(H)$ whenever $H$ is an edge-subgraph
of $G$. Note that the definition of edge-monotone graph parameter coincides with the definition of edge-monotonicity
for graph properties given in~\cite{hereditary, edge_monotone}, meaning that we can view each edge-monotone graph property
$\Phi \colon \graphs{} \to \{0, 1\}$ as an edge-monotone graph parameter. Observe that if $\Phi$ is edge-monotone
and nontrivial on $k$, then $\Phi(\IS_k) > \Phi(K_k)$.
For a fixed graph parameter $\Phi \colon \graphs{} \to \Q$, we define the following parameterized problem.

\defproblemp{$\NUM{}\indsubsprob(\Phi)$}{A natural number $k$ and a graph $G$.}
{The value
\[\NUM{}\indsubs{(\Phi, k)}{G} \coloneqq \sum_{A \in \binom{V(G)}{k}} \Phi(G[A]).\]}
{The problem
is parameterized by the size of the induced subgraphs $k$.}

Observe that for a graph property $\Phi$ the definition
of $\NUM{}\indsubsprob(\Phi)$ considers with the definition of the counting problem $\NUM{}\indsubsprob(\Phi)$ given in \cite{topo, alge, edge_monotone}.
As with $\NUM{}\homsprob$, we also need a colored variant of \(\NUM{}\indsubsprob(\Phi)\).
Imagine that the vertices of a given input graph $G$ are partitioned into $k$ color classes and
that we would like to compute the sum $\Phi(G[A])$ that goes over all $k$-vertex induced subgraphs $G[A]$ of $G$
such that $A$ contains exactly one vertex from each color class (this is also known as \emph{colorful}).
Observe that the sum depends on the coloring of $G$, and
remember that we can describe a coloring using a homomorphism $c \colon V(G)\to V(H)$. Now, a graph parameter
$\Phi \colon \graphs{} \to \Q$ and for a recursively enumerable class of graphs $\mathcal{H}$, we define the following problem.

\defproblemp{$\NUM{}\cpindsubsprob( \Phi,\mathcal{H})$}{A a graph $H \in \mathcal{H}$ and a $H$-graph colored $G \in \graphs{}$.}
{The value
\[\NUM{}\cpindsubs{(\Phi, H)}{G} \coloneqq \sum_{\substack{A \in \binom{V(G)}{|V(H)|} \\
G[A] \text{ is colorful} }} \Phi(G[A]).\]}
{The problem
is parameterized by $\kappa(H, G) \coloneqq |V(H)|$.}

We write $\NUM{}\cpindsubsprob(\Phi)$ instead of $\NUM{}\cpindsubsprob(\Phi, \graphs{})$, meaning that
we can also all graphs for coloring. We use $\NUM{}\cpindsubs{(\Phi, \graphs{k})}{\star}$ to denote
the problem that get as input a $k$-vertex graph $H$ and a graph $G$ and is asked to compute
$\NUM{}\cpindsubs{(\Phi, H)}{G}$.

\paragraph*{Basic Properties of Graph Parameters, $\NUM{}\indsubsprob(\Phi)$, and \(\NUM{}\cpindsubsprob(\Phi,\mathcal{H})\)  }

Next, we collect basic properties of graph parameters,
$\NUM{}\indsubsprob(\Phi)$, and \(\NUM{}\cpindsubsprob(\Phi,\mathcal{H})\). We start by constructing a graph property
given a graph parameter.

\begin{definition}
    For all graph parameters $\Phi \colon \graphs{} \to \Q$ and numbers $k$, we define the following objects.
    \begin{itemize}
        \item The set $\image{\Phi}{k}$ is the image of $\Phi$ on the set $\graphs{k}$ (this is the set of graphs with $k$ vertices).
        Further, we write $\image{\Phi}{}$ for the image of $\Phi$.
        \item For all $b \in \Q$, we define $\Phi^b(G) = 1$ if $\Phi(G) = b$ and $\Phi^b(G) = 0$ otherwise. \qedhere
    \end{itemize}
\end{definition}

Observe that $\image{\Phi}{k}$ is always a finite set. It is not hard to see that we can write each $\Phi$ as a linear
combination of $\Phi^b$ where $b \in \image{\Phi}{k}$. This allows us to also view
$\NUM{}\indsubsprob(\Phi)$ and \(\NUM{}\cpindsubsprob(\Phi,\mathcal{H})\) as a linear combination of
$\NUM{}\indsubsprob(\Phi^b)$ and \(\NUM{}\cpindsubsprob(\Phi^b,\mathcal{H})\) respectively. We use this to prove
statements about $\NUM{}\indsubsprob(\Phi)$ and \(\NUM{}\cpindsubsprob(\Phi,\mathcal{H})\) by first looking at
graph properties.

\begin{remark}\label{lem:param:to:property}
    For all graph parameters $\Phi \colon \graphs{} \to \Q$, non negative integers $k \in \nat$, graphs $G$ and $H$, values $b \in \Q$,
    sets $A \supseteq \image{\Phi}{\#V(G)}$, $B \supseteq \image{\Phi}{k}$, and $C \supseteq \image{\Phi}{\#V(H)}$, we obtain
    \begin{align}
        \Phi(G) &= \sum_{b \in A} b \cdot \Phi^b(G) \label{eq:phi:param} \\
        \NUM{}\indsubs{(\Phi, k)}{G} &= \sum_{b \in B} b \cdot  \NUM{}\indsubs{(\Phi^b, k)}{G}  \label{eq:indsub:param} \\
        \NUM{}\cpindsubs{(\Phi, H)}{G} &= \sum_{b \in C} b \cdot  \NUM{}\cpindsubs{(\Phi^b, H)}{G}  \label{eq:cpindsub:param}
    \end{align}
    For the last equality, we assume that $G$ is $H$-colored.
\end{remark}

Lastly, we show that we can transform each edge-monotone graph parameter $\Phi \colon \graphs{} \to D$ with finite codomain $D \subseteq \Q$
into an edge-monotone graph parameter $\Phi' \colon \graphs{} \to \{0, \dots, c\}$ such that we can compute $\NUM{}\indsubsprob(\Phi)$ using
$\NUM{}\indsubsprob(\Phi')$. This allows us to use $\{0, \dots, c\}$ as our codomain for most theorems.

\begin{lemma}\label{lem:finite:codomain}
    For graph parameter $\Phi \colon \graphs{} \to D$ with finite codomain $D \subseteq \Q$,
    there is a constant $c$ and a graph parameter $\Phi' \colon \graphs{} \to \{0, \dots, c\}$ such that
    \[\NUM{}\indsubs{(\Phi, k)}{G} = f(k, |V(G)|, \NUM{}\indsubs{(\Phi', k)}{G}) \quad \text{for some function $f(k,n,m)$.} \]
    The function $f(k, |V(G)|, \NUM{}\indsubs{(\Phi', k)}{G})$ can be computed in time $O(|V(G)|)$.
    Further, if $\Phi$ is edge-monotone then $\Phi'$ can be chosen to be edge-monotone.
\end{lemma}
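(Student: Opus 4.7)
The plan is to take $\Phi'$ to be an affine rescaling of $\Phi$ chosen so that the codomain is a bounded set of nonnegative integers. Because the sum $\NUM{}\indsubs{(\Phi,k)}{G}$ is linear in $\Phi$, an affine transformation on $\Phi$ induces an affine transformation on the sum, and the offset we pay can be read off from $|V(G)|$ and $k$ via a binomial coefficient. The key is to make the transformation integer-valued while keeping it monotone, so both the codomain requirement and the edge-monotonicity requirement are met simultaneously.

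Concretely, enumerate the finite codomain $D = \{d_1, d_2, \dots, d_t\}$ with $d_1 < d_2 < \cdots < d_t$, and let $q$ be a positive common denominator of the $d_i$ (which exists because $D \subseteq \Q$ is finite). I then define
\[
    \Phi'(G) \coloneqq q \cdot \Phi(G) - q \cdot d_1 .
\]
Since $d_1$ is the minimum of $D$ and $q$ is a positive integer clearing denominators, $\Phi'(G)$ takes values in the set $\{0,\, q(d_2 - d_1),\, \dots,\, q(d_t - d_1)\} \subseteq \Z_{\geq 0}$, and in particular $\Phi'(G) \in \{0, 1, \dots, c\}$ for the constant $c \coloneqq q(d_t - d_1)$, which depends only on $\Phi$.

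The key identity then follows by linearity of the defining sum of $\NUM{}\indsub$:
\[
    \NUM{}\indsubs{(\Phi',k)}{G}
    \;=\; q \cdot \NUM{}\indsubs{(\Phi,k)}{G} - q \cdot d_1 \cdot \binom{|V(G)|}{k},
\]
which rearranges to
\[
    \NUM{}\indsubs{(\Phi,k)}{G}
    \;=\; \frac{1}{q}\left(\NUM{}\indsubs{(\Phi',k)}{G} + q \cdot d_1 \cdot \binom{|V(G)|}{k}\right),
\]
so we may set $f(k,n,m) \coloneqq (m + q d_1 \binom{n}{k})/q$. Evaluating $f$ requires computing a single binomial coefficient and performing $O(1)$ rational arithmetic operations; this is easily done in $O(|V(G)|)$ time in the standard arithmetic model, and the constants $q$ and $d_1$ are hard-wired into $f$ (they do not depend on the input).

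Finally, edge-monotonicity of $\Phi'$ is inherited from that of $\Phi$: if $H$ is an edge-subgraph of $G$ then $\Phi(G) \leq \Phi(H)$, and multiplying by the positive integer $q$ and subtracting the constant $q d_1$ preserves this inequality, so $\Phi'(G) \leq \Phi'(H)$. There is no real obstacle in this argument; the only point that deserves a word of care is that we are allowed to choose $\Phi'$ depending on $\Phi$ (so the denominator $q$ and the minimum $d_1$ are fixed once and for all before the reduction begins), which is exactly what the statement permits since $c$ is a constant depending on $\Phi$.
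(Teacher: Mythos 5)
Your proposal is correct and follows essentially the same route as the paper: both define $\Phi'$ as a positive affine rescaling $a\Phi + b$ of $\Phi$ (clearing denominators and shifting into $\{0,\dots,c\}$), use linearity of the sum to get $\NUM{}\indsubs{(\Phi',k)}{G} = a\cdot\NUM{}\indsubs{(\Phi,k)}{G} + b\binom{n}{k}$, and note that positivity of the scaling factor preserves edge-monotonicity. Your choice of the shift as $-q d_1$ is just a more explicit instance of the paper's generic constant $s$.
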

\begin{proof}
        Let $n \coloneqq |V(G)|$. Since $D = \{d_1, \dots d_m\}$ is finite, there exists a constant $d > 0$ such that $d \cdot d_i \in \Z$
        for all $d_i \in D$. Further, since the set $\{d \cdot d_i : d_i \in \Q\}$ is finite, there is a constant
        $s$ and $c \in \nat$ such that $s + d \cdot d_i \in \{0, \dots, c\}$. We define the graph parameter
        $\Phi'(G) \coloneq s + d\cdot\Phi(G)$. Observe that $\Phi'$ is still computable and that we have
        $\NUM{}\indsubs{(\Phi', k)}{G} = s \cdot \binom{n}{k} + d \cdot \NUM{}\indsubs{(\Phi, k)}{G}$. Hence, by setting
        \[f(k, n, m) \coloneqq \frac{m - s \cdot \binom{n}{k}}{d},\]
        we obtain $\NUM{}\indsubs{(\Phi, k)}{G} = f(k, n, \NUM{}\indsubs{(\Phi', k)}{G})$.
        Next, note that we can assume $k \leq n$ and $|\NUM{}\indsubs{(\Phi, k)}{G}| \leq 2^n \cdot \max_{d_i \in D} |d_i|$. Thus,
        the function $f(k, n, |\NUM{}\indsubs{(\Phi, k)}{G}|)$ can be computed in time $O(n)$.
        Lastly, if $\Phi$ is edge-monotone, then $\Phi'$ is also edge-monotone since $d > 0$.
\end{proof}

\section{Reduction from $\NUM{}\clique$ to $\NUM{}\indsubsprob(\Phi)$}\label{sec:reduction}

In this section, we show a reduction from $\NUM{}\clique$ to $\NUM{}\indsubsprob(\Phi)$ which is an important step
for showing \cref{theo:edge_mono}. We use the following chain of reductions which is a modification of \cite{alge}.
\begin{align}\label{reduction:chain}
    \NUM{}\clique{}
    &\overset{\text{\cref{lem:tight:biclique:construction}}}{\fpt}
    \NUM{}\cphomsprob(\mathcal{H})\\\nonumber
    &\overset{\text{\cite[Lemmas 7 and 8]{alge}}}{\fpt}
    \NUM{}\cpindsubsprob(\Phi,\mathcal{H})
    \overset{\text{\cite[Lemma 10]{alge}}}{\fpt}
    \NUM{}\indsubsprob(\Phi);
\end{align}

This chain starts with the $\w$-hard  problem $\NUM{}\clique{}$. Further, it is well-known that
this problem has tight lower bounds under ETH. The proof of the following statement can be found in \cite[Lemma B.2]{edge_monotone}.

\begin{lemmaq}[{Modification of 14.21 in \cite{param_algo}}]\label{theo:k-clique:ETH}
    Assuming ETH, there is a constant $\alpha > 0$ such that for $k \geq 3$, no
    algorithm (that reads the whole input) solves $k$-$\clique$ on graph $G$ in time $O(|V(G)|^{\alpha k})$.
\end{lemmaq}

The first step of (\ref{reduction:chain}) is a reduction from $\NUM{}\clique$ to $\NUM{}\cphomsprob(\mathcal{H})$. To that end, we use the
following statement that allows us to count
$k$-cliques by using an oracle for $\NUM{}\cphoms{F}{\star}$ whenever $F$ contains $K_{k, k}$ as a subgraph.

\begin{lemmaq}[Modification of {\cite[Lemma 11]{alge}}]\label{lem:tight:biclique:construction}
    There is an algorithm that
    given a positive integer $\ell > 1$, a graph $F$ (that contains $K_{\ell, \ell}$ as assuming
    subgraph), and a graph $G$; computes an $F$-colored graph $G'$ with $2 \ell |V(G)|
    + (|V(F)| - 2 \ell)$ vertices. Further, the number of cliques of size $\ell$ in $G$
    equals $\NUM{}\cphoms{F}{G'}$.
    The running time of the algorithm is $O(|V(F)|^{|V(F)| + 2} + |V(F)|^2 |V(G)|^2 )$.
\end{lemmaq}

\cref{lem:tight:biclique:construction} allows us to focus on a reduction from
$\NUM{}\cphomsprob(\mathcal{H})$ to $\NUM{}\indsubsprob(\Phi)$. To that end we use the following
 key lemma from \cite{alge} that we adopt for graph parameters.

\begin{lemma}[Modification of {\cite[Lemma 18]{alge}}]\label{lem:chi:cpindub}
    Let $H$ denote a graph, let $\Phi \colon \graphs{} \to \Q$ denote a graph parameter,
    and let $G$ denote an $H$-colored graph.
    Then, we have
    \begin{align*}
        \NUM{}\cpindsubs{(\Phi, H)}{G}
        &= \sum_{S \subseteq E(H)} \Phi(\ess{H}{S}) \sum_{J \subseteq E(H) \setminus S}
        (-1)^{\NUM{}J} \cdot \NUM{}\cphoms{\ess{H}{S \cup J}}{G} \\
        &= \sum_{A \subseteq E(H)}  (-1)^{\#A} \cdot \ae{\Phi}{\ess{H}{A}} \cdot \NUM{}\cphoms{\ess{H}{A}}{G}.
    \end{align*}
    Further, the absolute values of $\ae{\Phi}{H}$ and of the coefficient of
    $\NUM{}\cphoms{H}{G}$ are equal.
\end{lemma}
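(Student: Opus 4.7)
The plan is to decompose $\NUM{}\cpindsubs{(\Phi, H)}{G}$ according to the edge set of each colorful induced subgraph, apply Möbius inversion over the Boolean lattice of $E(H)$, and then rearrange the double sum so that the inner coefficient becomes the alternating enumerator.

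First I would observe that since $G$ is $H$-colored via a homomorphism $c \colon V(G) \to V(H)$, every colorful vertex subset $A$ of $G$ can be written uniquely as $A = \{h(v) : v \in V(H)\}$, where $h \colon V(H) \to V(G)$ is the section of $c$ on $A$ (so $c \circ h = \id_{V(H)}$). Because $c$ is a homomorphism, an edge $\{h(u), h(v)\}$ of $G[A]$ forces $\{u, v\} = \{c(h(u)), c(h(v))\} \in E(H)$, so the edges of $G[A]$ transport via $h^{-1}$ to a well-defined subset $S \subseteq E(H)$, and $G[A] \cong \ess{H}{S}$. Letting $f(S)$ denote the number of colorful $A$ whose transported edge set is exactly $S$, and using that $\Phi$ is isomorphism-invariant, this gives
\[\NUM{}\cpindsubs{(\Phi, H)}{G} = \sum_{S \subseteq E(H)} \Phi(\ess{H}{S}) \cdot f(S).\]

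Next I would express $f(S)$ via Möbius inversion. The key point is that a color-prescribed homomorphism from $\ess{H}{T}$ to $G$ is exactly a choice of colorful $A$ (with its section $h$) such that every edge of $T$ is realized in $G[A]$, with no condition on the other pairs; hence
\[\NUM{}\cphoms{\ess{H}{T}}{G} = \sum_{S \subseteq E(H),\, S \supseteq T} f(S).\]
Möbius inversion on the Boolean lattice then yields
\[f(S) = \sum_{J \subseteq E(H) \setminus S} (-1)^{|J|}\, \NUM{}\cphoms{\ess{H}{S \cup J}}{G},\]
and plugging this into the previous display produces the first claimed equality. For the second equality I would substitute $A \coloneqq S \cup J$ to swap the order of summation, so each $A \subseteq E(H)$ receives contributions from the pairs with $S \subseteq A$ and $J = A \setminus S$; using $(-1)^{|A \setminus S|} = (-1)^{|A|} (-1)^{|S|}$, the inner sum becomes
\[(-1)^{|A|} \sum_{S \subseteq A} (-1)^{|S|} \Phi(\ess{H}{S}) = (-1)^{|A|}\, \ae{\Phi}{\ess{H}{A}},\]
since $E(\ess{H}{A}) = A$ matches the defining sum of $\ae{\Phi}{\cdot}$. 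The claim on absolute values then falls out by inspecting the $A = E(H)$ term, whose coefficient of $\NUM{}\cphoms{H}{G}$ is $(-1)^{|E(H)|} \ae{\Phi}{H}$.

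The main obstacle I anticipate is the bookkeeping in the decomposition step: one has to check that colorfulness together with $c$ being a homomorphism really does force the edges of $G[A]$ to lie in the image of $E(H)$, so that the transported set $S$ is well-defined, and that the correspondence between $\NUM{}\cphoms{\ess{H}{T}}{G}$ and $f$ runs over \emph{supersets} of $T$ (not subsets). Once this is set up cleanly, the remainder is a routine Möbius-inversion manipulation.
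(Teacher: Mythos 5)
Your proof is correct, and it takes a somewhat different route from the paper's. The paper does not reprove the combinatorial core: it decomposes the parameter as $\Phi = \sum_{b} b\cdot \Phi^b$ into indicator graph properties (via \cref{lem:param:to:property}), invokes the already-established identity of \cite[Lemma 18]{alge} for each property $\Phi^b$, and then recombines and collects terms to obtain both displayed equalities. You instead give a self-contained argument for arbitrary graph parameters: you set up the bijection between colorful vertex sets $A$ and sections $h$ of the coloring $c$, verify (correctly, using that $c$ is a homomorphism) that the edge set of $G[A]$ transports to a well-defined $S \subseteq E(H)$ with $G[A] \cong \ess{H}{S}$, establish $\NUM{}\cphoms{\ess{H}{T}}{G} = \sum_{S \supseteq T} f(S)$, and apply M\"obius inversion. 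This is essentially a re-derivation of the cited lemma, generalized on the spot to parameters; the final rearrangement (substituting $A = S \cup J$ and recognizing $(-1)^{\#A}\,\ae{\Phi}{\ess{H}{A}}$ as the collected coefficient) coincides with the paper's second step. What the paper's route buys is brevity and reuse of the literature; what yours buys is transparency — the reader sees exactly why the identity holds without chasing the reference — at the cost of redoing the bookkeeping that the citation would otherwise absorb. Both are valid, and your handling of the delicate points (injectivity of $h$ forced by $c \circ h = \id$, the superset rather than subset direction in the cp-hom count, and reading off the $A = E(H)$ coefficient for the final claim) is sound.
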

\begin{proof}
    We define $k \coloneq \#V(H)$ and use \cref{lem:param:to:property} to rewrite
    the equation from \cite[Lemma 18]{alge} that was derived for graph properties. We obtain
    \begin{align*}
        \NUM{}\cpindsubs{(\Phi, H)}{G} &\stackrel{(\ref{eq:cpindsub:param})}{=} \sum_{b \in \image{\Phi}{k}} b \cdot \NUM{}\cpindsubs{(\Phi^b, H)}{G} \\
        &\overset{(\text{Lemma 18})}{=} \sum_{b \in \image{\Phi}{k}} b \sum_{S \subseteq E(H)} \Phi^b(\ess{H}{S}) \sum_{J \subseteq E(H) \setminus S}
        (-1)^{\NUM{}J} \cdot \NUM{}\cphoms{\ess{H}{S \cup J}}{G} \\
        &= \sum_{S \subseteq E(H)}  \sum_{J \subseteq E(H) \setminus S}
        (-1)^{\NUM{}J} \cdot \NUM{}\cphoms{\ess{H}{S \cup J}}{G} \sum_{b \in \image{\Phi}{k}} b \cdot \Phi^b(\ess{H}{S}) \\
        &\stackrel{(\ref{eq:phi:param})}{=} \sum_{S \subseteq E(H)}  \sum_{J \subseteq E(H) \setminus S}
        (-1)^{\NUM{}J} \cdot \NUM{}\cphoms{\ess{H}{S \cup J}}{G} \cdot \Phi(\ess{H}{S})               \\
        &= \sum_{S \subseteq E(H)} \Phi(\ess{H}{S}) \sum_{J \subseteq E(H) \setminus S}
        (-1)^{\NUM{}J} \cdot \NUM{}\cphoms{\ess{H}{S \cup J}}{G}
    \end{align*}
    Thus $\NUM{}\cpindsubs{(\Phi, H)}{G}$ is equal to a linear combination of $\NUM{}\cphoms{\ess{H}{A}}{G}$ for $A \subseteq E(H)$.
    By collecting terms, we obtain
    \[\NUM{}\cpindsubs{(\Phi, H)}{G}
    = \sum_{A \subseteq E(H)}   \left(\sum_{S \subseteq A}\Phi(\ess{H}{S}) (-1)^{\#A - \#S} \right) \cdot \NUM{}\cphoms{\ess{H}{A}}{G}.  \]
    By using the definition of the alternating enumerator, we can further rewrite this into
    \begin{align*}
        \NUM{}\cpindsubs{(\Phi, H)}{G} = \sum_{A \subseteq E(H)}  (-1)^{\#A} \cdot \ae{\Phi}{\ess{H}{A}} \cdot \NUM{}\cphoms{\ess{H}{A}}{G},
    \end{align*}
    which shows that the absolute values of $\ae{\Phi}{H}$ and of the coefficient of
    $\NUM{}\cphoms{H}{G}$ are equal.
\end{proof}

In particular, the second part of \cref{lem:chi:cpindub} yields
that the term $\NUM{}\cphoms{H}{G}$ appears in the of \cref{lem:chi:cpindub} if and only if $\ae{\Phi}{H} \neq 0$.
Observe that \cref{lem:chi:cpindub} itself does not suffice to obtain the claimed reduction
\(\NUM{}\cphomsprob(\mathcal{H}) \fpt \NUM{}\cpindsubsprob(\Phi)\):
the oracle for $\NUM{}\cpindsubsprob(\Phi)$ computes only a sum in which $\NUM{}\cphoms{H}{G}$ occurs
as some term---we still need to extract the value $\NUM{}\cphoms{H}{G}$ out of the result of
the oracle.
Fortunately for us, \cite[Lemma~7]{alge} does exactly that by showing a generalization of
the \emph{Complexity Monotonicity} of~\cite{hom:basis}. This means that the computation
of a finite sum $\sum_{H \in \graphs{}} \alpha(H) \cdot \cphoms{H}{\star}$ is precisely as hard as the computation of the hardest term with
nonvanishing coefficient. This result is summarized in the following lemma.

\begin{lemma}[Modification of {\cite[Lemma A.7]{edge_monotone}}]\label{lem:cphom:to:indsub}\label{claim:hom:tw}
    Let $\Phi \colon \graphs{} \to \Q$ denote a graph parameter and
    let $H$ be a $k$-vertex graph with $\ae{\Phi}{H} \neq 0$. Any algorithm that
    computes for each $H$-colored \(G'\) the number $\NUM{}\cpindsubs{(\Phi, H)}{G'}$
    in time $O(f(k) \cdot |V(G')|^\beta)$ implies an algorithm that for each graph
    \(G\)
    \begin{itemize}
        \item computes the number $\NUM{}\cphoms{H}{G}$ in time $O(h(k)\cdot |V(G)|^{\beta+1})$
        for some computable function $h$,
        \item and our algorithm queries $\NUM{}\cpindsubs{(\Phi, H)}{G'}$ only on
     $H$-colored graphs $G'$ with $V(G') \leq f(k) \cdot |V(G)|$, for some computable function $f$.
    \end{itemize}
\end{lemma}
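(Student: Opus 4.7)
The plan is to apply Möbius inversion over the subset lattice of $E(H)$ in order to isolate $\NUM{}\cphoms{H}{G}$ inside the expansion provided by \cref{lem:chi:cpindub}. Applied to an arbitrary $H$-colored graph $G^{\ast}$, that lemma gives
\[
\NUM{}\cpindsubs{(\Phi, H)}{G^{\ast}} \;=\; \sum_{A \subseteq E(H)} (-1)^{\#A}\, \ae{\Phi}{\ess{H}{A}}\, \NUM{}\cphoms{\ess{H}{A}}{G^{\ast}},
\]
in which the coefficient of the ``full'' term $\NUM{}\cphoms{H}{G^{\ast}}$ (the $A = E(H)$ summand) equals $(-1)^{\#E(H)} \ae{\Phi}{H}$ and is nonzero by hypothesis. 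If I can exhibit a family of auxiliary $H$-colored graphs on which the individual counts $\NUM{}\cphoms{\ess{H}{A}}{\cdot}$ switch on and off in a lattice-compatible fashion, Möbius inversion will recover the unknown coefficient, and a single division by $(-1)^{\#E(H)} \ae{\Phi}{H}$ then yields $\NUM{}\cphoms{H}{G}$.

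First I would construct such a family as follows: for every $S \subseteq E(H)$, let $G_S$ be the $H$-colored graph on the same vertex set and with the same coloring $c$ as $G$, but keeping only those edges $\{u,v\} \in E(G)$ for which $\{c(u), c(v)\} \in S$. Any color-prescribed homomorphism $\sigma$ from $\ess{H}{A}$ to $G_S$ must send each edge $\{x,y\} \in A$ to an edge of $G_S$ whose endpoints have colors $x$ and $y$, which forces $\{x,y\} \in S$; conversely, once $A \subseteq S$, every edge of $G$ that such a $\sigma$ would use survives in $G_S$. Consequently, $\NUM{}\cphoms{\ess{H}{A}}{G_S} = \NUM{}\cphoms{\ess{H}{A}}{G}$ when $A \subseteq S$ and equals $0$ otherwise. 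Substituting $G_S$ for $G^{\ast}$ in the expansion above and writing $g(A) \coloneqq (-1)^{\#A}\, \ae{\Phi}{\ess{H}{A}}\, \NUM{}\cphoms{\ess{H}{A}}{G}$, I obtain the triangular identity $\NUM{}\cpindsubs{(\Phi, H)}{G_S} = \sum_{A \subseteq S} g(A)$.

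Standard Möbius inversion on the Boolean lattice then yields
\[
g(E(H)) \;=\; \sum_{A \subseteq E(H)} (-1)^{\#E(H) - \#A}\, \NUM{}\cpindsubs{(\Phi, H)}{G_A},
\]
and since $g(E(H)) = (-1)^{\#E(H)}\, \ae{\Phi}{H}\, \NUM{}\cphoms{H}{G}$ with $\ae{\Phi}{H} \neq 0$, one division recovers $\NUM{}\cphoms{H}{G}$. For the running time, the reduction performs $2^{\#E(H)} \leq 2^{\binom{k}{2}}$ oracle calls, each on an $H$-colored graph with exactly $|V(G)|$ vertices, contributing $O\bigl(2^{\binom{k}{2}}\, f(k)\, |V(G)|^\beta\bigr)$; one pass to bucket the edges of $G$ by their color-pair and then assembling each $G_S$ and aggregating the final sum adds $O\bigl(2^{\binom{k}{2}}\, |V(G)|^2\bigr)$, which is absorbed into $|V(G)|^{\beta+1}$. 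Setting $h(k) \coloneqq 2^{\binom{k}{2}}(f(k)+1)$ gives the claimed running time, and $|V(G_S)| = |V(G)|$ gives the size bound (with the function in the size bound equal to the constant $1$). The only correctness point that requires care is the key claim of the second paragraph: removing all ``wrong color-pair'' edges really does kill exactly the contributions from $A \not\subseteq S$, while leaving those from $A \subseteq S$ intact. This hinges crucially on the color-prescribed setting, where each edge's color-pair is rigidly determined by $c$; in the uncolored setting, one would have to work harder to separate contributions coming from isomorphic edge-subgraphs (which is the reason techniques such as those of \cite{hom:basis} are needed there). Once this rigidity is checked, the rest is a routine Möbius computation.
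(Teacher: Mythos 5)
Your proof is correct, but it takes a genuinely more self-contained route than the paper. The paper's own proof consists of applying \cref{lem:chi:cpindub} to expand $\NUM{}\cpindsubs{(\Phi,H)}{\star}$ as a linear combination of $\NUM{}\cphoms{\ess{H}{A}}{\star}$-terms and then invoking \cite[Lemma~7]{alge} (the color-prescribed complexity-monotonicity statement) as a black box to extract the term $\NUM{}\cphoms{H}{G}$, inheriting that lemma's bounds of $O(f(k)\cdot|V(G)|)$ oracle calls on graphs of size at most $f(k)\cdot|V(G)|$. You instead reprove the extraction step from scratch: your family $G_S$ (delete every edge of $G$ whose color-pair lies outside $S$) together with the observation that $\NUM{}\cphoms{\ess{H}{A}}{G_S}$ equals $\NUM{}\cphoms{\ess{H}{A}}{G}$ for $A\subseteq S$ and vanishes otherwise is exactly the rigidity phenomenon that makes the color-prescribed setting tractable, and your Möbius inversion over the Boolean lattice of $E(H)$ is essentially the content of the cited lemma specialized to this situation. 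What your version buys is transparency and a sharper query-size bound ($2^{\#E(H)}$ queries, each on a graph with exactly $|V(G)|$ vertices, so the size-blowup function is the constant $1$); the cost is a $2^{\binom{k}{2}}$ factor in $h(k)$, which is harmless for every application in the paper. Two small points of hygiene: absorbing the $O(2^{\binom{k}{2}}|V(G)|^2)$ construction cost into $O(h(k)|V(G)|^{\beta+1})$ tacitly assumes $\beta\ge 1$, an assumption the paper also makes (without loss of generality, since the oracle algorithm reads its whole input); and the final division requires knowing $\ae{\Phi}{H}$, which is fine because $\Phi$ is computable and this value depends only on $k$.
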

\begin{proof}
    Write \(k \coloneqq |V(H)|\). Suppose that we are given a graph \(G\) that is $H$-colored via \(c\).
    We wish to compute the number \(\NUM{}\cphoms{H}{G}\).
    \begin{enumerate}[(1)]
        \item First, we use \cref{lem:chi:cpindub} to write $\NUM{}\cpindsubs{(\Phi, H)}{\star}$ as
        \[\sum_{A \subseteq E(H)} (-1)^{\#A} \cdot \ae{\Phi}{\ess{H}{A}} \cdot \NUM{}\cphoms{\ess{H}{A}}{\star},\]
        where
        the term $(-1)^{\#E(H)} \cdot \ae{\Phi}{H}  \cdot \NUM{}\cphoms{H}{\star}$ appears with non-zero coefficient
        since $H$ is nonvanishing. According to \cite[Lemmas~7]{alge}, we can compute $\NUM{}\cphoms{H}{G}$ in time $O(f(k) \cdot |V(G)|)$ using an oracle for $\NUM{}\cpindsubs{(\Phi, H)}{\star}$. Here $f$ is a computable function.
        In said reduction, each graph $G'$ that is used inside an oracle call satisfies
        $|V(G')| \leq f(k) \cdot |V(G)|$.

        \item Second, we compute for each \(G'\) the number $\NUM{}\cpindsubs{(\Phi, H)}{G'}$
            in time  $O(|V(G')|^{\beta})$ using the algorithm which we
            assumed to exist.
    \end{enumerate}
    We combine the two steps to obtain an algorithm that computes $\NUM{}\cphoms{H}{G}$
    in time $O(f(k) \cdot |V(G)| \cdot (f(k)\cdot |V(G)|)^{\beta} )$,
    which can be rewritten into $O(h(k) \cdot |V(G)|^{\beta + 1})$ for some computable function $h$. Finally,
    observe that we query our algorithm only on graphs $G'$ with $V(G') \leq f(k) \cdot |V(G)|$.
\end{proof}

Let us quickly recap what we have done. We first showed that we can count the number of $k$-cliques in a graph $G$ by finding
a graph $H$ that contains $K_{k, k}$ and computing $\NUM{}\cphoms{H}{G'}$ for some auxiliary graph $G$.
Further, we showed that we can compute $\NUM{}\cphoms{H}{G'}$
using a oracle for $\NUM{}\cpindsubs{(\Phi, H)}{\star}$ whenever the alternating enumerator of $H$ is nonvanishing.
That is, we can count cliques by finding graphs that contain a biclique and have a nonvanishing alternating enumerator.
Since there are tight bounds for counting cliques assuming ETH, we obtain the following result.

\begin{theorem}[Modification of \cite{alge}]\label{theo:lower:bound:bicliques}
    Assuming ETH, there is a global constant $\beta > 0$ and a positive integer $N$ such that for all
    graph parameters $\Phi \colon \graphs{} \to \Q$, functions $h$, numbers $k$ with
    \begin{itemize}
        \item $h(k) \geq N$
        \item there is a graph $F$ with $k$ vertices and $\ae{\Phi}{F} \neq 0$,
        \item and $F$ contains $K_{h(k), h(k)}$ as a subgraph
    \end{itemize}
    there is no algorithm (that reads the whole input) that for every \(G\) computes
    $\NUM{}\cpindsubs{(\Phi, \graphs{k})}{G}$ in time $O(|V(G)|^{\beta h(k)})$. Here,
    input consists of a $k$-vertex graph $H$ and $H$-colored graph $G$.
\end{theorem}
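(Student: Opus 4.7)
The plan is to chain the reductions in~(\ref{reduction:chain}) in reverse to manufacture a too-fast algorithm for \clique{}, thereby contradicting \cref{theo:k-clique:ETH}. Let $\alpha > 0$ be the ETH constant from that lemma; I set $\beta \coloneqq \alpha/2$ and fix the threshold $N$ only at the end, so that the additive slack from the reductions is absorbed.

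Suppose toward contradiction that for some $\Phi$, $h$, $k$, and $F$ satisfying the hypotheses, there is an algorithm $\mathbb{A}$ that, on every $k$-vertex $H$ and every $H$-colored $G$, computes $\NUM{}\cpindsubs{(\Phi, H)}{G}$ in time $O(|V(G)|^{\beta h(k)})$. Since $\ae{\Phi}{F} \neq 0$, \cref{lem:cphom:to:indsub} converts $\mathbb{A}$ into an algorithm $\mathbb{B}$ that computes $\NUM{}\cphoms{F}{G}$ for every input graph $G$ in time $O(h_1(k) \cdot |V(G)|^{\beta h(k) + 1})$ for some computable $h_1$; the extra $+1$ in the exponent is the unavoidable cost of the complexity-monotonicity step used inside the lemma.

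I then invoke \cref{lem:tight:biclique:construction} with $\ell \coloneqq h(k)$, exploiting that $F$ has exactly $k$ vertices and contains $K_{h(k), h(k)}$ as a subgraph: from any input graph $\tilde G$ one builds in time $O(k^{k+2} + k^2 |V(\tilde G)|^2)$ an $F$-colored graph $G^\star$ with $|V(G^\star)| = 2 h(k)|V(\tilde G)| + (k - 2 h(k))$ vertices and with $\NUM{}\cphoms{F}{G^\star}$ equal to the number of $h(k)$-cliques in $\tilde G$. Running $\mathbb{B}$ on $G^\star$ therefore counts $h(k)$-cliques in $\tilde G$ in total time $g(k) \cdot |V(\tilde G)|^{\beta h(k) + 1}$ for some computable $g$.

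It remains to compare $\beta h(k) + 1$ with $\alpha h(k)$. With $\beta = \alpha/2$, the inequality $\beta h(k) + 1 \leq \alpha h(k)$ is equivalent to $h(k) \geq 2/\alpha$, so choosing $N \coloneqq \max(3, \lceil 2/\alpha \rceil)$ forces the total running time above into $O(|V(\tilde G)|^{\alpha h(k)})$, contradicting \cref{theo:k-clique:ETH} applied with clique size $h(k) \geq 3$. The only subtlety I foresee is the book-keeping of the $k$-dependent factors $h_1(k)$, $g(k)$, $k^{k+2}$, and the quadratic construction overhead $|V(\tilde G)|^2$ — all of which are dominated by $|V(\tilde G)|^{\beta h(k) + 1}$ precisely because the threshold $h(k) \geq N$ already forces $\beta h(k) + 1 \geq 2$. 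The logical skeleton is simply to chase the exponents back through the reduction chain.
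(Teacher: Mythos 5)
Your proposal is correct and follows essentially the same route as the paper: it chains \cref{lem:tight:biclique:construction} and \cref{lem:cphom:to:indsub} to turn a fast algorithm for $\NUM{}\cpindsubs{(\Phi,\graphs{k})}{\star}$ into an $O(|V(G)|^{\beta h(k)+1})$ algorithm for $h(k)$-\clique{}, and then picks $\beta=\alpha/2$ and $N=\max(3,\lceil 2/\alpha\rceil)$ so that $\beta h(k)+1\le\alpha h(k)$ contradicts \cref{theo:k-clique:ETH}. The only (immaterial) difference is presentational: you first convert the oracle into a $\NUM{}\cphoms{F}{\star}$-solver and then feed it the biclique construction, whereas the paper builds the colored instance first; the exponent bookkeeping and constants are identical.
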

\begin{proof}
    We show how to use $\NUM{}\cpindsubs{(\Phi, \graphs{k})}{\star}$ to solve $h(k)$-\textsc{Clique}.

    \begin{claim}\label{claim:clique:to:indsub}
        For a fixed $k$ such that there is a
        graph $F$ with $k$ vertices, $\ae{\Phi}{F} \neq 0$, and $F$ contains $K_{h(k),
        h(k)}$ as a subgraph; if there is an algorithm that computes for each
        $F$-colored graph $G'$ the value $\NUM{}\cpindsubs{(\Phi,
        F)}{G'}$ in time $O(|V(G')|^{\gamma})$, then $h(k)$-\textsc{Clique} can be
        computed for each graph $G$ in time
        $O(|V(G)|^{\gamma + 1})$.
    \end{claim}
    \begin{claimproof}
        We construct an algorithm
        that solves $h(k)$-\textsc{Clique} using an oracle for $\NUM{}\cpindsubs{(\Phi,
        F)}{\star}$. Since $F$ is fixed, we can assume that our algorithm knows
        $F$. Fix a graph $G$, we use the algorithm from
        \cref{lem:tight:biclique:construction} to construct a graph $G'$ such that $G$
        contains a $h(k)$-clique if and only if $\NUM{}\cphoms{F}{G'}$ is not zero. The
        running time of
        this construction is in $O(|V(G)|^2)$ since $|V(F)| = k$ is constant. Further,
        we obtain $|V(G')| \leq 2 h(k) |V(G)| + k = O(|V(G)|)$.

        If we can compute $\NUM{}\cpindsubs{(\Phi, F)}{|V(G')|}$ in time $O(|V(G')|^{\gamma})$, then we
        can use \cref{lem:cphom:to:indsub} to compute $\NUM{}\cphoms{F}{G'}$ in time $O((2
        h(k) |V(G)| + k)^{\gamma + 1})$. Thus, we can solve $h(k)$-\textsc{Clique} in time
        $O(|V(G)|^2 + (2 h(k) |V(G)| + k)^{\gamma + 1})$ which is in $O(|V(G)|^{\gamma + 1})$ since $k$
        is fixed. Note, that we can assume that $\gamma \geq 1$ since otherwise the algorithm
        for $\NUM{}\cpindsubs{(\Phi, F)}{\star}$ would be sublinear. 
    \end{claimproof}
    By \cref{theo:k-clique:ETH}, there is a constant $\alpha >
    0$ such that no algorithm solves $h(k)$-\textsc{Clique} in time
    $O(|V(G)|^{\alpha h(k)})$ for a fixed $h(k) \geq 3$ unless ETH fails. Set $\beta \coloneqq
    \alpha/2$ and $N \coloneqq \max(3, 2/\alpha)$.
    If there is a $k$ with $h(k) \geq N \geq 3$,
    a graph $F$ with $k$ vertices such that $\ae{\Phi}{F} \neq 0$,
    and $F$ contains $K_{h(k), h(k)}$ as a subgraph, and an algorithm that solves $\NUM{}\indsubs{(\Phi,
    k)}{G'}$ in time $O(|V(G')|^{\beta h(k)})$, then we can use \cref{claim:clique:to:indsub}
    to solve $h(k)$-\textsc{Clique} in time $O(|V(G)|^{\beta h(k) + 1})$. Observe that $\beta
    h(k) + 1 \leq \alpha h(k)$ for $h(k) \geq 2/\alpha$. Thus, we can use solve
    $h(k)$-\textsc{Clique} in time $O(n^{\alpha h(k)})$. Hence, ETH fails.
\end{proof}

The last results show how to use an algorithm for $\NUM{}\cpindsub(\Phi)$
to solve $\NUM{}$-$\clique$. Now, the last step of our reduction chain (\ref{reduction:chain})
is to prove a reduction form $\NUM{}\cpindsub(\Phi)$ to $\NUM{}\indsubsprob(\Phi)$.

\begin{lemma}[Modification of {\cite[Lemma 10]{alge}}]\label{lem:redu:cpindsub:indsub}
    Given a graph parameter $\Phi \colon \graphs{} \to \Q$ and let $H$ be a $k$-vertex graph.
    Assume that there is an algorithm $\mathbb{A}$ that computes on input $(G, k)$
    the number $\NUM{}\indsubs{(\Phi, k)}{G}$ in time $g(k) \cdot |V(G)|^{\beta}$. Then, there
    is an algorithm $\mathbb{B}$ that given an $H$-colored graph $G$ as input,
    computes $\NUM{}\cpindsubs{(\Phi, H)}{G}$ in time $O(2^k \cdot g(k) \cdot |V(G)|^{\beta+2})$.
\end{lemma}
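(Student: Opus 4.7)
The plan is to use a standard inclusion–exclusion over color classes to convert the colored counting problem to a (polynomially bounded) number of uncolored counting problems, each of which can be answered by the oracle~$\mathbb{A}$.

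Write $k = |V(H)|$ and let $c \colon V(G) \to V(H)$ be the coloring that makes $G$ an $H$-colored graph. A $k$-subset $A \subseteq V(G)$ induces a colorful subgraph precisely when $c(A) = V(H)$; since $|A| = k = |V(H)|$, this is equivalent to $c(A) \supseteq V(H)$, so we are in the classical ``contains every color'' situation. For a subset $T \subseteq V(H)$, let $G_T \coloneqq G[c^{-1}(T)]$, so that the $k$-subsets of $V(G)$ whose colors lie in $T$ are exactly the $k$-subsets of $V(G_T)$. First, I would apply inclusion–exclusion over the set of missing colors to write
\begin{equation*}
    \NUM{}\cpindsubs{(\Phi, H)}{G} \;=\; \sum_{T \subseteq V(H)} (-1)^{k - |T|} \sum_{A \in \binom{V(G_T)}{k}} \Phi(G[A]) \;=\; \sum_{T \subseteq V(H)} (-1)^{k - |T|}\, \NUM{}\indsubs{(\Phi, k)}{G_T},
\end{equation*}
noting that for $A \subseteq V(G_T)$ we have $G[A] = G_T[A]$, so both sides agree term by term.

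Next, the algorithm $\mathbb{B}$ simply iterates through all $2^k$ subsets $T \subseteq V(H)$, constructs $G_T$ in time $O(|V(G)|^2)$ by reading off the induced subgraph on $c^{-1}(T)$, invokes $\mathbb{A}$ on $(G_T, k)$ in time $g(k) \cdot |V(G_T)|^{\beta} \le g(k) \cdot |V(G)|^\beta$, and accumulates the signed sum. The total running time is therefore
\begin{equation*}
    O\!\left(2^k \cdot \bigl(|V(G)|^2 + g(k) \cdot |V(G)|^\beta\bigr)\right) = O\!\left(2^k \cdot g(k) \cdot |V(G)|^{\beta + 2}\right),
\end{equation*}
which matches the claim (we may assume $g(k) \ge 1$; otherwise trivially increase it).

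\textbf{Main obstacle.} There is essentially no obstacle: the only subtle point is making sure that the inclusion–exclusion identity above really picks out the colorful subsets and not merely those using ``at most'' all colors. This works because the equality $|A| = |V(H)|$ forces ``every color appears'' and ``at most one vertex per color'' to coincide, so the surjectivity-via-inclusion–exclusion trick applies verbatim. Everything else is just bookkeeping of running times.
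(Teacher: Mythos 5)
Your proposal is correct and follows essentially the same route as the paper: inclusion--exclusion over the color classes, reducing the colorful count to $2^k$ oracle calls on the induced subgraphs obtained by deleting/keeping color classes (your $G_T$ is the paper's $G_J$ with $J = V(H)\setminus T$). The only cosmetic difference is that you prove the weighted inclusion--exclusion identity directly for graph parameters, whereas the paper cites the graph-property version from the literature and lifts it to parameters by writing $\Phi$ as a linear combination of the indicator parameters $\Phi^b$.
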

\begin{proof}
    Let $H \in \mathcal{H}$ be a graph and let $G$ be a $H$-colored graph with coloring $c$.
    Using~\cite[equation (19)]{alge}, we can rewrite each $\NUM{}\cpindsubsprob(\Phi^b, \mathcal{H})$
    as a sum of $\NUM{}\indsubs{\Phi, k}{G_J}$ terms. We combine this with (\ref{eq:cpindsub:param}) to obtain
    \begin{align*}
        \NUM{}\cpindsubs{(\Phi, H)}{G}
    &\stackrel{(\ref{eq:cpindsub:param})}{=} \sum_{b \in \image{\Phi}{k}} b \cdot \NUM{}\cpindsubs{\Phi^b, H}{G} \\
    &\stackrel{(19)}{=} \sum_{b \in  \image{\Phi}{k}} b \sum_{J \subseteq \setn{k}} (-1)^{\#J} \cdot \NUM{}\indsubs{\Phi^b, k}{G_J} \\
    &= \sum_{J \subseteq \setn{k}} (-1)^{\#J} \sum_{b \in \image{\Phi}{k}} b \cdot \NUM{}\indsubs{\Phi^b, k}{G_J} \\
    &\stackrel{(\ref{eq:indsub:param})}{=} \sum_{J \subseteq \setn{k}} (-1)^{\#J} \NUM{}\indsubs{\Phi, k}{G_J},
    \end{align*}
    where $G_J$ is the graph the is obtained from $G$ by deleting all vertices whose color is in $J$. Hence,
    $\NUM{}\cpindsubs{(\Phi, H)}{G}$ can be computed by using algorithm $\mathbb{A}$ at most $2^k$ times. Each
    graph $G_J$ can be computed in time $|V(G)|^2$ and we can compute $\NUM{}\indsubs{\Phi, k}{G_J}$ in
    $O(g(k) \cdot |V(G)|^{\beta})$.
\end{proof}

Lastly, we combine the results of this section to obtain a corollary that we can use for proving hardness 
and to get tight lower bounds under ETH. In the following, we write $h(G_k)$ instead of $h(|V(G_k)|)$ 
whenever $h \colon \nat \to \nat$ is a function that maps natural numbers to natural numbers.

\cortightbounds
\begin{proof}
    \begin{itemize}
        \item First, we show $\w$-hardness by constructing a parameterized Turing reduction from 
        $\NUM{}\clique{}$ to $\NUM{}\cpindsub(\Phi)$. Given a graph $G$ and a number $k$. We know 
        that there exists a nonvanishing graph $F$ that contains $K_{k, k}$ is a biclique since 
        there exists a graph $G_j \in A$ with $h(G_j) \geq k$. We can find $F$ by simply 
        iterating over all possible graphs (ordered by the number of vertices) and search for the
        first one that has these properties. The running time of this is in $O(h_1(k))$ for some 
        computable function $h_1$. Further, we can use \cref{lem:tight:biclique:construction} to 
        compute a $F$-colored graph $G'$ with $2k|V(G)| + (|V(F)| - 2k)$ many vertices in time
        $O(h_2(k) \cdot |V(G)|^2)$ such that the number of  $k$-clique in $G$ is equal to 
        $\NUM{}\cphoms{F}{G'}$. \cref{claim:hom:tw} ensures that $\NUM{}\cphoms{F}{G'}$ 
        can be computed by an algorithm that can queries $\NUM{}\cpindsub(\Phi)$ only
        on $F$ and $F$-colored graphs with size at most $h_3(k) \cdot |V(G)|$, 
        for some computable function $h_3(k)$. This shows that there is a parameterized Turing
        reduction from $\NUM{}\clique{}$ to $\NUM{}\cpindsub(\Phi)$, thus $\NUM{}\cpindsub(\Phi)$ is $\w$-hard. 
        Lastly, we use \cref{lem:redu:cpindsub:indsub} to get a parameterized Turing reduction from $\NUM{}\cpindsub(\Phi)$
        to $\NUM{}\indsubsprob(\Phi)$ which shows that $\NUM{}\indsubsprob(\Phi)$ is $\w$-hard, too.

        \item Let $N$ and $\beta$ be the constants from \cref{theo:lower:bound:bicliques}. Since $h(G) \in \Omega(|V(G)|^\psi)$, 
        there is a constant $c$ and integers $M$ such that $h(G) \geq |V(G)|^\psi/c$ for all $|V(G)| \geq M$. We define 
        $\gamma' \coloneqq \min(\beta/c, \beta/2) $. Now, let $k$ be some value such that $k = |V(G_j)|$
        for some $j$. If $k \geq C$, where $C \coloneqq \max((c N)^{1/\psi}, N, M, 4/\beta)$, then we can use 
        \cref{theo:lower:bound:bicliques} to show that no algorithm that computes for each $k$-vertex 
        graph $H$ and $H$-colored graph $G$ the value $\NUM{}\cpindsubs{(\Phi, H)}{G}$ in time 
        $O(|V(G)|^{\beta h(k)})$ unless ETH fails. This means that no algorithm solves the problem in
        time $O(|V(G)|^{\gamma' k^\psi})$ since $\gamma' k^\psi \leq \beta  k^\psi /c \leq \beta h(k)$.

        To obtain our lower bound also for \(k < C\), we set $\gamma \coloneqq \min(\gamma', 1/( C +1))$.
        Observe that for \(k < C\), we obtain
        \[
            O(|V(G)|^{\gamma k^\psi}) = o( |V(G)| ).
        \]
        since $k^\psi \leq k$. Now, this running time is unconditionally unachievable
        for an algorithm that reads the whole input. 
        
        Lastly, assume that we could compute the value
        $\NUM{}\indsubs{(\Phi, k)}{G}$ in time $O(|V(G)|^{\gamma h(k)})$. If $k \geq C$ we can use 
        \cref{lem:redu:cpindsub:indsub} to get an algorithm that computes $\NUM{}\cpindsubs{(\Phi, H)}{G}$
        in time $O(|V(G)|^{\gamma h(k) + 2})$. However, now we obtain that
        $\gamma h(k) + 2 \leq \beta/2 \cdot h(k) + 2 \leq \beta h(k)$ for
        $h(k) \geq C \geq 4 / \beta$. Otherwise $k < C$ and we would obtain a sublinear algorithm for
        $\NUM{}\indsubs{(\Phi, k)}{G}$ which is unconditionally unachievable
        for any algorithm that reads the whole input.

        \item This part is a direct application of the previous part.
            \qedhere
    \end{itemize}
\end{proof}

Our \cref{theo:edge_mono} only covers nontrivial edge-monotone graph parameters with
finite codomain. However, there are also many interesting graph parameters with
unbounded codomain. 

\begin{theorem}\label{theo:hard:common}
    For the following three graph parameter the problem $\indsubsprob(\Phi_j)$ and $\NUM{}\cpindsub(\Phi_j)$ are both \w-hard and
    there exists a constant $\alpha_j$ such that for all $k \geq 3$ no algorithm (that
    reads the whole input) computes for all $G$ the value $\indsubs{(\Phi_j, k)}{G}$ (or $\cpindsubs{(\Phi_j, \graphs{k})}{G}$)
    in time $O(|V(G)|^{\alpha_j k})$.
    \begin{enumerate}
        \item $\Phi_1(G) = \Delta(G)$, where $\Delta(G)$ is the maximal degree of $G$
        \item $\Phi_2(G) = $ number of connected components of $G$
        \item $\Phi_3(G) = \chi(G)$, where $\chi(G)$ is the chromatic number of $G$
    \end{enumerate}
\end{theorem}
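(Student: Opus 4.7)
The strategy for all three parameters is to apply \cref{cor:tight:bounds} with $\psi = 1$. For each parameter $\Phi_j$ and each sufficiently large $k$, I would exhibit a $k$-vertex graph $G_k^{(j)}$ that contains a biclique $K_{h(k), h(k)}$ with $h(k) \in \Omega(k)$ and has $\ae{\Phi_j}{G_k^{(j)}} \neq 0$. Given such a family, \cref{cor:tight:bounds} immediately yields $\w$-hardness of both $\NUM{}\indsubsprob(\Phi_j)$ and $\NUM{}\cpindsubsprob(\Phi_j)$, together with the tight lower bound $O(|V(G)|^{\alpha_j k})$ under ETH. In every case $G_k^{(j)}$ will be either $K_k$ or $K_{\lfloor k/2 \rfloor, \lceil k/2 \rceil}$, both of which trivially contain a biclique of size $\Omega(k)$.

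For the chromatic number $\Phi_3 = \chi$, I would take $G_k = K_{\lfloor k/2 \rfloor, \lceil k/2 \rceil}$. Every edge-subgraph of a bipartite graph remains bipartite, so $\chi(\ess{G_k}{S}) = 1$ if $S = \emptyset$ and $\chi(\ess{G_k}{S}) = 2$ otherwise. Combined with $\sum_{S \neq \emptyset}(-1)^{|S|} = -1$, this gives $\ae{\chi}{G_k} = 1 - 2 = -1 \neq 0$. For the number of components $\Phi_2$, I would take $G_k = K_k$ and exploit the chromatic polynomial identity $P(K_k, x) = \sum_S (-1)^{|S|} x^{c(\ess{K_k}{S})} = \prod_{i=0}^{k-1}(x - i)$; differentiating at $x = 1$ (where only the factor $(x - 1)$ vanishes) yields $\ae{\Phi_2}{K_k} = P'(K_k, 1) = (-1)^k (k - 2)!$, which is nonzero for all $k \geq 2$.

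For the maximum degree $\Phi_1 = \Delta$, I would again take $G_k = K_k$, but the computation is more delicate because no standard graph polynomial directly encodes $\Delta$. My plan is to decompose $\Delta(S) = \sum_{d \geq 1}[\Delta(S) \geq d]$ and apply inclusion-exclusion over the set $T$ of vertices achieving $\deg_S(v) \geq d$; edges not incident to $T$ contribute a vanishing factor $(1-1)^{\binom{k-|T|}{2}}$, so only $T$ that form a vertex cover of $K_k$---i.e., $|T| \in \{k-1, k\}$---yield nontrivial contributions. This collapses $\ae{\Delta}{K_k}$ to a short sum of alternating enumerators of minimum-degree constraints. The small cases $\ae{\Delta}{K_3} = 1$, $\ae{\Delta}{K_4} = 4$, and $\ae{\Delta}{K_5} = -20$ corroborate nonvanishing; should $\ae{\Delta}{K_k}$ vanish for some specific $k$, one can substitute $G_k = K_{k - 1}$ padded with an isolated vertex, which satisfies $\ae{\Delta}{G_k} = \ae{\Delta}{K_{k-1}}$ and still contains a biclique of size $\Omega(k)$.

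The main obstacle is the $\Delta$ case: whereas $\chi$ and the number of components admit clean structural identities (bipartiteness and the chromatic polynomial, respectively), proving nonvanishing of $\ae{\Delta}{K_k}$ uniformly in $k$ demands careful arithmetic bookkeeping of the reduced alternating sums indexed by near-spanning vertex subsets, since the remaining min-degree subsums do not factor in an obviously nonvanishing way.
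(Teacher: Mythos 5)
Your treatments of $\Phi_2$ and $\Phi_3$ are complete, but the $\Phi_1=\Delta$ case is a genuine gap, and you acknowledge as much: you never establish $\ae{\Delta}{G_k}\neq 0$ for an infinite family of $k$-vertex graphs containing $\Omega(k)$-bicliques, which is exactly what \cref{cor:tight:bounds} needs. The inclusion--exclusion observation is sound as far as it goes (edges inside $V\setminus T$ contribute a factor $(1-1)^{\binom{k-|T|}{2}}$, so only near-spanning $T$ survive), but the surviving alternating sums are left unevaluated; checking $k=3,4,5$ by hand does not give infinitely many witnesses, and the fallback of padding $K_{k-1}$ with an isolated vertex only shifts the unproved nonvanishing claim from $k$ to $k-1$. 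The paper's intended route here is the fixed-point machinery: for an odd prime $p$, the fixed points of the $p$-Sylow subgroup of $\auts{K_p}$ are the circulants $\cgr{p}{A}$ with $A\subseteq[(p-1)/2]$, each $2|A|$-regular, so \cref{lem:chi:comp} collapses $\ae{\Delta}{K_p}\bmod p$ to a binomial sum in $s=(p-1)/2$. Be warned, though, that this route is itself delicate: carrying the signs $(-1)^{\hl{\cdot}}$ correctly, the sum is $2\sum_{j=0}^{s}(-1)^{j}j\binom{s}{j}$, which vanishes for $s\geq 2$ --- consistent with your own data point $\ae{\Delta}{K_5}=-20\equiv_5 0$ --- so the mod-$p$ test is conclusive only at $p=3$. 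In short, the maximum-degree case genuinely requires a finished computation that neither your sketch nor a naive mod-$p$ reduction supplies.

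For $\Phi_2$ and $\Phi_3$ your route is correct and genuinely different from the paper's. The paper works modulo $2$ on the fixed point $K_{2^{t-1},2^{t-1}}$ of $\sylow_{2^t}$ in $K_{2^t}$, whose only proper sub-point is $\IS_{2^t}$, so that $\ae{\Phi_j}{K_{s,s}}\equiv_2 \Phi_j(K_{s,s})+\Phi_j(\IS_{2^t})\equiv_2 1$; this produces witnesses only at $k=2^t$. You instead evaluate the alternating enumerators exactly: $\ae{\chi}{K_{a,b}}=1-2=-1$ because every nonempty edge-subgraph of a biclique has chromatic number exactly $2$, and $\ae{\Phi_2}{K_k}=P'(K_k,1)=(-1)^k(k-2)!$ via Whitney's subgraph expansion of the chromatic polynomial. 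Both computations check out, they avoid the group-theoretic machinery entirely, and they yield a witness at every $k\geq 3$ rather than only at powers of two, which matches the quantifier ``for all $k\geq 3$'' in the statement more directly than the paper's own argument does.
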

\begin{proof}
    Let us first consider $\Phi_1$. For this let $p$ be an odd prime number, we consider the fixed points $\fp(\sylow_{p}, K_{p})$. 
    By \cref{lemma:fixed:points:p^m}, we know that these fixed points are precisely the difference graphs $\cgr{p}{A}$
    for $A \subseteq [s]$, where $s \coloneqq (p-1)/2$. Observe that $\Phi_1(\cgr{p}{A}) = 2 |A|$. Thus 
    \[\ae{\Phi_1}{K_p} = \ae{\Phi_1}{\cgr{p}{[s]}} \equiv_p \sum_{k = 0}^s 2 k \binom{s}{k} \equiv_p s 2^s,\]
    where the last equation follows from a simple induction. Observe that $s 2^s \not \equiv_p 0$ since $2 \not \equiv_p 0$
    and $s \not \equiv_p 0$. Hence, for each prime $p$, we get that $K_p$ is nonvanishing and it is easy to see that 
    $K_p$ contains $K_{s,s}$ as a subgraph. This means that we can use \cref{cor:tight:bounds} to show
    that $\NUM{}\indsubsprob(\Phi_1)$ and $\NUM{}\cpindsub(\Phi_1)$ are both \w-hard and have tight bounds under ETH.

    Next, for all natrual numbers $t$, we consider the fixed points $\fp(\sylow_{2^t}, K_{2^t})$. 
    By \cref{lemma:fixed:points:p^m}, we know that these fixed points are precisely
    the $t$-lexicographic product of difference graphs $\cgr{2}{A_1} \wrGraph \cdots \wrGraph \cgr{2}{A_t}$. 
    Let $s = 2^{t-1}$, and observe that 
    $K_{s, s} = \cgr{2}{\{1\}} \wrGraph \cgr{2}{\emptyset} \wrGraph  \cdots \wrGraph \cgr{2}{\emptyset}$
    and $\IS_{2^t} = \cgr{2}{\emptyset} \wrGraph  \cdots \wrGraph \cgr{2}{\emptyset}$. 
    Thus $\ae{\Phi_j}{K_{s, s}} \equiv_2 \Phi_j(K_{s, s}) + \Phi_j(\IS_{2^t})$. It is now easy to verify that 
    $\ae{\Phi_2}{K_{s, s}} \equiv_2 1 + 2^t \equiv_2 1$ and $\ae{\Phi_3}{K_{s, s}} \equiv_2 2 + 1 \equiv_2 1$.
    Hence, we can use \cref{cor:tight:bounds} to show
    that $\NUM{}\indsubsprob(\Phi_j)$ and $\NUM{}\cpindsub(\Phi_j)$ are both \w-hard and have tight bounds under ETH.
\end{proof}

\section{Hardness for edge-monotone graph parameters with finite codomain}\label{sec:edge_monotone}

To goal of this section is to show that for all edge-monotone nontrivial graph parameters with finite codomain the problems
$\NUM{}\indsubsprob(\Phi)$  and $\NUM{}\cpindsub(\Phi)$ are both $\w$-hard and have tight bounds assuming ETH.

\thmedgemon*

\subsection{Tight bounds for edge-monotone graph parameters with finite codomain on prime powers}\label{sec:prime:power}

The goal of this section is to prove $\w$-hardness and tight bounds for $\NUM{}\indsubsprob(\Phi)$ and
$\NUM{}\cpindsub(\Phi)$ when $\Phi$ is nontrivial on prime powers. We archive this by finding
nonvanishing graphs $H$ that contain large bicliques. To find these graphs we analyze the fixed points
of the $p$-Sylow group of $\aut(K_{p^t})$ (see \cref{sec:sylow}). We exploit the fact that these fixed
points contain large bicliques whenever they have a small empty-prefix (see \cref{lem:treewidth:wreath:level}).
Thus, our goal is to find a nonvanishing fixed point $H = \cgr{p}{A_1} \wrGraph \cdots \wrGraph  \cgr{p}{A_{t}}$ with empty-prefix
$\wrLevel(A_1, \dots, A_{t}) = 0$, which is done using the following approach.

Set $z \coloneqq \Phi(\IS_{p^t})$, then we search for the smallest $\ell$ such that there is a fixed point $H$
of level $\ell$ with $\Phi(H) < z$. Next, we show that $\wrLevel(A_1, \dots, A_{t}) = 0$ which is done by using
that graphs with a high empty-prefix are edge-subgraphs of graphs with small empty-prefix. Lastly, we use
\cref{lem:ae:nonvanishing:sub-points} to get that $\ae{\Phi}{H}$ is nonvanishing. The proof is similar to
the proof of Theorem 7.18 given in \cite{edge_monotone}.

\begin{theorem}[Modification of {\cite[Theorem 7.18]{edge_monotone}}]\label{theo:edge_mono:prime:power:bicliques}
    Let $\Phi \colon \graphs{} \to \{0, 1, \dots, c\}$ denote an edge-monotone graph parameter that is nontrivial on a prime power
    $p^t$ with $c < p$, then there is a nonvanishing fixed point $H$ of $\sylow_{p^t}$ in $K_{p^t}$ that
    contains $K_{p^{t-1}, p^{t-1}}$ as a subgraph. In particular, $\ae{\Phi}{H} \not \equiv_p 0$.
\end{theorem}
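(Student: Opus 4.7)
The plan is as follows. Set $z \coloneqq \Phi(\IS_{p^t})$. Edge-monotonicity says that $\Phi$ is maximized over $p^t$-vertex graphs at $\IS_{p^t}$, and nontriviality on $p^t$ then forces $\Phi(K_{p^t}) < z$. By \cref{lemma:fixed:points:p^m}, every fixed point of $\sylow_{p^t}$ acting on $K_{p^t}$ has the form $\cgr{p}{A_1} \wrGraph \cdots \wrGraph \cgr{p}{A_t}$, and these fixed points form a lattice under the sub-point (edge-subgraph) relation, with bottom $\IS_{p^t}$ and top $K_{p^t}$. I would select $H$ to be a fixed point of minimum Hasse level $\hl{\cdot}$ subject to $\Phi(H) < z$; such an $H$ exists because $K_{p^t}$ qualifies. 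This choice is crucial: every proper sub-point $B \subsetneq H$ has $\hl{B} < \hl{H}$, so by minimality $\Phi(B) \ge z$, while edge-monotonicity gives $\Phi(B) \le z$. Together these force $\Phi(B) = z$ for every proper sub-point of $H$.

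The hard step is to prove that the outermost factor of $H$ is nontrivial, i.e., $\wrLevel(A_1, \dots, A_t) = 0$. Following the blueprint of \cite[Theorem 7.18]{edge_monotone}, I would assume for contradiction that $A_1 = \emptyset$ and exhibit a fixed point $H^\star$ of strictly smaller Hasse level still satisfying $\Phi(H^\star) < z$, contradicting the minimality in the choice of $H$. Concretely, if $j$ is the smallest index with $A_j \neq \emptyset$, shifting one generator of $A_j$ outward by one level produces a fixed point $H^\star$ with smaller empty-prefix; the key observation is that graphs with higher empty-prefix are edge-subgraphs of graphs with smaller empty-prefix, so $H$ remains an edge-subgraph of $H^\star$, and edge-monotonicity then yields $\Phi(H^\star) \le \Phi(H) < z$. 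Verifying that this shift strictly decreases $\hl{\cdot}$ and that $H^\star$ is indeed a fixed point is the technical core of the argument; this relies on the explicit description of the Hasse level in terms of the tuple $(A_1, \dots, A_t)$ developed in the preceding sections.

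Once $\wrLevel(H) = 0$ is established, \cref{lem:treewidth:wreath:level} immediately yields a copy of $K_{p^{t-1}, p^{t-1}}$ inside $H$. To finish, I would apply \cref{lem:ae:nonvanishing:sub-points} to $H$ with $a \coloneqq \Phi(H) < z$ and $b \coloneqq z$: the hypothesis $c < p$ is part of the assumptions, $a \neq b$ holds by construction, and every proper sub-point of $H$ takes the value $b = z$ by the choice of $H$. The lemma then concludes $\ae{\Phi}{H} \not\equiv_p 0$, so $\ae{\Phi}{H}$ is nonvanishing, completing the proof.
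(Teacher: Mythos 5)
Your overall architecture is the paper's: set $z \coloneqq \Phi(\IS_{p^t})$, pick a fixed point of $\sylow_{p^t}$ in $K_{p^t}$ of minimum level with $\Phi < z$ (it exists since $\Phi(K_{p^t}) < z$ by nontriviality), deduce from level-minimality and edge-monotonicity that all proper sub-points take the value $z$, reduce to the empty-prefix-zero case via the cyclic shift of \cref{lem:wreath:sub:iso}, and finish with \cref{lem:treewidth:wreath:level} and \cref{lem:ae:nonvanishing:sub-points}. The endgame (the biclique and the application of \cref{lem:ae:nonvanishing:sub-points} with $a = \Phi(H)$, $b = z$) is handled correctly.

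However, the step you yourself call the technical core is wrong as stated. You claim the shift ``strictly decreases $\hl{\cdot}$'' and derive a contradiction with the minimality of the Hasse level. It does not: by \cref{lem:level:sylow:graphs} the level of $\cgr{p}{A_1} \wrGraph \cdots \wrGraph \cgr{p}{A_t}$ equals $\sum_i |A_i|$, which is invariant under permuting or cyclically shifting the sets $A_i$; the quantity that strictly decreases under the shift is the empty-prefix $\wrLevel$, not the level. Consequently the contradiction with level-minimality never materializes, and if you tried to verify the claimed decrease of $\hl{\cdot}$ you would fail. The repair is immediate and is exactly what the paper does (phrased contrapositively there): the shifted graph $H^\star$ has the \emph{same} level $i$ and a strictly smaller empty-prefix, and since $H$ is isomorphic to an edge-subgraph of $H^\star$, edge-monotonicity gives $\Phi(H^\star) \le \Phi(H) < z$, exactly as you observe. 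Iterating within the set of level-$i$ fixed points of value $< z$ therefore terminates at one with empty-prefix zero; its proper sub-points all have level $< i$ and hence value $z$ by the minimality of $i$, so the remainder of your argument applies to it verbatim. With that one correction the proof coincides with the paper's.
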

\begin{proof}
    Set $z \coloneq \Phi(\IS_{p^m})$.
    Our goal is to show that there is a fixed point $H$ with the following properties

    \begin{itemize}
        \item $H = \cgr{p}{A_1} \wrGraph \cdots \wrGraph  \cgr{p}{A_{t}}$ for
            $A_i \subseteq \field{p}^+$,
        \item $\Phi(H) < z$,
        \item $\wrLevel(A_1, \dots, A_{t}) = 0$, and
        \item $\Phi(\Tilde{H}) = z$ for all proper sub-points $\Tilde{H}$ of $H$.
    \end{itemize}

    As \(\Phi\) is nontrivial on \(p^t\), we have $\Phi(K_{p^t}) < z$. Further, we have
    and \(\Phi(\IS_{p^t}) = \Phi(\cgr{p}{\emptyset} \wrGraph \cdots \wrGraph \cgr{p}{\emptyset}) = z\).
    Now, let $i$ denote the smallest value such that
    there is a fixed point $H$ of level $i$ with \(\Phi(H) < z\),
    but all fixed points of level smaller than $i$ have value \(z\).
    As \(\Phi(\IS_{p^t}) = z\), we have $i > 0$.

    \begin{claim}\label{11.13-10}
        There is a fixed point $H$ of
        level $i$ with $\Phi(H) < z$ and whose empty-prefix is zero.
    \end{claim}
    \begin{claimproof}
        Toward an indirect proof,
        assume that all fixed points $\cgr{p}{A_1} \wrGraph \cdots \wrGraph \cgr{p}{A_{t}}$
        of level $i$ with $\wrLevel(A_1, \dots, A_{t}) = 0$ have value \(z\).
        We show that now, all fixed
        points of level $i$ have value $z$, which is a contradiction to our choice of $i$.

        To that end, fix an $F \coloneqq
        \cgr{p}{A_1} \wrGraph \dots \wrGraph \cgr{p}{A_{t}}$.
        If $\wrLevel(A_1, \dots, A_t) = 0$, then $\Phi(F) = z$ due to our assumption.
        Otherwise, $\wrLevel(A_1, \dots, A_t) > 0$, which means that $F$ has the
        form $F = \cgr{p}{\emptyset} \wrGraph \cdots \wrGraph \cgr{p}{\emptyset} \wrGraph
        \cgr{p}{A_1} \wrGraph \dots \wrGraph \cgr{p}{A_{t - j}}$ and is
        thus, by \cref{lem:wreath:sub:iso},
        isomorphic to an edge-subgraph of
        \[
            F' \coloneqq \cgr{p}{A_1} \wrGraph \cdots \wrGraph F_p^{A_{t - j}} \wrGraph \cgr{p}{\emptyset}
            \wrGraph \cdots \wrGraph \cgr{p}{\emptyset}.
        \]
        By \cref{lem:level:sylow:graphs},
        the level of $F$ is equal to the level of $F'$.
        Thus by assumption, $F'$ has value $z$, hence the value of $F$ is at least as large as $z$ since $\Phi$ is edge-monotone.
        This implies that $F$ has value $z$.
    \end{claimproof}

    By \cref{11.13-10},
    we may assume that there is a fixed point $H \coloneqq \cgr{p}{A_1} \wrGraph \cdots \wrGraph \cgr{p}{A_{t}}$ of
    level $i$ with $\wrLevel(A_1, \dots, A_{t}) = 0$ with value less than $z$.
    Now, as the empty-prefix of \(H\) is zero, \cref{lem:treewidth:wreath:level} yields that
    $H$ contains $K_{p^{t-1}, p^{t-1}}$ as a subgraph.
    Finally, the fact that $c < p $ allows us to apply
    \cref{lem:ae:nonvanishing:sub-points} which shows that $\ae{\Phi}{H} \not \equiv_p 0$. This completes the proof.
\end{proof}

\subsection{Reduction to Smaller Size Cases}\label{sec:red}

In this section, we describe a reduction from
$\NUM{}\cpindsubs{(\reductionGraph{\Phi}{C}{H_2, \dots, H_s}, \graphs{ p^t + c})}{\star}$ to
$\NUM{}\cpindsubs{(\Phi, \graphs{p^t})}{\star}$, where $\reductionGraph{\Phi}{C}{H_2, \dots, H_s}$ is a specific graph
parameter that is defined using $\Phi$ and inhabited graphs. We use said reduction to get from a value
$k = p^t +c$ to a prime power $p^t$.

\lemincexc
\begin{proof}
    Let $H$ be a graph and $G$ be a $H$-colored graph. We define a graph $\Tilde{H}$ with
    $V(\Tilde{H}) \coloneqq V(H) \unionSet V(H_2) \unionSet \dots \unionSet V(H_s)$ and the edge set of
    $\Tilde{H}$ consists of all edges in $H$. Further, all vertices $x$ of $\Tilde{H}$ with $x \notin H$
    are connected to all other vertices (including the vertices in $H$).
    Next, we define $\Tilde{G} \coloneqq \metaGraph{C}{G, H_2 \dots, H_s}$. We rename the
    vertices in $\Tilde{G}$ such that $V(\Tilde{G}) = V(G) \unionSet V(H_2) \unionSet \dots \unionSet V(H_s)$.
    Since $G$ is a $H$-colored graph
    we get a homomorphism $c \colon V(G) \to V(H)$ as part of our input. We extend $c$ by defining
    the following homomorphism
    \[\Tilde{c} \colon V(\Tilde{G}) \to V(\Tilde{H}), x \mapsto \begin{cases}
        c(x), \text{if $x \in V(G)$} \\
        x, \text{otherwise}
    \end{cases}. \]
    This means that $\Tilde{G}$ is a $\Tilde{H}$-colored graph.
    Observe that we can compute
    $\Tilde{H}$ in time $O((c + |V(H)|)^2)$, $\Tilde{G}$ in time $O((c + |V(G)|)^2)$ and $\Tilde{c}$ in time $O((c + |V(G)|)^2)$.
    Further, without loss of generality, we can assume that $|V(H)| \leq |V(G)|$.

    Lastly, we claim that $\NUM{}\cpindsubs{(\Tilde{\Phi}, H)}{G} = \NUM{}\cpindsubs{(\Phi, \Tilde{H})}{\Tilde{G}}$, thus we can use algorithm $\mathbb{A}$ to compute $\NUM{}\cpindsubs{(\Tilde{\Phi}, H)}{G}$ in time
    $O(g(c + k, c + |V(G)|) + (c + |V(G)|)^2)$. To prove our claim, we start with $\NUM{}\cpindsubs{(\Tilde{\Phi}, H)}{G}$ and unfold the definition of $\Tilde{\Phi}$. Observe that an induced subgraph $G[A]$
    is colorful with respect to $c$ if $G[A]$ hits all color classes meaning that $c(G[A]) = V(H)$.
    \[\NUM{}\cpindsubs{(\Tilde{\Phi}, H)}{G} = \sum_{\substack{A \in \binom{V(G)}{|V(H)|} \\
    c(G[A]) = V(H)}} \Phi(\metaGraph{C}{G[A], H_2, \dots, H_s})\]
    Next, we define $X \coloneqq V(H_2) \unionSet \dots \unionSet V(H_s)$ and observe
    that $c(G[A]) = V(H)$ is equivalent to $\Tilde{c}(\Tilde{G}[A \unionSet X]) = V(\Tilde{H})$ since
    $\Tilde{c}(\Tilde{G}[A \unionSet B]) = c(G[A]) \unionSet B$, for all $B \subseteq X$. Lastly,
    observe that $\Tilde{G}[A \unionSet X] = \metaGraph{C}{G[A], H_2, \dots, H_s}$.
    Thus we can continue our computation to obtain
    \[\sum_{\substack{A \in \binom{V(G)}{|V(H)|} \\
    c(G[A]) = V(H)}} \Phi(\metaGraph{C}{G[A], H_2, \dots, H_s}) = \sum_{\substack{A \in \binom{V(G)}{|V(H)|}  \\
    \Tilde{c}(\Tilde{G}[A \unionSet X]) = V(\Tilde{H})}} \Phi(\Tilde{G}[A \unionSet X]) =  \NUM{}\cpindsubs{(\Phi, \Tilde{H})}{\Tilde{G}}.
    \qedhere
\]
\end{proof}

\subsection{Finding Large Bicliques in Graphs}\label{sec:large:bicliques}

Following \cite[Section 6.4]{edge_monotone}, we introduce the following definition:%\footnote{Note: originally \emph{scattered} was simply defined as \emph{not concentrated}. However, we changed
%the definition of scattered to emphasize the fact that the scattered case is used to construct a reduction from
%$\NUM{}\indsubs{(\reductionGraph{\Phi}{C}{H_2, \dots, H_{s + r}}, \primepower{p}(k))}{\star}$ to $\NUM{}\indsubs{(\Phi, k)}{\star}$.}

\defconsca

We show that each edge-monotone graph parameter $\Phi \colon \graphs{k} \to \{0, \dots, c\}$, with $c < p$,
that is nontrivial on $k \geq p^{t+1} + p^t$ is always \emph{concentrated on $k$ with respect to $p^t$}
or \emph{reducible on $k$ with respect to $p^{t+1}$}.

\lemlargebi
\begin{proof}
    First, we define $z \coloneq \Phi(\IS_k)$.
    Let $\sum_{i = 0}^d s_i \cdot p^i$ be the base $p$ representation of $(k - p^{t+1})$,
    that is $s_i \in \{0, \dots, p-1\}$ and $s_d \neq 0$.\footnote{
    For example, the base $3$ representation of 34 is $1 \cdot 3^0 + 2 \cdot 3^1 + 0 \cdot 3^2 + 1 \cdot 3^3$.}
    Observe that $p^d \geq p^{t}$. We define $s \coloneqq \sum_{j = t}^d s_j \cdot p^{j-t}$ and
    $r \coloneq (k-p^{t+1}) - s \cdot p^t$. Observe that $s \geq 1$, $k = p^{t+1} + s \cdot p^t + r$ and $r < p^t$.
    The idea is that $s$ represents the digits of $(k - p^{t+1})$ after the $t$ position.
    Our goal is to prove that $\Phi$ is concentrated on $k$ with respect to $p^t$ or reducible on $k$ with respect to $p^{t+1}$.
    For this, we assume that $\Phi$ satisfies neither of these properties and show that $\Phi(K_k) = z$.
    This would imply that $\Phi$ is trivial on $k$.

    Next, we consider the $p$-group\footnote{$\sylow_{1}$ is simply the trivial group
    containing one element. The group is isomorphic to $\aut(K_1) = \aut(\IS_1) = \sym{1}$. }
    \[\Gamma \coloneq \sylow_{p^{t+1}}  \times \underbrace{(\sylow_{p^t} \times \dots \times
        \sylow_{p^t})}_{s-\text{times}}
    \times \underbrace{(\sylow_{1} \times \dots \times \sylow_{1})}_{r-\text{times}} \]
    together with the fixed points $\fp(\Gamma, K_k)$. By \cref{theo:prod:fixed:points}, we get that
    the set of fixed points is equal to
    \[\fp(\Gamma, K_k) = \{\metaGraph{C}{A^1, A^2 \dots, A^{1+s}, \IS_1, \dots, \IS_1} \colon
    C \in \graphs{1+ s + r}, A^1 \in \fp(\sylow_{p^{t+1}}, K_{p^{t+1}}),  A^i \in \fp(\sylow_{p^t}, K_{p^t}) \}.\]
    Observe that the last $r$ elements of each fixed point are always independent sets of size $1$.
    Hence, we write $\metaGraph{C}{A^1, \dots, A^{1+s}}$ instead of $
    \metaGraph{C}{A^1 \dots, A^{1+s}, \IS_1, \dots, \IS_1}$ from now on.

    In our analysis, we mainly consider fixed points where each $A^i$ is either an independent set
    or a clique. We write $K_{1+s} \UnionGraph \IS_r$ for the graph with $1+s + r$
    vertices and edge set $\{\{i, j\} \colon 1 \leq i < j \leq 1+s \}$. Observe that a graph
    $G = \metaGraph{C}{A^1, \dots, A^{1+s}}$ contains $K_{p^t, p^t}$ as a subgraph whenever $C$
    contains an edge $\{u, v\}$ for $1 \leq u, v \leq 1+s$. This is because $G$ contains two
    components of size $p^t$ (or $p^{t+1}$) that are completely connected to each other.

    \begin{claim}\label{claim:k_s:is_r:z}
        It holds $\Phi(\metaGraph{(K_{1+s} \UnionGraph \IS_r)}{\IS_{p^t}, \dots, \IS_{p^t}}) = z$.
    \end{claim}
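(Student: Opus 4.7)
The plan is to invoke \cref{lem:ae:nonvanishing:sub-points} together with the standing assumption that $\Phi$ is not concentrated on $k$ with respect to $p^t$, and to induct over the sub-point lattice of $H_0 \coloneqq \metaGraph{(K_{1+s} \UnionGraph \IS_r)}{\IS_{p^t}, \dots, \IS_{p^t}}$ viewed inside $\fp(\Gamma, K_k)$ (so that the first inhabiting $\IS_{p^t}$ is in fact the $p^{t+1}$-vertex fixed point prescribed by $\Gamma$). Concretely, I will show by strong induction on the number of edges that every sub-point $H$ of $H_0$ satisfies $\Phi(H) = z$, and then specialize to $H = H_0$.

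First I will describe the sub-point lattice. Because each block of $H_0$ is an independent set---and an independent set is the unique edge-minimal fixed point of its Sylow group acting on the corresponding clique---any $\Gamma$-equivariant edge-removal from $H_0$ may act only on the inter-block edges. Hence, by \cref{theo:prod:fixed:points}, the sub-points of $H_0$ are exactly the graphs $\metaGraph{C'}{\IS_{p^t}, \dots, \IS_{p^t}}$ indexed by edge-subgraphs $C' \subseteq K_{1+s} \UnionGraph \IS_r$. The minimum element of this lattice is $C' = \IS_{1+s+r}$, giving $H = \IS_k$ with $\Phi(H) = z$ by definition, which supplies the base case.

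For the inductive step, fix a sub-point $H$ with $C' \neq \IS_{1+s+r}$ and suppose that $\Phi(H') = z$ for every proper sub-point $H'$ of $H$. Since $C'$ contains at least one edge between its first $1+s$ vertices, the observation made just before the claim shows that $H$ contains $K_{p^t, p^t}$ as a subgraph; combined with non-concentration, this forces $\ae{\Phi}{H} \equiv_p 0$. On the other hand, if $\Phi(H) \neq z$, then \cref{lem:ae:nonvanishing:sub-points} (applicable because $c < p$) would give $\ae{\Phi}{H} \equiv_p (-1)^{|E(H)|}(z - \Phi(H)) \not\equiv_p 0$, a direct contradiction. Hence $\Phi(H) = z$, and applying this to $H = H_0$ closes the argument.

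The principal technical delicacy is the precise description of the sub-point lattice; this reduces to noting that the product-of-Sylow structure of $\Gamma$ forces each block of every sub-point to remain a $\sylow$-fixed edge-subgraph of the corresponding $\IS$-block, and that the latter admits no further edge-removal since it already has no edges. Everything else is bookkeeping: the induction hypothesis propagates upwards along the lattice because the clause $c<p$ in our setup makes \cref{lem:ae:nonvanishing:sub-points} available at every step.
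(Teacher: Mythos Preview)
Your proposal is correct and follows essentially the same approach as the paper: the paper argues by picking a minimal $C \subseteq K_{1+s}$ with $\Phi(\metaGraph{(C \UnionGraph \IS_r)}{\IS_{p^t},\dots,\IS_{p^t}})<z$ and derives a contradiction via \cref{lem:ae:nonvanishing:sub-points} and non-concentration, which is just the minimal-counterexample formulation of your strong induction over the sub-point lattice (described in the paper as \cref{rem:sub-points:IS}). The only cosmetic difference is the order in which the biclique observation and the lemma are invoked.
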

    \begin{claimproof}
        Assume that $\Phi(\metaGraph{(K_{1+s} \UnionGraph \IS_r)}{\IS_{p^t}, \dots, \IS_{p^t}}) < z$. Then, there is a graph
        $C \in \graphs{1+s}$ with $\Phi(\metaGraph{(C \UnionGraph \IS_r)}{\IS_{p^t}, \dots, \IS_{p^t}}) < z$ and $C$ has minimal
        number of edges amongst all graphs $C'$ with $\Phi(\metaGraph{(C' \UnionGraph \IS_r)}{\IS_{p^t}, \dots, \IS_{p^t}}) < z$.
        Observe that $C$ contains at least one edge since $\Phi(\metaGraph{\IS_{1+s + r}}{\IS_{p^t}, \dots, \IS_{p^t}}) = z$ and $1+s \geq 2$.
        Further, observe that all proper sub-points of $\metaGraph{(C \UnionGraph \IS_r)}{\IS_{p^t}, \dots, \IS_{p^t}}$ are fixed points of
        the form $\metaGraph{(C' \UnionGraph \IS_r)}{\IS_{p^t}, \dots, \IS_{p^t}}$, where $C'$ is an proper edge-subgraph of $C$ (see \cref{rem:sub-points:IS}).
        This means that $\Phi(\metaGraph{(C' \UnionGraph \IS_r)}{\IS_{p^t}, \dots, \IS_{p^t}}) = z$ for all proper edge-subgraphs. This
        allows us to apply \cref{lem:ae:nonvanishing:sub-points} to show that
        $\Phi(\metaGraph{(C \UnionGraph \IS_r)}{\IS_{p^t}, \dots, \IS_{p^t}})$ is nonvanishing mod $p$. Lastly, observe that
        $\metaGraph{(C \UnionGraph \IS_r)}{\IS_{p^t}, \dots, \IS_{p^t}}$ contains $K_{p^t, p^t}$ as a subgraph, meaning that $\Phi$ is
        concentrated on $k$ with respect to $p^t$. This implies that $\Phi(\metaGraph{(K_{1+s} \UnionGraph \IS_r)}{\IS_{p^t}, \dots, \IS_{p^t}}) = z$
        since we assume that $\Phi$ is not concentrated on $k$ with respect to $p^t$.
    \end{claimproof}

   Next, we use the following result to strengthen \cref{claim:k_s:is_r:z}.

    \begin{claim}\label{claim:complete:meta:graph}
        If $\Phi(\metaGraph{C}{\IS_{p^{t+1}}, \IS_{p^t}, \dots, \IS_{p^t}}) = z$; then either $\Phi(\metaGraph{C}{K_{p^{t+1}}, \IS_{p^t} \dots, \IS_{p^t}}) = z$;
        or $\Phi$ is reducible on $k$ with respect to $p^{t+1}$.
    \end{claim}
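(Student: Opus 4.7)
The plan is to use the hypothesis to exhibit, in the negative case, a concrete witness for reducibility. Assume $\Phi(\metaGraph{C}{\IS_{p^{t+1}}, \IS_{p^t}, \dots, \IS_{p^t}}) = z$ as given. If also $\Phi(\metaGraph{C}{K_{p^{t+1}}, \IS_{p^t}, \dots, \IS_{p^t}}) = z$, there is nothing to prove, so suppose the latter value is different from $z$. Since $\metaGraph{C}{\IS_{p^{t+1}}, \IS_{p^t}, \dots, \IS_{p^t}}$ is an edge-subgraph of $\metaGraph{C}{K_{p^{t+1}}, \IS_{p^t}, \dots, \IS_{p^t}}$, edge-monotonicity forces $\Phi(\metaGraph{C}{K_{p^{t+1}}, \IS_{p^t}, \dots, \IS_{p^t}}) < z$.

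Now define the auxiliary graph parameter on $p^{t+1}$-vertex graphs
\[
    \Tilde{\Phi}(G) \coloneqq \Phi\bigl(\metaGraph{C}{G, \IS_{p^t}, \dots, \IS_{p^t}, \IS_1, \dots, \IS_1}\bigr),
\]
where the arguments after $G$ consist of $s$ copies of $\IS_{p^t}$ followed by $r$ copies of $\IS_1$ (matching our shorthand convention for $\metaGraph{C}{\cdot}$). Evaluating at $G = \IS_{p^{t+1}}$ and at $G = K_{p^{t+1}}$ yields values $z$ and something strictly less than $z$, respectively. Hence $\Tilde{\Phi}$ is nontrivial on $p^{t+1}$-vertex graphs. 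This is precisely the condition required by the definition of reducibility, with witnesses $C \in \graphs{1 + s + r}$ and $H_2, \dots, H_{1+s+r}$ chosen to be the $s$ copies of $\IS_{p^t}$ together with the $r$ copies of $\IS_1$.

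To conclude that $\Phi$ is \emph{reducible on $k$ with respect to $p^{t+1}$}, two bookkeeping checks remain. First, the codomain restriction $\codo{\Phi}{k} < p$ is part of the overall hypothesis of \cref{lem:param:large:biclique}, so it is available for free. Second, the vertex counts must add up to $k$: we have $\sum_{i=2}^{1+s+r} |V(H_i)| = s \cdot p^t + r$, and by the decomposition $k = p^{t+1} + s \cdot p^t + r$ set up at the start of the proof of \cref{lem:param:large:biclique}, this gives exactly $p^{t+1} + \sum_i |V(H_i)| = k$, as required.

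I do not expect any technical obstacle here: the entire claim is essentially a one-line observation, namely that plugging a graph into the first slot of the inhabited-graph construction turns any nontrivial behaviour in that slot into nontrivial behaviour of $\Phi$ on $k$ vertices, which is then packaged as the reducibility witness. The only subtlety is matching the bookkeeping (number of $H_i$, vertex counts, and codomain bound) to the precise form of the definition, and this is straightforward given the setup made earlier in the proof of \cref{lem:param:large:biclique}.
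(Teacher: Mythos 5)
Your proposal is correct and follows essentially the same route as the paper: both define the auxiliary parameter $\Tilde{\Phi}(G) = \Phi(\metaGraph{C}{G, \IS_{p^t}, \dots, \IS_{p^t}})$ and exhibit its nontriviality on $p^{t+1}$ by comparing the values at $\IS_{p^{t+1}}$ and at the first slot filled with something denser (the paper takes a general fixed point $A$ with value below $z$, you take $K_{p^{t+1}}$ directly via edge-monotonicity — the same witness). Your additional bookkeeping on the codomain bound and vertex counts is accurate and matches the setup of \cref{lem:param:large:biclique}.
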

    \begin{claimproof}
        Assume otherwise, then there exists an $A \in \fp(\sylow_{p^{t+1}}, K_{p^{t+1}})$
        such that $\Phi(\metaGraph{C}{A, \IS_{p^t}, \dots, \IS_{p^t}}) < z$. We define
        $\Tilde{\Phi}(G) \coloneq \Phi(\metaGraph{C}{G, \IS_{p^t}, \dots, \IS_{p^t}})$.
        Observe that $\Tilde{\Phi}(\IS_{p^{t+1}}) = \Phi(\metaGraph{C}{\IS_{p^{t+1}}, \IS_{p^t}, \dots, \IS_{p^t}}) = z$
        and $\Tilde{\Phi}(A) = \Phi(\metaGraph{C}{A, \IS_{p^t}, \dots, \IS_{p^t}}) < z$.
        Thus $\Tilde{\Phi}$ is nontrivial on $p^{t+1}$ which shows that $\Phi$ is reducible on $k$ with respect to $p^{t+1}$.
    \end{claimproof}

    \cref{claim:complete:meta:graph} shows $\Phi(\metaGraph{(K_{1+s} \UnionGraph \IS_r)}{K_{p^{t+1}}, \IS_{p^t}, \dots, \IS_{p^t}}) = z$.
    Let $D \in \graphs{1+s + r}$ be a graph such that each vertex is connected to each other vertex
    expect for vertex $2$ which is isolated.

    \begin{claim}\label{eq:isolated:1}
        If $\Phi(\metaGraph{(K_{1+s} \UnionGraph \IS_r)}{K_{p^{t+1}}, \IS_{p^t}, \dots, \IS_{p^t}}) = z$ then
        $\Phi(\metaGraph{D}{\IS_{p^{t+1}}, \IS_{p^t} \dots, \IS_{p^t}}) = z$
    \end{claim}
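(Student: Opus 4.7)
The plan is to prove the claim by contradiction. Suppose that
$\Phi(\metaGraph{D}{\IS_{p^{t+1}}, \IS_{p^t}, \dots, \IS_{p^t}}) < z$,
and recall that $\Phi$ is assumed throughout the proof of \cref{lem:param:large:biclique} to be neither concentrated on $k$ with respect to $p^t$ nor reducible on $k$ with respect to $p^{t+1}$. I would first observe that the sub-points of $\metaGraph{D}{\IS_{p^{t+1}}, \IS_{p^t}, \dots, \IS_{p^t}}$ with respect to the group $\Gamma$ are exactly the graphs $\metaGraph{D'}{\IS_{p^{t+1}}, \IS_{p^t}, \dots, \IS_{p^t}}$ for edge-subgraphs $D' \subseteq D$, since every block is already an independent set and hence admits no internal edge-deletion (see \cref{rem:sub-points:IS} as used in \cref{claim:k_s:is_r:z}).

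Next, I would mimic the minimality argument of \cref{claim:k_s:is_r:z}: choose $D^* \subseteq D$ with the minimum number of edges subject to $\Phi(\metaGraph{D^*}{\IS_{p^{t+1}}, \IS_{p^t}, \dots}) < z$. Such a $D^*$ exists because $D$ itself is a candidate by our contradiction hypothesis. By minimality, every proper sub-point of $F^* \coloneqq \metaGraph{D^*}{\IS_{p^{t+1}}, \IS_{p^t}, \dots}$ has value $z$. Since the codomain satisfies $c < p$, applying \cref{lem:ae:nonvanishing:sub-points} yields $\ae{\Phi}{F^*} \not\equiv_p 0$, so $F^*$ is nonvanishing modulo $p$.

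To contradict the assumption that $\Phi$ is not concentrated on $k$ with respect to $p^t$, I would show that $F^*$ contains $K_{p^t, p^t}$ as a subgraph. Because every block of $F^*$ is an independent set, this reduces to showing that $D^*$ contains an edge $\{i,j\}$ with both endpoints in the ``big'' index set $\{1, 3, 4, \dots, 1+s\}$ (the indices of blocks of size at least $p^t$): such an edge produces a complete bipartite $K_{p^t, p^t}$ between the two corresponding blocks of $F^*$, yielding concentration and the desired contradiction.

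The main obstacle is therefore to rule out the alternative where $D^*$ consists only of edges incident to at least one singleton index in $\{2+s, \dots, 1+s+r\}$, for then $F^*$ would not contain $K_{p^t, p^t}$. To close this gap, I would use the hypothesis $\Phi(\metaGraph{(K_{1+s} \UnionGraph \IS_r)}{K_{p^{t+1}}, \IS_{p^t}, \dots}) = z$ together with \cref{claim:k_s:is_r:z} to argue that a ``singleton-only'' $D^*$ yields a meta-graph whose structure is comparable (via an isomorphism, or via edge-monotonicity through a suitably constructed common fixed point) with a sub-point of the hypothesis graph that is already known to attain value $z$. Once this comparability is established, $\Phi(F^*) = z$ would follow, contradicting the choice of $D^*$ and finishing the proof. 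This final case analysis, exploiting the common join-plus-disjoint-$\IS$ structure of the hypothesis graph and of $F^*$, is the most delicate step and I expect it to be where almost all of the technical work lies.
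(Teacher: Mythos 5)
Your proposal has a genuine gap, and it sits exactly where you say you expect "almost all of the technical work" to lie. The contradiction/minimality/alternating-enumerator machinery only disposes of the case where the minimal offending graph $D^*$ contains an edge between two indices in $\{1,\dots,1+s\}$ (there you do get $K_{p^t,p^t}$ and a contradiction with non-concentration, exactly as in \cref{claim:k_s:is_r:z}). But the residual case --- $D^*$ whose edges are all incident to singleton indices in $\{2+s,\dots,1+s+r\}$ --- cannot be handled by "comparability with a sub-point of the hypothesis graph": every sub-point of $\metaGraph{(K_{1+s}\UnionGraph \IS_r)}{K_{p^{t+1}},\IS_{p^t},\dots,\IS_{p^t}}$ has the form $\metaGraph{C'}{A,\IS_{p^t},\dots}$ with $C'$ an edge-subgraph of $K_{1+s}\UnionGraph\IS_r$, so in all of them the singleton blocks remain isolated, whereas in $F^*$ they are not. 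What is actually needed is a subgraph \emph{embedding} of $\metaGraph{D^*}{\IS_{p^{t+1}},\IS_{p^t},\dots}$ into the hypothesis graph, and that embedding is the entire content of the claim; your write-up gestures at it but does not produce it.

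For comparison, the paper's proof is a direct two-step monotonicity argument with no contradiction and no alternating enumerator: it observes that $\metaGraph{D}{\IS_{p^{t+1}},\IS_{p^t},\dots}$ is isomorphic to (a subgraph of) $\IS_{p^t}\UnionGraph\bigl(\metaGraph{K_{1+s}}{\IS_{p^{t+1}},\IS_{p^t},\dots,\IS_{p^t},K_r}\bigr)$ --- the mutually joined singletons form a $K_r$ --- and then embeds this into $\IS_r\UnionGraph\bigl(\metaGraph{K_{1+s}}{K_{p^{t+1}},\IS_{p^t},\dots,\IS_{p^t}}\bigr)$ by absorbing the $K_r$ into the clique block $K_{p^{t+1}}$ and trading the isolated $\IS_{p^t}$ against the isolated $\IS_r$, which is where $r<p^t$ is used. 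Edge-monotonicity then gives $\Phi(F)\ge z$, and $z=\Phi(\IS_k)$ is the maximum, so $\Phi(F)=z$. This is why the hypothesis inhabits the first block with $K_{p^{t+1}}$ rather than $\IS_{p^{t+1}}$; any completion of your argument would have to reproduce this embedding, so the detour through \cref{lem:ae:nonvanishing:sub-points} buys nothing here.
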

    \begin{claimproof}
        First, observe that the graph $G \coloneqq \IS_r \UnionGraph \left(\metaGraph{K_{1 + s}}{K_{p^{t+1}}, \IS_{p^t}, \dots, \IS_{p^t}}\right)$
        is isomorphic to the graph
        $G' \coloneqq  \metaGraph{(K_{1+s} \UnionGraph \IS_r)}{K_{p^{t+1}}, \IS_{p^t}, \dots, \IS_{p^t}}$. Second, observe that
        $F \coloneqq  \metaGraph{D}{\IS_{p^{t+1}}, \IS_{p^t} \dots, \IS_{p^t}}$ is isomorphic to a subgraph of
        $F' \coloneqq  \IS_{p^t} \UnionGraph \left(\metaGraph{K_{1+s}}{\IS_{p^{t+1}}, \IS_{p^t}, \dots, \IS_{p^t}, K_r}\right)$.
        Lastly, observe that $F'$ is a subgraph of $G$ since $p^t > r$. This implies
        $\Phi(F) \geq \Phi(F') \geq \Phi(G) = \Phi(G') = z$ which proves the claim.
    \end{claimproof}

    \cref{eq:isolated:1} is the base case of the proof of \cref{claim:clique:meta} which we prove next.

    \begin{claim}\label{claim:clique:meta}
        If $\Phi(\metaGraph{(K_{1+s} \UnionGraph \IS_r)}{K_{p^{t+1}}, \IS_{p^t} \dots, \IS_{p^t}}) = z$
        then $\Phi(\metaGraph{K_{1 + s + r}}{\IS_{p^{t+1}}, \IS_{p^t}, \dots, \IS_{p^t}}) = z$.
    \end{claim}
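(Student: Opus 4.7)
The upper bound \(\Phi(\metaGraph{K_{1+s+r}}{\IS_{p^{t+1}}, \IS_{p^t}, \dots, \IS_{p^t}}) \leq z\) is immediate from edge-monotonicity, since \(\IS_k\) is an edge-subgraph of \(\metaGraph{K_{1+s+r}}{\IS_{p^{t+1}}, \IS_{p^t}, \dots, \IS_{p^t}}\). For the lower bound, I would argue by contradiction and assume \(\Phi(\metaGraph{K_{1+s+r}}{\IS_{p^{t+1}}, \IS_{p^t}, \dots, \IS_{p^t}}) < z\). Write \(G \coloneqq \metaGraph{K_{1+s+r}}{\IS_{p^{t+1}}, \IS_{p^t}, \dots, \IS_{p^t}}\) and \(G' \coloneqq \metaGraph{(K_{1+s} \UnionGraph \IS_r)}{K_{p^{t+1}}, \IS_{p^t}, \dots, \IS_{p^t}}\); the hypothesis says \(\Phi(G') = z\).

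My plan is to run a ``bundling plus embedding'' argument in the spirit of the proof of \cref{eq:isolated:1}. First, \(G\) is isomorphic to \(\tilde G \coloneqq \metaGraph{K_{s+2}}{\IS_{p^{t+1}}, \IS_{p^t}, \dots, \IS_{p^t}, K_r}\): the \(r\) singleton blocks of \(G\) are pairwise adjacent via the \(K_{1+s+r}\) meta-structure, so merging them into a single \(K_r\) block repackages the \(\binom{r}{2}\) inter-singleton edges as internal \(K_r\) edges while preserving every cross-block edge. Then I plan to exhibit an isomorphic edge-subgraph embedding \(\tilde G \hookrightarrow G'\), paralleling the \(F' \subseteq G\) step from the proof of \cref{eq:isolated:1}: place \(\tilde G\)'s \(K_r\) block into \(r\) vertices of the \(K_{p^{t+1}}\) block of \(G'\) (valid since \(r < p^t \leq p^{t+1}\), with the clique supplying the needed internal \(K_r\) edges); map the \(s\) blocks \(\IS_{p^t}\) of \(\tilde G\) bijectively to the matching \(\IS_{p^t}\) blocks of \(G'\); and distribute \(\tilde G\)'s \(\IS_{p^{t+1}}\) block between the remaining \(p^{t+1} - r\) vertices of \(G'\)'s \(K_{p^{t+1}}\) and the \(r\) isolated singletons of \(G'\). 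Once the embedding is established, edge-monotonicity gives \(\Phi(G) = \Phi(\tilde G) \geq \Phi(G') = z\), contradicting the assumption.

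The main obstacle is that the naive embedding loses the \(r \cdot r\) edges between the vertices of \(\tilde G\)'s \(\IS_{p^{t+1}}\) placed on \(G'\)'s isolated singletons and the vertices of \(\tilde G\)'s \(K_r\) block placed inside \(G'\)'s \(K_{p^{t+1}}\): those pairs are non-adjacent in \(G'\) even though the \(K_{s+2}\) meta forces them to be adjacent in \(\tilde G\). Closing this gap needs either a more delicate rerouting (using that \(\IS_{p^{t+1}}\) has no internal edges, so which vertices of the block land on the isolated singletons is flexible) or, as a fallback, an application of \cref{lem:ae:nonvanishing:sub-points} directly to \(G\) after first showing that every proper sub-point \(\metaGraph{C}{\IS_{p^{t+1}}, \IS_{p^t}, \dots, \IS_1}\) with \(C \subsetneq K_{1+s+r}\) also has value \(z\). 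For \(C\) with an isolated vertex this would follow from \cref{eq:isolated:1} and its variants obtained by permuting the interchangeable \(\IS_{p^t}\) blocks, combined with edge-monotonicity; for \(C\) without an isolated vertex a minimal-counterexample argument together with the fact that \(G\) contains \(K_{p^t, p^t}\) between any two of its \(\IS_{p^t}\) blocks would yield concentration of \(\Phi\) on \(k\) with respect to \(p^t\), contradicting the standing assumption of \cref{lem:param:large:biclique}.
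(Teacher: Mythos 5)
Your high-level framing is right (the bound $\Phi(\metaGraph{K_{1+s+r}}{\IS_{p^{t+1}}, \IS_{p^t}, \dots, \IS_{p^t}}) \leq z$ is immediate from edge-monotonicity, and all content lies in the lower bound), and you correctly locate the obstacle, but neither of your two routes closes it. The primary route --- embedding $G \cong \metaGraph{K_{s+2}}{\IS_{p^{t+1}}, \IS_{p^t}, \dots, \IS_{p^t}, K_r}$ into $G' = \metaGraph{(K_{1+s} \UnionGraph \IS_r)}{K_{p^{t+1}}, \IS_{p^t}, \dots, \IS_{p^t}}$ as a spanning edge-subgraph --- is not merely delicate but impossible whenever $r \geq 1$: every vertex of $G$ has positive degree (at least $s p^t$, since the meta-graph is complete and $s \geq 1$), whereas $G'$ has $r$ isolated vertices, and both graphs have exactly $k$ vertices, so no rerouting of the $\IS_{p^{t+1}}$ block can produce such an embedding. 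This is why the paper never embeds into $G'$ itself: its embeddings (in \cref{eq:isolated:1} and in the induction step) always target auxiliary graphs $\metaGraph{F}{K_{p^{t+1}}, \IS_{p^t}, \dots}$ in which a big block of the meta-graph has been isolated or the first block upgraded to a clique, and whose value $z$ must first be established by other means.

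Your fallback (show that all proper sub-points of $G$ have value $z$, then apply \cref{lem:ae:nonvanishing:sub-points} and derive concentration) is the right spirit --- it is morally what the paper does --- but the case split ``$C$ has an isolated vertex'' versus ``$C$ has none'' does not carry the argument. A meta-graph $C \subsetneq K_{1+s+r}$ with no isolated vertex can still have every edge incident to a singleton position (for instance, a star centered at a singleton); then $\metaGraph{C}{\IS_{p^{t+1}}, \IS_{p^t}, \dots}$ contains no $K_{p^t, p^t}$, so the minimal-counterexample argument yields no concentration contradiction. These configurations --- edges running from big blocks only to singletons --- are exactly the hard case, and the paper spends the bulk of its proof on them: it inducts on $\deg_C(2)$, distinguishes whether vertex $2$ has an edge to another big block (where concentration applies) or only to singletons, invokes \cref{claim:complete:meta:graph} (that is, non-reducibility) to upgrade the first block to $K_{p^{t+1}}$, and only then performs the vertex-swapping embedding $\pi$ that trades the $r$ singleton positions against $r$ vertices of the $K_{p^{t+1}}$ block. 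Without an argument covering that case, the proof has a genuine gap.
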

    \begin{claimproof}
        Recall that the vertices of a graph $G \in \graphs{1 + s + r}$ are labeled from 1 to $1 + s + r$.
        For a graph $G \in \graphs{1 + s + r}$, we write $\deg_G(2)$ for
        the degree of the second vertex in $G$. We show the statement via induction by using the following induction hypothesis.
        For all $n \in \{0, \dots, s + r\}$ we get $\Phi(\metaGraph{G}{\IS_{p^{t+1}}, \IS_{p^t}, \dots, \IS_{p^t}}) = z$
        for all graphs $G \in \graphs{1 + s + r}$ with $\deg_G(2) \leq n$,

        The base case of the induction is already shown by \cref{eq:isolated:1} since for each graph $G$
        with $\deg_G(2) = 0$
        the graph $\metaGraph{G}{\IS_{p^t}, \dots, \IS_{p^t}}$ is a subgraph of
        $\metaGraph{D}{\IS_{p^{t+1}}, \IS_{p^t} \dots, \IS_{p^t}}$.

        For the induction step, we can assume $\Phi(\metaGraph{G}{\IS_{p^{t+1}}, \IS_{p^t}, \dots, \IS_{p^t}}) = z$ for all graphs
        $G \in \graphs{1+s + r}$ with $\deg_G(2) \leq n$. Our goal is to show the statement for all graphs $G$ with
        $\deg_G(2) = n+1$.
        \begin{citemize}
            \item In the first step, we consider the set $\mathcal{F}$ of all graphs
            in $\graphs{1+s + r}$ with $\deg_G(2) = n+1$ that contain at least
            one edge of the form $\{2, g\}$ for some $g \leq 1+s$. We show that
            $\Phi(\metaGraph{F}{\IS_{p^{t+1}}, \IS_{p^t}, \dots, \IS_{p^t}}) = z$ for all graphs $F \in \mathcal{F}$.
            Assume otherwise and $\Phi(\metaGraph{F}{\IS_{p^{t+1}}, \IS_{p^t}, \dots, \IS_{p^t}}) < z$
            for some graph $F \in \mathcal{F}$, then we can find a graph $F \in \mathcal{F}$ with minimal
            number of edges and $\Phi(\metaGraph{F}{\IS_{p^{t+1}}, \IS_{p^t}, \dots, \IS_{p^t}}) < z$. We show that
            all proper sub-points of $\metaGraph{F}{\IS_{p^{t+1}}, \IS_{p^t}, \dots, \IS_{p^t}}$ evaluate to
            $z$ on $\Phi$. According to \cref{rem:sub-points:IS}, these proper sub-points
            have the form $\metaGraph{F'}{\IS_{p^{t+1}}, \IS_{p^t}, \dots, \IS_{p^t}}$, where $F'$ is a proper edge-subgraph of $F$.

            First, let $F'$ be a proper edge-subgraph with $\deg_{F'}(2) = n$. In this case, our induction hypothesis ensures that
            $\Phi(\metaGraph{F'}{\IS_{p^{t+1}}, \IS_{p^t}, \dots, \IS_{p^t}}) = z$. Otherwise, $\deg_{F'}(2) = n+1$ which implies that
            $F'$ still contains the edge $\{2, g\}$, thus $F' \in \mathcal{F}$. Because $F'$ has strictly fewer edges than $F$,
            we obtain $\Phi(\metaGraph{F'}{\IS_{p^{t+1}}, \IS_{p^t}, \dots, \IS_{p^t}}) = z$.

            This implies that $\Phi(\metaGraph{F'}{\IS_{p^{t+1}}, \IS_{p^t}, \dots, \IS_{p^t}}) = z$ for all proper sub-points, thus
            \cref{lem:ae:nonvanishing:sub-points} shows that $\Phi(\metaGraph{F}{\IS_{p^{t+1}}, \IS_{p^t}, \dots, \IS_{p^t}})$
            is nonvanishing. Lastly, Observe that $\metaGraph{F}{\IS_{p^{t+1}}, \IS_{p^t}, \dots, \IS_{p^t}}$
            contains $K_{p^t, p^t}$ as a subgraph since $F$ contains the edge $\{2, g\}$ for $g \leq 1+s$. This means that $\Phi$
            is concentrated on $k$ with respect to $p^t$. However, this cannot happen since we assume that $\Phi$ is not concentrated
            on $k$ with respect to $p^t$, hence $\Phi(\metaGraph{F}{\IS_{p^{t+1}}, \IS_{p^t}, \dots, \IS_{p^t}}) = z$
            for all graphs $F \in \mathcal{F}$. Further, we get $\Phi(\metaGraph{F}{K_{p^{t+1}}, \IS_{p^t}, \dots, \IS_{p^t}}) = z$
            for all $F \in \mathcal{F}$. Otherwise, $\Phi$ is reducible on $k$ with respect to $p^{t+1}$
            due to \cref{claim:complete:meta:graph}.

            \item Due to the first case, we assume that $\Phi(\metaGraph{F}{K_{p^{t+1}}, \IS_{p^t} \dots, \IS_{p^t}}) = z$ for all $F \in \mathcal{F}$. In the following, we consider a graph $G$ with $\deg_{G}(2) = n+1$ that is not in $\mathcal{F}$ and show that
            $\metaGraph{G}{\IS_{p^t}, \dots, \IS_{p^t}}$ is isomorphic to a subgraph of
            $\metaGraph{F}{K_{p^{t+1}}, \IS_{p^t} \dots, \IS_{p^t}}$  for some $F \in \mathcal{F}$.
            This implies $\Phi(\metaGraph{G}{\IS_{p^t}, \dots, \IS_{p^t}}) = z$ as $\Phi$ is edge-monotone.

            First, observe that we can find an edge $\{2, g\}$ with $s + 2 \leq g$ because vertex $2$ has
            at least one adjacent edge in $G$. Next, we construct a graph
            $F \in \graphs{1 + s + r}$ with
            \[E(F) := \{\{1, 2\}\} \; \cup \; \{\{2, u\} : \{2, u\} \in E(G)\} \setminus \{\{2, g\}\} \; \cup \; \{\{u, v\} : u \neq 2 \neq v, \, u \neq v\}. \]
            We obtain $F$ by adding the edge $\{1, 2\}$, removing the edge $\{2, g\}$ and adding all possible
            edges $\{u, v\}$ with $u \neq 1 \neq v$. Observe that $\deg_{F}(2) = n+1$ and
            that $F \in \mathcal{F}$. We show that there is a bijective function
            $\pi \colon V(\metaGraph{G}{\IS_{p^t}, \dots, \IS_{p^t}}) \to V(\metaGraph{G}{\IS_{p^t}, \dots, \IS_{p^t}})$ such that
            $\pi(\metaGraph{G}{\IS_{p^t}, \dots, \IS_{p^t}})$ is a subgraph of $\metaGraph{F}{K_{p^{t+1}}, \IS_{p^t}, \dots, \IS_{p^t}}$.
            Note that the vertex set of $\metaGraph{G}{\IS_{p^t}, \dots, \IS_{p^t}}$ is equal to
            $\{1\} \times \{1, \dots, p^{t+1}\} \cup \{2, \dots, 1+s\} \times \{1, \dots, p^t\} \cup \{2+s, \dots, 1+s+r\} \times \{1\}$. Next, we define the permutation
            \begin{align*}
                \pi((x, y)) \coloneqq \begin{cases}
                    (1, x - s-1) &\text{if $x \geq s + 2$}, \\
                    (y + s+1, 1) &\text{if $x = 1$ and $y \leq r$}, \\
                    (x, y) &\text{otherwise}.
                \end{cases}
            \end{align*}
            Observe that this mapping is possible since $r < p^{t+1}$.
            Consult \cref{fig:pi} for an illustration of a demonstration of $\pi$.
            We show that $\pi(\metaGraph{G}{\IS_{p^t}, \dots, \IS_{p^t}})$ is a subgraph of
            $\metaGraph{F}{K_{p^{t+1}}, \IS_{p^t}, \dots, \IS_{p^t}}$ by considering the following
            case distinction. Let $\{(a, b), (x, y)\}$ be an edge in $\metaGraph{G}{\IS_{p^t}, \dots, \IS_{p^t}}$.
            Thus, $\{a, x\} \in E(G)$ which implies $a \neq x$.
            \begin{itemize}
                \item Case $a, x \geq 2 + s$: Observe that for $\pi((a, b))$ and $\pi((x, y))$ the first coordinate
                is equal to $1$ meaning that there are adjacent in  $\metaGraph{F}{K_{p^{t+1}}, \IS_{p^t}, \dots, \IS_{p^t}}$.

                \item Case $a, x \leq 1 + s$: This can only happen if $a \neq 2$ and $x \neq 2$ since
                otherwise $G \in \mathcal{F}$. We consider a case distinction
                \begin{itemize}
                    \item Case $a \neq 1$ or $x \neq 1$: With lose of generality, we assume that $x \neq 1$.
                    This implies $\pi((x, y)) = (x, y)$. Let $(c, d) \coloneqq \pi((a, b))$. Observe that
                    $c \neq x$ and $c \neq 2$ since $a \neq x$ and $a \neq 2$. Further, since $x \neq 2 \neq c$, we get that
                    $\{x, c\} \in F$ which implies that $\pi((a, b))$ and $\pi((x, y))$ are
                    adjacent in  $\metaGraph{F}{K_{p^{t+1}}, \IS_{p^t}, \dots, \IS_{p^t}}$.

                    \item Case $a = 1$ and $x = 1$: We get $\pi((a, b)) = (b+s+1, 1)$ and $\pi((x, y)) = (y+s+1, 1)$.
                    Since $\{b+s+1, y+s+1\} \in F$, we get that $\pi((a, b))$ and $\pi((x, y))$ are
                    adjacent in $\metaGraph{F}{K_{p^{t+1}}, \IS_{p^t}, \dots, \IS_{p^t}}$.
                \end{itemize}

                \item Case $a \leq 1 + s$ and $x \geq 2 + s$: We consider another case distinction:
                \begin{itemize}
                    \item Case $a \neq 1$: We get $\pi((a, b)) = (a, b)$ and $\pi((x, y)) = (1, x-s-1)$.
                    Observe that $\{a, 1\} \in F$ since $\{1, g\} \in F$, for all $2 \leq g \leq 1+s+r$.
                    This implies that $\pi((a, b))$ and $\pi((x, y))$ are
                    adjacent in  $\metaGraph{F}{K_{p^{t+1}}, \IS_{p^t}, \dots, \IS_{p^t}}$

                    \item Case $a = 1$: We get $\pi((a, b)) = (b+s+1, 1)$ and $\pi((x, y)) = (1, x-s-1)$.
                    Again, observe that $\{b+s+1, 1\} \in F$ since $\{1, g\} \in F$, for all $2 \leq g \leq 1+s+r$.
                    This implies that $\pi((a, b))$ and $\pi((x, y))$ are
                    adjacent in  $\metaGraph{F}{K_{p^{t+1}}, \IS_{p^t}, \dots, \IS_{p^t}}$
                \end{itemize}
                \item Case $a \geq 2 + s$ and $x \leq 2 + s$: Analog to the last case.
            \end{itemize}

            \begin{figure}[t]
                \centering
                \includegraphics[scale=1.15]{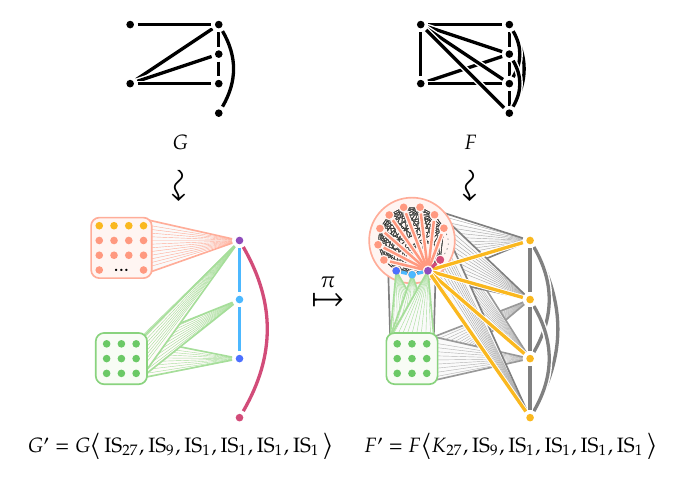}

                \caption{Graphs \(G\) and \(F\) that are transformed into the inhabited
                    graphs
                    $G' \coloneqq \metaGraph{G}{\IS_{27}, \IS_9, \IS_1, \IS_1, \IS_1, \IS_1}$ that is not in
                $\mathcal{F}$ and a graph $F' \coloneqq \metaGraph{F}{K_{27}, \IS_9, \IS_1, \IS_1, \IS_1, \IS_1}$ that is in
                $\mathcal{F}$.
                The graph \(G'\) is isomorphic to a subgraph of \(F'\); the isomorphism
                \(\pi\) maps vertices of \(G'\) to vertices of \(F'\) that have a common
                color; the colors of the edges demonstrate that \(\pi\) is indeed an
                isomorphism onto a subgraph of \(F'\). Gray edges of \(F'\) symbolize
                extra edges in \(F'\) that do not have corresponding edges in \(G'\).
                Note that the two orange graphs actually have 27 nodes.
            }\label{fig:pi}
            \end{figure}

            Hence, $\metaGraph{G}{\IS_{p^t}, \dots, \IS_{p^t}}$ is isomorphic to a subgraph of
            $\metaGraph{F}{K_{p^{t+1}}, \IS_{p^t} \dots, \IS_{p^t}}$. Since $\Phi$ is edge-monotone, we get that
            $\Phi(\metaGraph{G}{\IS_{p^t}, \dots, \IS_{p^t}}) = z$ which ends our induction step.
            \claimqedhere
        \end{citemize}
    \end{claimproof}

    Thus, we get $\Phi(\metaGraph{(K_{1 + s + r})}{\IS_{p^{t+1}}, \IS_{p^t}, \dots, \IS_{p^t}}) = z$.
    We show that this is enough to prove $\Phi(K_k) = z$.

    \begin{claim}
        If $\Phi(\metaGraph{(K_{1 + s + r})}{\IS_{p^{t+1}}, \IS_{p^t}, \dots, \IS_{p^t}}) = z$
        then $\Phi(\metaGraph{(K_{1 + s + r})}{K_{p^{t+1}}, K_{p^t}, \dots, K_{p^t}}) = K_{k} = z$.
    \end{claim}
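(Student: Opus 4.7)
The plan is to upgrade the $1+s$ non-trivial blocks of the inhabited graph from $\IS$ to $K$ one at a time, starting from the all-$\IS$ configuration (which has value~$z$ by hypothesis) and ending at $\metaGraph{K_{1+s+r}}{K_{p^{t+1}}, K_{p^t}, \ldots, K_{p^t}} = K_k$. The first upgrade is delivered directly by \cref{claim:complete:meta:graph} applied with $C \coloneqq K_{1+s+r}$: the hypothesis $\Phi(\metaGraph{K_{1+s+r}}{\IS_{p^{t+1}}, \IS_{p^t}, \ldots, \IS_{p^t}}) = z$ yields $\Phi(\metaGraph{K_{1+s+r}}{K_{p^{t+1}}, \IS_{p^t}, \ldots, \IS_{p^t}}) = z$, using that $\Phi$ is not reducible on $k$ with respect to $p^{t+1}$.

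For the remaining $s$ upgrades I would proceed by a second induction that mimics the minimal-counterexample pattern of \cref{claim:k_s:is_r:z} and \cref{claim:clique:meta}. Suppose the first $j$ of the $p^t$-blocks (for some $0 \leq j < s$) have already been upgraded to $K_{p^t}$ and the resulting graph still has value~$z$. Consider upgrading block $j+2$ next (that is, the $(j{+}1)$-st $p^t$-block). Toward a contradiction, assume the resulting graph has value less than~$z$; among all fixed points of $\Gamma$ obtained by keeping the already-upgraded blocks fixed but replacing the $K_{p^t}$ in position~$j+2$ by a Sylow fixed point of $K_{p^t}$ with value less than~$z$, pick one $F$ of minimal total edge count. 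By this minimality together with edge-monotonicity and the universal upper bound $\Phi \leq \Phi(\IS_k) = z$, every proper sub-point of $F$ (with respect to $\Gamma$) must have value exactly~$z$, so \cref{lem:ae:nonvanishing:sub-points} gives $\ae{\Phi}{F} \not\equiv_p 0$. Since the base graph $K_{1+s+r}$ is complete, block~$j+2$ is fully joined to block~$1$ of size~$p^{t+1}$, and hence $F$ contains $K_{p^t, p^t}$ as a subgraph. This contradicts the running assumption that $\Phi$ is not concentrated on~$k$ with respect to~$p^t$, completing the induction.

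The subtle point is that sub-points of $F$ under $\Gamma$ need not arise only from shrinking the single block in position~$j+2$: a sub-point may also have a strictly smaller base graph $C \subsetneq K_{1+s+r}$, or strictly smaller Sylow fixed points in other blocks. In each such subcase, however, the sub-point is dominated (as an edge-subgraph) by some inhabited graph whose value has already been shown to equal~$z$ in the preceding chain \cref{claim:k_s:is_r:z}, \cref{claim:complete:meta:graph}, \cref{eq:isolated:1}, \cref{claim:clique:meta}, or by the outer induction hypothesis; edge-monotonicity then forces the sub-point itself to have value~$z$. Iterating the inner induction $s$ times finally gives $\Phi(\metaGraph{K_{1+s+r}}{K_{p^{t+1}}, K_{p^t}, \ldots, K_{p^t}}) = \Phi(K_k) = z$ as required. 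The main obstacle will be exactly this book-keeping: carefully enumerating the kinds of sub-points of $F$ that can arise and verifying, case by case, that the previously established lemmas together with edge-monotonicity cover all of them.
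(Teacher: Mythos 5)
Your first step (applying \cref{claim:complete:meta:graph} with $C = K_{1+s+r}$ to upgrade the $p^{t+1}$-block) is exactly what the paper does. For the remaining blocks, however, your block-by-block induction via \cref{lem:ae:nonvanishing:sub-points} has a genuine gap, and it is not the "book-keeping" you flag at the end --- it is the core of the argument. Your minimal counterexample $F$ is minimal only over the choice of the Sylow fixed point $A$ placed in position $j+2$; all other data (the base graph $K_{1+s+r}$, the block $K_{p^{t+1}}$, the already-upgraded $K_{p^t}$-blocks) are held fixed. But a proper sub-point of $F$ under the product group may instead shrink the base graph or shrink one of those other non-independent blocks while keeping $A$ in position $j+2$. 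For such a sub-point $F'$, the only natural supergraph in your "preceding chain" is obtained by replacing $A$ by $K_{p^t}$ --- which is precisely the graph whose value you are trying to establish, so edge-monotonicity gives you nothing (note that to conclude $\Phi(F') = z$ you need $F'$ to be an edge-\emph{subgraph} of a graph of value $z$, since removing edges can only increase $\Phi$). Hence you cannot certify that \emph{all} proper sub-points of $F$ have value $z$, and \cref{lem:ae:nonvanishing:sub-points} does not apply. This is exactly why the paper's analogous enumerator arguments (\cref{claim:k_s:is_r:z}, \cref{claim:clique:meta}) are run only on configurations $\metaGraph{C}{\IS, \dots, \IS}$ where, by \cref{rem:sub-points:IS}, sub-points can only shrink the base graph.

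The paper avoids the enumerator entirely here and instead contradicts non-reducibility with respect to $p^{t+1}$ in one shot: it picks a minimal bad tuple $(A^2,\dots,A^{s+1})$ of Sylow fixed points with $\Phi(\metaGraph{K_{1+s+r}}{K_{p^{t+1}}, A^2,\dots,A^{s+1}}) < z$, reorders so that $A^2 \neq \IS_{p^t}$, and defines $\Tilde{\Phi}(G) \coloneqq \Phi(\metaGraph{K_{1+s+r}}{G, K_{p^t}, A^3,\dots,A^{s+1}})$ on $p^{t+1}$-vertex graphs. Since the base graph is complete, $\metaGraph{K_{1+s+r}}{\metaGraph{K_p}{A^2, K_{p^t},\dots,K_{p^t}}, K_{p^t}, A^3, \dots, A^{s+1}}$ is isomorphic to the bad graph, while replacing $A^2$ by $\IS_{p^t}$ gives value $z$ by minimality; so $\Tilde{\Phi}$ is nontrivial on $p^{t+1}$, i.e.\ $\Phi$ is reducible on $k$ with respect to $p^{t+1}$ --- contradiction. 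If you want to salvage your route you would need either this absorption trick or a substantially more careful choice of minimal counterexample; as written, the induction step does not go through.
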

    \begin{claimproof}
        First, we apply \cref{claim:complete:meta:graph} to our assumption
        \[\Phi(\metaGraph{(K_{1 + s + r})}{\IS_{p^{t+1}}, \IS_{p^t}, \dots, \IS_{p^t}}) =
        z\]
        and obtain that
        \[\Phi(\metaGraph{(K_{1 + s + r})}{K_{p^{t+1}}, \IS_{p^t}, \dots, \IS_{p^t}}) =
        z.\]

        Next, assume that the statement is false then there are fixed points $A^2, \dots, A^{s+1} \in \fp(\sylow_{p^t}, K_{p^t})$
        such that $\Phi(F) < z$, where $F \coloneqq \metaGraph{(K_{1 + s + r})}{K_{p^{t+1}}, A^2, \dots, A^{s+1}}$.
        Here, $A^2, \dots, A^{s+1}$ are chosen minimally.  We get that
        there is at least one fixed point $A^j$ with $A^j \neq \IS_{p^t}$. After reordering, we may assume
        $j = 2$.

        Next, we define the graph parameter
        \[\Tilde{\Phi}(G) = \Phi(\metaGraph{K_{1 + s + r}}{G, K_{p^t}, A^3, \dots,
        A^{s+1}}).\]
        It is now easy to see that $\Tilde{\Phi}(\metaGraph{K_p}{A^2, K_{p^t}, \dots, K_{p^t}})$ is isomorphic to $F$.
        However, this implies that
        \[\Tilde{\Phi}(\metaGraph{K_p}{A^2, K_{p^t}, \dots, K_{p^t}}) < z.\]
        Further, we have
        $\Tilde{\Phi}(\metaGraph{K_p}{\IS_{p^t}, K_{p^t}, \dots, K_{p^t}}) = z$ since $A^2, \dots, A^{s+1}$ were
        chosen minimally and $A^2 \neq \IS_{p^t}$. This shows that $\Tilde{\Phi}$ is nontrivial on $p^{t+1}$ which
        means that $\Phi$ is reducible on $k$ with respect to $p^{t+1}$. However, we assume that this is not the case,
        thus $\Phi(\metaGraph{(K_{1 + s + r})}{K_{p^{t+1}}, K_{p^t}, \dots, K_{p^t}}) = z$.
    \end{claimproof}

    This shows that $\Phi(K_k) = z$ which implies that $\Phi$ is trivial on $k$.
    Thus we get that if $\Phi$ is nontrivial on $k$ then $\Phi$ is either concentrated on $k$ with
    respect to $p^t$ or reducible on $k$ with respect to $p^{t+1}$.
\end{proof}

\subsection{Hardness and Tight Bounds for the General Case}\label{sec:main:theorem}

In this section, we prove \cref{theo:edge_mono}. The main idea is that
if $\Phi$ is nontrivial on $k$, \cref{lem:param:large:biclique}
ensures that either $\Phi$ is concentrated or reducible on $k$ with
respect to some prime power $p^t$ or $p^{t+1}$. If $\Phi$ is concentrated,
we use \cref{cor:tight:bounds} and we are done. Otherwise,
we use the reduction shown in \cref{lem:inc:exc} which allows us to use \cref{theo:edge_mono:prime:power:bicliques}.
The reduction constructs a graph parameter $\Scat{\Phi}{q}$
that is nontrivial on large prime powers.

\begin{definition}\label{11.5-2}
    Let \(\Phi \colon \graphs{} \to \Q\), denote an edge-monotone graph parameter and
    $q\colon \nat \to \nat$ a computable function
    and write \(\ScatSet{\Phi}{q} \) for the set of all integers $k$ such that \(\Phi\) is reducible on $k$ with respect to $q(k)$.
\end{definition}

The set $q(\ScatSet{\Phi}{q}) \coloneqq \{ q(k) : k \in \ScatSet{\Phi}{q} \}$ is the set of all values on
which $\Scat{\Phi}{q}$ is nontrivial. However, we need that the set $q(\ScatSet{\Phi}{q})$ is computable.

\begin{lemma}\label{lem:scatset:comp}
    Let $\Phi \colon \graphs{} \to \Q$ be a graph parameter and $q\colon \nat \to \nat$
    be a computable unbounded monotonically increasing function. Then
    the sets \(\ScatSet{\Phi}{q}\) and $q(\ScatSet{\Phi}{q})$ are computable. Further,
    there is an algorithm that for $x \in q(\ScatSet{\Phi}{q})$ computes $C \in \graphs{s}$ and
    $H_2, \dots, \dots, H_{s} \in \graphs{}$ such that the graph
    parameter $\reductionGraph{\Phi}{C}{H_2, \dots, H_{s}}(G) \coloneq \Phi(\metaGraph{C}{G, H_2 \dots, H_{s}})$ is
    nontrivial on $x$ with $k \coloneqq (x + \sum_{i = 2}^s |V(H_i)|) \in \ScatSet{\Phi}{q}$ and $k$
    is minimal under all values $j \in \ScatSet{\Phi}{q}$ with $q(j) = x$.
\end{lemma}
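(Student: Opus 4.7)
The plan is to show that all three computability claims follow from finite brute-force search, exploiting (i) computability of $\Phi$, (ii) the fact that for fixed $k$ only finitely many tuples $(C, H_2, \dots, H_s)$ with $k = q(k) + \sum_{i=2}^s |V(H_i)|$ need to be inspected, and (iii) that the preimage $q^{-1}(x)$ is finite and computable because $q$ is unbounded and monotonically increasing.

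First I would give a membership test for $\ScatSet{\Phi}{q}$. Given $k$, the algorithm evaluates $q(k)$ and checks whether it is a prime power $p^t$ with $p > \codo{\Phi}{k}$ (as required by the definition of reducibility); if not, reject. Otherwise set $c \coloneqq k - q(k)$ and enumerate all pairs consisting of a graph $C \in \graphs{s}$ (for each $s$ with $1 \leq s \leq c+1$) and a tuple $(H_2, \dots, H_s)$ of graphs with $\sum_{i=2}^{s} |V(H_i)| = c$. This is a finite search space. For each candidate $(C, H_2, \dots, H_s)$ the algorithm iterates over all graphs $G \in \graphs{q(k)}$ and evaluates $\Phi(\metaGraph{C}{G, H_2, \dots, H_s})$ using the assumed computability of $\Phi$; the candidate witnesses reducibility iff not all these values coincide. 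Accept $k$ if and only if some candidate witnesses reducibility. This decides $\ScatSet{\Phi}{q}$.

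Next I would decide $q(\ScatSet{\Phi}{q})$. Given $x$, I first compute the preimage $q^{-1}(x)$. Since $q$ is monotonically increasing and unbounded, I can iterate $k = 0, 1, 2, \dots$ evaluating $q(k)$ until $q(k) > x$; the result is a finite prefix that contains every $k$ with $q(k) = x$. Intersecting with $\{k \mid q(k) = x\}$ gives a finite, explicitly enumerated set $q^{-1}(x)$. Then $x \in q(\ScatSet{\Phi}{q})$ iff at least one $k \in q^{-1}(x)$ passes the membership test above; this is decidable in finite time.

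Finally I would produce the witness. Enumerate $q^{-1}(x)$ in increasing order of $k$, and for each $k$ run the membership test of the first step, but now returning the first pair $(C, H_2, \dots, H_s)$ that witnesses reducibility. Taking the first $k$ for which a witness is found automatically yields minimality of $k$ among elements of $\ScatSet{\Phi}{q}$ mapped to $x$ by $q$, and the tuple output is a valid instantiation of \cref{def:10.31-1} with $\reductionGraph{\Phi}{C}{H_2, \dots, H_s}$ nontrivial on $x$. There is no genuine obstacle beyond carefully bookkeeping that all enumeration spaces are finite; the only subtlety is establishing the finite preimage of $q$, which is immediate from unboundedness together with monotonicity.
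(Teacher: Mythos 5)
Your proposal is correct and follows essentially the same route as the paper's proof: a finite brute-force search over all tuples $(C, H_2, \dots, H_s)$ with the right vertex count to decide membership in $\ScatSet{\Phi}{q}$, and an increasing enumeration of $k$ (terminated once $q(k) > x$, using monotonicity and unboundedness) to decide $q(\ScatSet{\Phi}{q})$ and extract a minimal witness. The only cosmetic difference is that you explicitly fold the side condition $\codo{\Phi}{k} < p$ from the definition of reducibility into the membership test, which the paper leaves implicit.
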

\begin{proof}
    First, we construct an algorithm $\mathbb{A}$ to check if $k \in \ScatSet{\Phi}{q}$. For this, we compute the value $q(k)$ and check if it is a prime power.
    We now iterate over all $s \in [k]$, all graphs $C \in \graphs{s}$ and graphs
    $H_2, \dots, \dots, H_{s} \in \graphs{}$ with $k = q(k) + \sum_{i = 2}^s |V(H_i)|$. Observe
    that only finitely many graphs with this property exist. For each set of graphs, we check whether the
    graph parameter $\reductionGraph{\Phi}{C}{H_2, \dots, H_{s}}(G) \coloneq \Phi(\metaGraph{C}{G, H_2 \dots, H_{s}})$ is
    nontrivial on $q(k)$ and stop if this the case. In this case, we return the
    graphs and say that $k \in \ScatSet{\Phi}{q}$. Otherwise, we return that $k \notin \ScatSet{\Phi}{q}$.

    To check $x \in q(\ScatSet{\Phi}{q})$, we check for $k = 2$, $k = 3$, $\dots$ if $k \in \ScatSet{\Phi}{q}$ and
    $q(k) = x$. We stop if we find an $k$ with $k \in \ScatSet{\Phi}{q}$ and
    $q(k) = x$ and return that $x$ is in $q(\ScatSet{\Phi}{q})$. Further, we use algorithm $\mathbb{A}$ to compute
    the graph $C \in \graphs{s}$ and graphs $H_2, \dots, \dots, H_{s} \in \graphs{}$. Alternatively, we stop if we find the first value
    $k$ with $q(k) > x$ and return that $x$ is not in $q(\ScatSet{\Phi}{q})$.  Observe that one of these two cases will always occur after
    a finite amount of steps since $q$ is an unbounded monotonically increasing function.
\end{proof}

Next, we show how the graph parameter $\Scat{\Phi}{p}$ is constructed and that there is a reduction
from $\NUM{}\cpindsub(\Scat{\Phi}{p})$ to $\NUM{}\cpindsub(\Phi)$.

\begin{lemma}\label{lem:scat:graph:parameter}
   Let $\Phi \colon \graphs{} \to \Q$ be an edge monotone graph parameter and let
   $q\colon \nat \to \nat$ be a computable unbounded monotonically increasing function. Then there
   is a computable graph parameter $\Scat{\Phi}{q}$ with:
    \begin{itemize}
        \item $\Scat{\Phi}{q}$ is nontrivial exactly on $q(\ScatSet{\Phi}{q})$.
        \item $\Scat{\Phi}{q}$ is edge-monotone.
        \item There is parameterized Turing reduction from $\NUM{}\cpindsub(\Scat{\Phi}{q})$ to $\NUM{}\cpindsub(\Phi)$.
    \end{itemize}
\end{lemma}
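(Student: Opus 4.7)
The plan is to define $\Scat{\Phi}{q}$ pointwise on each vertex-size class, exploiting the computability result from \cref{lem:scatset:comp}. For a graph $G$ on $n$ vertices I distinguish two cases: if $n \notin q(\ScatSet{\Phi}{q})$, I set $\Scat{\Phi}{q}(G) := 0$ so that $\Scat{\Phi}{q}$ is trivially constant on $n$-vertex graphs; otherwise I invoke the algorithm of \cref{lem:scatset:comp} to produce, in a canonical (size-minimal, first-in-lex-order) way depending only on $n$, a graph $C_n$ and graphs $H_2^{(n)}, \dots, H_{s_n}^{(n)}$ such that $\reductionGraph{\Phi}{C_n}{H_2^{(n)}, \dots, H_{s_n}^{(n)}}$ is nontrivial on $n$, and then set $\Scat{\Phi}{q}(G) := \Phi(\metaGraph{C_n}{G, H_2^{(n)}, \dots, H_{s_n}^{(n)}})$. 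Since $\Phi$ is computable and the witnesses are chosen canonically, $\Scat{\Phi}{q}$ is a well-defined computable graph parameter, isomorphism-invariant because $\Phi$ is.

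For the first two bullet points, observe that on each size class $n \in q(\ScatSet{\Phi}{q})$ the function $\Scat{\Phi}{q}$ coincides with $\reductionGraph{\Phi}{C_n}{H_2^{(n)}, \dots, H_{s_n}^{(n)}}$, which is nontrivial on $n$ by construction; on every other size class $\Scat{\Phi}{q}$ is identically zero. Hence the set of sizes on which $\Scat{\Phi}{q}$ is nontrivial is exactly $q(\ScatSet{\Phi}{q})$. Edge-monotonicity is a per-size property; the zero case is immediate, and for $n \in q(\ScatSet{\Phi}{q})$ any edge-subgraph $G' \subseteq G$ lifts to an edge-subgraph $\metaGraph{C_n}{G', H_2^{(n)}, \dots, H_{s_n}^{(n)}} \subseteq \metaGraph{C_n}{G, H_2^{(n)}, \dots, H_{s_n}^{(n)}}$, so edge-monotonicity of $\Phi$ transfers to $\Scat{\Phi}{q}$.

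For the reduction, given an instance $(H, G)$ of $\NUM{}\cpindsub(\Scat{\Phi}{q})$ with $n := |V(H)|$: if $n \notin q(\ScatSet{\Phi}{q})$, output $0$. Otherwise, recover $C_n, H_2^{(n)}, \dots, H_{s_n}^{(n)}$ via \cref{lem:scatset:comp}. Since $\Scat{\Phi}{q}$ agrees with $\reductionGraph{\Phi}{C_n}{H_2^{(n)}, \dots, H_{s_n}^{(n)}}$ on every $n$-vertex induced subgraph of $G$, we have
\[
 \NUM{}\cpindsubs{(\Scat{\Phi}{q}, H)}{G} \;=\; \NUM{}\cpindsubs{(\reductionGraph{\Phi}{C_n}{H_2^{(n)}, \dots, H_{s_n}^{(n)}}, H)}{G},
\]
which we compute by invoking \cref{lem:inc:exc} with oracle access to $\NUM{}\cpindsub(\Phi)$. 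The oracle is queried only on instances whose pattern has $n + c_n$ vertices, where $c_n := \sum_{i=2}^{s_n} |V(H_i^{(n)})|$ depends only on $n$, so the parameter of every oracle call is bounded by a computable function of $\kappa(H, G) = n$, yielding the claimed parameterized Turing reduction.

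The main subtlety lies in making $\Scat{\Phi}{q}$ well-defined and computable: we must commit to a canonical (reproducible) choice of $C_n$ and the $H_i^{(n)}$ for each $n \in q(\ScatSet{\Phi}{q})$, and this is exactly what \cref{lem:scatset:comp} provides by returning the first witness in a fixed enumeration order. Beyond this bookkeeping there is no deeper obstacle: edge-monotonicity is inherited size-wise, and the reduction reduces mechanically to \cref{lem:inc:exc} once the witness graphs have been extracted.
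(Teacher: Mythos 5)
Your proposal is correct and follows essentially the same route as the paper's proof: define $\Scat{\Phi}{q}$ per size class as $\Phi(\metaGraph{C_n}{\cdot, H_2^{(n)}, \dots, H_{s_n}^{(n)}})$ using the canonical witnesses from \cref{lem:scatset:comp} (and $0$ off $q(\ScatSet{\Phi}{q})$), then obtain the reduction via \cref{lem:inc:exc}. The only difference is that you spell out the edge-subgraph lifting argument for edge-monotonicity slightly more explicitly than the paper does.
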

\begin{proof}
    Let $G$ be a graph. We set $k \coloneqq |V(G)|$. According to \cref{lem:scatset:comp} we can check
    if $k \in q(\ScatSet{\Phi}{q})$ and compute a graph $C^k \in \graphs{s_k}$ and graphs
    $H^k_2, \dots, \dots, H^k_{s_k} \in \graphs{}$ such that the graph
    parameter $\reductionGraph{\Phi}{C^k}{H^k_2, \dots, H^k_{m_s}}(G) \coloneq \Phi(\metaGraph{C^k}{G, H^k_2 \dots, H^k_{s_k}})$ is
    nontrivial on $k$. We use this to define the following
    graph parameter:
    \[\Scat{\Phi}{q}(G) =  \begin{cases}
        \Phi(\metaGraph{C^k}{G, H_2^k, \dots, H_{s_k}^k}) &\text{ if $k \coloneqq |V(G)| \in q(\ScatSet{\Phi}{q})$} \\
        0 &\text{ otherwise}.
    \end{cases}\]
    First, observe that $\Scat{\Phi}{q}$ is a computable graph parameter. Further, it is easy to check that $\Scat{\Phi}{q}$ is edge-monotone, whenever $\Phi$ is. Lastly, $\Scat{\Phi}{q}$ is nontrivial exactly on $q(\ScatSet{\Phi}{q})$ by construction.

    Next, we prove the reduction. For a $k$-vertex graph $H$ and a $H$-colored graph $G$, we want to
    compute $\NUM{}\cpindsubs{(\Scat{\Phi}{q}, H)}{G}$. First, check if $k \notin q(\ScatSet{\Phi}{q})$. If this is the
    case then $\NUM{}\cpindsubs{(\Scat{\Phi}{q}, H)}{G} = 0$ since $\Scat{\Phi}{q}$ is trivial on $k$. Since $q(\ScatSet{\Phi}{q})$
    is computable, this can be done in time $h_1(k)$ for some computable function $h$. We again use \cref{lem:scatset:comp} to compute
    the graph $C^k \in \graphs{s_k}$ and graphs $H^k_2, \dots, \dots, H^k_{s_k} \in \graphs{}$
    in time $h_1(k)$. Observe that $c_k = \sum_{i = 2}^{s_k} \# V(H^k_i)$
    only depends on $k$. Now, \cref{lem:inc:exc} allows us to compute $\NUM{}\cpindsubs{(\Scat{\Phi}{q}, H)}{G}$ using an oracle
    for $\NUM{}\cpindsubs{(\Phi, \graphs{k + c_k})}{\star}$. This shows that there is a parameterized Turing
    reduction from $\NUM{}\cpindsub(\Scat{\Phi}{q})$ to $\NUM{}\cpindsub(\Phi)$.
\end{proof}

We define the function $\primebase \colon \nat \to \nat$ that returns for each prime power $p^t$ to value $p^{t-1}$. Further
$\primebase(k) = 1$ if $k$ is not a prime power. Now,
because of \cref{theo:edge_mono:prime:power:bicliques}, we can find for all $q(k) \in \Scat{\Phi}{q}$
a nonvanishing graph $F$ that contain $K_{\primebase(q(k)), \primebase(q(k))}$ as a subgraph.
Thus we can use \cref{theo:lower:bound:bicliques} to get lower bounds for $\NUM{}\cpindsub(\Scat{\Phi}{q})$.
We show that these lower bounds propagate to $\NUM{}\indsubsprob(\Phi)$ and $\NUM{}\cpindsub(\Phi)$
due to \cref{lem:scat:graph:parameter}.

\begin{theorem}\label{theo:scat:hard}
   Let $\Phi \colon \graphs{k} \to \codo{\Phi}{k}$ be an edge monotone graph parameter and let
   $q\colon \nat \to \nat$ be a computable unbounded monotonically increasing function.
   Let $q' \colon \ScatSet{\Phi}{q} \to \nat$ be the restriction of $q$ to $\ScatSet{\Phi}{q}$.
   \begin{itemize}
    \item If $\primebase(q'(k)) \in \omega(1)$ then $\NUM{}\indsubsprob(\Phi)$ and $\NUM{}\cpindsub(\Phi)$ are
    both \w-hard.
    \item There is a constant $\gamma > 0$ such that for all
    $k \in \ScatSet{\Phi}{q}$, there is no algorithm (that reads the whole input) that for every $G$ computes
    $\NUM{}\indsubs{(\Phi, k)}{G}$ (or $\NUM{}\cpindsubs{(\Phi, \graphs{k})}{G}$)
    in time $O(|V(G)|^{\gamma \primebase(q(k))})$ unless ETH fails.
    Here $\gamma$ depends on $q$ but is independent of $\Phi$ and $k$.
   \end{itemize}
\end{theorem}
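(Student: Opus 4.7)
The plan is to first establish the desired hardness and lower bounds for the auxiliary problem $\NUM{}\cpindsub(\Scat{\Phi}{q})$, and then transfer them to $\NUM{}\cpindsub(\Phi)$ (and onwards to $\NUM{}\indsubsprob(\Phi)$) through the reductions we already have in place. More precisely, \cref{lem:scat:graph:parameter} provides an edge-monotone graph parameter $\Scat{\Phi}{q}$ that is nontrivial exactly on $q(\ScatSet{\Phi}{q})$ and for which there is a parameterized Turing reduction from $\NUM{}\cpindsub(\Scat{\Phi}{q})$ to $\NUM{}\cpindsub(\Phi)$; the reduction is witnessed by \cref{lem:inc:exc}, where a query on a $q(k)$-vertex pattern is answered by a single oracle call on a $k$-vertex pattern over a host graph of size $|V(G)| + c_k$, with $c_k$ depending only on $k$.

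The first step is to obtain nonvanishing graphs with large bicliques for $\Scat{\Phi}{q}$. For every $k \in \ScatSet{\Phi}{q}$, by the definition of $\ScatSet{\Phi}{q}$ the value $q(k)$ is a prime power $p^t$, and $\Scat{\Phi}{q}$ is nontrivial on $q(k)$-vertex graphs. Applying \cref{theo:edge_mono:prime:power:bicliques} to the edge-monotone parameter $\Scat{\Phi}{q}$ (whose finite codomain we may ensure via \cref{lem:finite:codomain} if needed) yields, for each such $k$, a nonvanishing fixed point $F_k$ with $|V(F_k)| = q(k)$ that contains $K_{\primebase(q(k)), \primebase(q(k))}$ as a subgraph. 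This gives a family $\{F_k\}$ with the biclique function $h(F_k) \coloneqq \primebase(q(k))$.

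The second step feeds this family into our generic reduction machinery. For the \w-hardness claim, if $\primebase(q'(k)) \in \omega(1)$, then $h(F_k)$ is unbounded along the family, so the first bullet of \cref{cor:tight:bounds} (equivalently, an application of \cref{lem:tight:biclique:construction} together with \cref{claim:hom:tw}) yields that $\NUM{}\cpindsub(\Scat{\Phi}{q})$ is \w-hard; composing with the Turing reduction from \cref{lem:scat:graph:parameter} transports \w-hardness to $\NUM{}\cpindsub(\Phi)$, and finally \cref{lem:redu:cpindsub:indsub} transports it to $\NUM{}\indsubsprob(\Phi)$. For the quantitative statement, fix $k \in \ScatSet{\Phi}{q}$ and let $F_k$ be as above; an application of \cref{theo:lower:bound:bicliques} to $F_k$ gives a global constant $\beta > 0$ (independent of $\Phi$ and $k$) such that, unless ETH fails, there is no algorithm computing $\NUM{}\cpindsubs{(\Scat{\Phi}{q}, \graphs{q(k)})}{G}$ in time $O(|V(G)|^{\beta \primebase(q(k))})$.

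The third step transfers this lower bound back to $\Phi$. Suppose some algorithm computed $\NUM{}\cpindsubs{(\Phi, \graphs{k})}{G}$ in time $O(|V(G)|^{\gamma \primebase(q(k))})$ for a constant $\gamma$ to be fixed. Plugging this algorithm into \cref{lem:inc:exc} solves $\NUM{}\cpindsubs{(\Scat{\Phi}{q}, H)}{G}$ on $q(k)$-vertex patterns in time $O((|V(G)| + c_k)^{\gamma \primebase(q(k))} + (|V(G)| + c_k)^2)$. Since $c_k$ depends only on $k$, this is $O(|V(G)|^{\gamma \primebase(q(k))})$ once $k$ is fixed, up to absorbing the additive polynomial overhead; thus for $\gamma$ chosen small enough (e.g.\ $\gamma \coloneqq \beta/2$, together with a padding argument for the small values of $k$ exactly as in the proof of \cref{cor:tight:bounds}, to avoid a sublinear regime), we contradict the lower bound on $\Scat{\Phi}{q}$. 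Composing once more with \cref{lem:redu:cpindsub:indsub} yields the same lower bound for $\NUM{}\indsubsprob(\Phi)$, since that reduction only contributes an extra polynomial factor $2^k$ and a $|V(G)|^2$ blow-up, both absorbed into the constant $\gamma$. The only subtlety worth highlighting is the book-keeping for small $k$ where $\primebase(q(k))$ might be too small to meet the hypothesis $h(k) \geq N$ of \cref{theo:lower:bound:bicliques}; this is handled exactly as in \cref{cor:tight:bounds} by shrinking $\gamma$ further so that, for those finitely many offending $k$, the forbidden running time becomes unconditionally sublinear.
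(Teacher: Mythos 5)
Your proposal follows essentially the same route as the paper's proof: construct $\Scat{\Phi}{q}$ via \cref{lem:scat:graph:parameter}, obtain nonvanishing graphs containing $K_{\primebase(q(k)),\primebase(q(k))}$ from \cref{theo:edge_mono:prime:power:bicliques}, derive the lower bound for $\NUM{}\cpindsub(\Scat{\Phi}{q})$ from \cref{theo:lower:bound:bicliques} (resp.\ \w-hardness from \cref{cor:tight:bounds}), transfer it to $\Phi$ through \cref{lem:inc:exc} with the constant-size padding $c_k$, handle the finitely many small $k$ by shrinking the exponent into the unconditionally sublinear regime, and finish with \cref{lem:redu:cpindsub:indsub}. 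The plan is correct and matches the paper's argument step for step.
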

\begin{proof}
    We use \cref{lem:scat:graph:parameter} to construct the graph parameter $\Scat{\Phi}{q}$.
    Let $q(k) \in q(\ScatSet{\Phi}{q})$, then according to \cref{theo:edge_mono:prime:power:bicliques}
    there exists a nonvanishing graph $H$ that contains $K_{\primebase(q(k)), \primebase(q(k))}$ as a subgraph. Now,
    if $\primebase(q'(k)) \in \omega(1)$ then we can apply \cref{cor:tight:bounds} to
    $h \colon \ScatSet{\Phi}{q} \to \nat, k \mapsto \primebase(q'(k))$ which proves that
    $\NUM{}\cpindsub(\Scat{\Phi}{q})$ is \w-hard. \cref{lem:scat:graph:parameter}
    ensures that $\NUM{}\cpindsub(\Phi)$ is \w-hard, too. Lastly, \cref{lem:redu:cpindsub:indsub}
    shows that $\NUM{}\indsubsprob(\Phi)$ is \w-hard.

    Further, assuming ETH, we can use \cref{theo:lower:bound:bicliques} to get an integer $N$
    and a constant $\beta$ such that for all $q(k) \geq N$ with $q(k) \in q(\Scat{\Phi}{q})$,
    there is no algorithm (that reads the whole input)  that for every \(G\) computes
    $\NUM{}\cpindsubs{(\Scat{\Phi}{q}, \graphs{q(k)})}{G}$ in time $O(|V(G)|^{\beta \primebase(q(k))})$. Since $q$ is monotonically increasing
    and unbounded, we can find a value $N'$ such that $q(k) \geq N$ if and only if $k \geq N'$.
    Next, let $k \in \ScatSet{\Phi}{q}$ such that $k \geq N'$. Assume that there is an algorithm
    that solves $\NUM{}\cpindsubs{(\Phi, \graphs{k})}{G}$ in time $O(|V(G)|^{\beta \primebase(q(k))})$.
    Now using \cref{lem:scatset:comp}, we can compute a graph $C^k \in \graphs{s_k}$ and graphs
    $H^k_2, \dots, \dots, H^k_{s_k} \in \graphs{}$ such that the graph
    parameter $\reductionGraph{\Phi}{C^k}{H^k_2, \dots, H^k_{m_s}}(G) \coloneq \Phi(\metaGraph{C^k}{G, H^k_2 \dots, H^k_{s_k}})$
    coincides with $\Scat{\Phi}{q}$ on $q(k)$-vertex graphs.
    Now, let $c_k \coloneqq k - q(k)$. According to \cref{lem:inc:exc}, we can use our
    algorithm for $\NUM{}\cpindsubs{(\Phi, \graphs{k})}{G}$
    to solve $\NUM{}\cpindsubs{(\Scat{\Phi}{q}, \graphs{q(k)})}{G}$ in time
    \[O((|V(G)| + c_k)^{\beta \primebase(q(q(k) + c_k))}) = O(|V(G)|^{\beta \primebase(q(k))}),\]
    which is not possible unless ETH fails.
    Next, set $\beta' \coloneqq \min(\delta'_{s, p}, 1/(N'+1))$.
    Now for all \(k < N'\), we obtain
    \[
        O(|V(G)|^{\beta' \primebase(q(k))}) = o( |V(G)| ).
    \]
    since $\primebase(q(k)) \leq k$. Such a running time is unconditionally unachievable
    for any algorithm that reads the whole input. Thus for all
    $k \in \ScatSet{\Phi}{q}$, there is no algorithm (that reads the whole input) that for every $G$ computes
    $\NUM{}\cpindsubs{(\Phi, \graphs{k})}{G}$ in time $O(|V(G)|^{\beta' \primebase(q(k))})$ unless ETH fails.

    Lastly, we define $C \coloneqq 4/ \beta'$ and $\gamma \coloneqq \min(\beta'/2, 1/(C+1))$. Assume that we can compute
    on input $G$ the value
    $\NUM{}\indsubs{(\Phi, k)}{G}$ in time $O(|V(G)|^{\gamma \primebase(q(k))})$. If $\primebase(q(k)) \geq C$, then we can use
    \cref{lem:redu:cpindsub:indsub} to get an algorithm that computes $\NUM{}\cpindsubs{(\Phi, \graphs{k})}{G}$
    in time $O(|V(G)|^{\gamma \primebase(q(k)) + 2})$. Now, we obtain that
    $\gamma \primebase(q(k)) + 2 \leq \beta'/2 \cdot \primebase(q(k)) + 2 \leq \beta' \primebase(q(k))$ for
    $\primebase(q(k)) \geq C = 4 / \beta'$. This implies that we can compute $\NUM{}\cpindsubs{(\Phi, \graphs{k})}{G}$
    in time $O(|V(G)|^{\beta \primebase(q(k))})$ which is not possible unless ETH fails.
    Otherwise $\primebase(q(k)) < C$ and we would obtain a sublinear algorithm for
    $\NUM{}\indsubs{(\Phi, k)}{G}$ which is unconditionally unachievable
    for any algorithm that reads the whole input.
\end{proof}

\thmedgemon

\begin{proof}
    Without loss of generality, we assume that the codomain of $\Phi$ is equal to $\{0, \dots, c\}$.
    Otherwise, we use \cref{lem:finite:codomain}.
    Let $p$ be the first prime number such that $c < p$ and write $\nt{\Phi}$ for the set
    of values $k$ on which $\Phi$ is nontrivial. Let $\primepower{p}$ be the function
    that on input $k$ returns the largest prime power $p^t$ such that $p^t \leq k$.  We define the function
    $q(k) \coloneqq \primepower{p}(k)/p $ and $h(k) \coloneqq q(k)/p = \primebase(q(k))$. Observe that $q(k) \geq k/p^2$ thus
    $q(k), h(k) \in \Omega(k)$. Further, $q$ is a computable unbounded monotonically increasing function with
    $q(k)/p + q(k) \leq k$.  This means that we can apply \cref{lem:param:large:biclique}
    on all $k \in \nt{\Phi}$ to get that $\Phi$ is concentrated on $k$ with respect to $q(k)/p = h(k)$
    or $\Phi$ is reducible on $k$ with respect to $q(k)$.

    Write $\CoSet{\Phi}{q}$ for the set of all $k \in \nt{\Phi}$ such that $\Phi$ is
    concentrated on $k$ with respect to $q(k)/p$.
    Further, write $\ScatSet{\Phi}{q}$ for the set of all $k \in \nt{\Phi}$ such that $\Phi$ is
    reducible on $k$ with respect to $q(k)$. Now, \cref{lem:param:large:biclique}
    implies that $\ScatSet{\Phi}{q} \cup \CoSet{\Phi}{q} = \nt{\Phi}$.

    If we assume that $\nt{\Phi}$ is infinite then at least one of the sets $\ScatSet{\Phi}{q}$ or $\CoSet{\Phi}{q}$ has
    to be infinite as well. If $\ScatSet{\Phi}{q}$ is infinite, then we use \cref{theo:scat:hard}
    to show that both $\NUM{}\indsubsprob(\Phi)$ and  $\NUM{}\cpindsub(\Phi)$ are \w-hard.
    If $\CoSet{\Phi}{q}$ is infinite, then we can use \cref{cor:tight:bounds} to show that  both $\NUM{}\indsubsprob(\Phi)$ and  $\NUM{}\cpindsub(\Phi)$ are \w-hard.

    If we assume ETH, then we can use \cref{theo:scat:hard} to get a constant $\delta_{s, c}$ such that
    for all $k \in \ScatSet{\Phi}{q}$
    no algorithm computes $\NUM{}\indsubs{(\Phi, k)}{\star}$ (or $\NUM{}\cpindsubs{(\Phi, k)}{\star}$)
    in time $O(|V(G)|^{\delta_{s, c} h(k) })$.
    On the other side, we can use \cref{cor:tight:bounds} to get
    a constant $\delta_{c, c}$ such that for all $k \in \CoSet{\Phi}{q}$
    no algorithm computes $\NUM{}\indsubs{(\Phi, k)}{\star}$ (or $\NUM{}\cpindsubs{(\Phi, k)}{\star}$)
    in time $O(|V(G)|^{\delta_{c, c} h(k) })$.
    We obtain our tight bounds under ETH by simply picking $\delta_c = \min(\delta_{s, c}, \delta_{c, c}) / p$.
\end{proof}

\subsection{Hardness for unbounded codomain}
In this section, we show that $\NUM{}\indsubsprob(\Phi)$ and $\NUM{}\cpindsub(\Phi)$ are
still \w-hard when our codomain is smaller than $\lfloor (1 - \varepsilon) k^{1/\alpha} \rfloor$.
We also get lower bounds which are not tight anymore.

\thmedgemonDomain
\begin{proof}
    Fix $\beta \coloneqq 1 - {1}/{\alpha}$, and $\delta > 0$ such that $(1 + \delta)^\alpha (1 - \varepsilon)^\alpha < 1$.
    According to \cite[(22.19.2)]{number_theory_hardy}, there exists a number $N_{\delta}$ such that for
    all $k \geq N_{\delta}$ there exists a prime number $p$ with $k < p < (1 + \delta)k$.
    Let $N'_\delta$ be a number such that $k \geq N'_\delta$ implies $\codo{\Phi}{k} \geq N_\delta$.

    For all $k$, we write $g(k)$ for the smallest prime number $p$ with $\codo{\Phi}{k} < p = g(k)$.
    Note that \cite[(22.19.2)]{number_theory_hardy} implies that for all $k \geq N'_\delta$ we have $g(k) < (1+ \delta)\codo{\Phi}{k}$.
    Further, we define $q(k) \coloneqq g(k)^\alpha$ and $h(k) \coloneqq g(k)^{\alpha-1}$.
    Observe that $q(k)$ is a computable unbounded monotonically increasing function.

    Further, for $k \geq N'_\delta$ we get
    \[\frac{q(k)}{g(k)} + q(k) = \left(1 + \frac{1}{g(k)}\right)q(k) \leq
    (1 + \delta)^\alpha (1 - \varepsilon)^\alpha \left(1 + \frac{1}{g(k)}\right)k.\]
    For $g(k)$ large enough, we get that
    \[\underbrace{(1 + \delta)^\alpha (1 - \varepsilon)^\alpha}_{< 1} \cdot \left(1 + \frac{1}{g(k)}\right) < 1 \]
    implying $q(k)/g(k) + q(k) \leq k$.

    Since the function $g(k)$ is unbounded and monotonically increasing,
    there is a constant $N$ such that, for all
    $k \geq N$, we can use \cref{lem:param:large:biclique}. This means that for all $k \in \nt{\Phi}$ with $k \geq N$
    we get that $\Phi$ is concentrated on $k$ with respect to $q(k)/g(k) = h(k)$
    or $\Phi$ is reducible on $k$ with respect to $q(k)$.

    Write $\CoSet{\Phi}{q}$
    for the set of all $k \in \nt{\Phi}$ such that $\Phi$ is concentrated on $k$ with respect to $h(k)$.
    Further, write $\ScatSet{\Phi}{q}$ for the set of all $k \in \nt{\Phi}$ such that $\Phi$ is
    reducible on $k$ with respect to $q(k)$.

    If we assume that $\nt{\Phi}$ is infinite then at least one of the sets $\ScatSet{\Phi}{q}$ or $\CoSet{\Phi}{q}$ has
    to be infinite as well. If $\ScatSet{\Phi}{q}$ is infinite, then we use \cref{theo:scat:hard}
    to show that both $\NUM{}\indsubsprob(\Phi)$ and  $\NUM{}\cpindsub(\Phi)$ are \w-hard.
    If $\CoSet{\Phi}{q}$ is infinite, then we can use \cref{cor:tight:bounds} to show that  both $\NUM{}\indsubsprob(\Phi)$ and  $\NUM{}\cpindsub(\Phi)$ are \w-hard.

    If we assume ETH, then we can use \cref{theo:scat:hard} to get a constant $\delta_{s, c}$ such that
    for all $k \in \ScatSet{\Phi}{q}$ with $k \geq N$
    no algorithm computes $\NUM{}\indsubs{(\Phi, k)}{\star}$ (or $\NUM{}\cpindsubs{(\Phi, k)}{\star}$)
    in time $O(|V(G)|^{\delta_{s, c} q(k) / g(k) }) = O(|V(G)|^{\delta_{s, c} h(k) })$.

    Observe that $h \in \Omega(k^{\beta})$ since $h(k)^{1/\beta} = g(k)^\alpha \geq (1 - \varepsilon)^\alpha k$.
    This allows us to use \cref{cor:tight:bounds} to get a
    constant $\delta_{c, c}$ such that for all $k \in \CoSet{\Phi}{q}$ with $k > N$
    no algorithm computes $\NUM{}\indsubs{(\Phi, k)}{\star}$ (or $\NUM{}\cpindsubs{(\Phi, k)}{\star}$)
    in time $O(|V(G)|^{\delta_{c, c} h(k) })$.
    Next, we set $\delta = (1-\varepsilon)^{\alpha - 1}\min(\delta_{s, c}, \delta_{c, c}, 1/(N+1))$.
    Observe that for $k \geq N$ we obtain
    \[\min(\delta_{s, c}, \delta_{c, c}) \cdot h(k) \geq \min(\delta_{s, c}, \delta_{c, c}) \cdot \codo{\Phi}{k}^{\alpha-1} \geq
    \min(\delta_{s, c}, \delta_{c, c}) \cdot (1 - \varepsilon)^{\alpha-1} k^{\frac{\alpha - 1}{\alpha}} \geq \delta k^{\beta},\]
    where we used $g(k) >  \codo{\Phi}{k}$.
    Thus, for all $k \in \nt{\Phi}$ with $k \geq N$, no algorithm computes for every graph
    \(G\) the number $\NUM{}\indsubs{(\Phi, k)}{\star}$
    (or $\NUM{}\cpindsubs{(\Phi, k)}{\star}$)
    in time $O(|V(G)|^{\delta k^{\beta} })$.
    For $k < N$, we obtain
    \[
        O(|V(G)|^{\delta k^{\beta}}) = o( |V(G)| ).
    \]
    since $k^{\beta} \leq k$. Such a running time is unconditionally unachievable
    for any algorithm that reads the whole input. Thus for all nontrivial $k \geq 3$,
    no algorithm (that reads the whole input) computes the function $\NUM{}\indsubs{(\Phi, k)}{\star}$
    (or $\NUM{}\cpindsubs{(\Phi, \graphs{k})}{\star}$) in time $O(|V(G)|^{\delta k^\beta})$.
\end{proof}

\section{The Sub-Basis Representation of $\NUM{}\indsubsprob(\Phi)$}\label{sec:sub:base}

Many counting problems $\NUM{}P$ in graphs can be expressed as a linear combination of graph homomorphisms
counts. This means that we can find a finite number of non-isomorphic graphs $H$ and coefficients $\alpha_H$
such that for all graphs \(G\), we have
\begin{align}\label{eq:hom:basis}
    \NUM{}P(G) = \sum_{H}  \alpha_H \cdot \NUM{} \homs{H}{G}.
\end{align}
We say that (\ref{eq:hom:basis}) is the \emph{hom-basis representation} of $\NUM{}P$.

In \cite{hom:basis}, the authors used that
the set $\{\NUM{}\homs{H}{\star} : H \in \graphs{}\}$ forms a basis of an infinite
dimension vector space which allowed them to rewrite sums of homomorphisms using
different basis. One basis they used was the sub-basis $\{\NUM{}\subs{H}{\star} : H \in \graphs{}\}$
where $\NUM{}\subs{H}{G}$ is the number of subgraphs in $G$ that are isomorphic to $H$. This
means that we can do a change of basis and rewrite (\ref{eq:hom:basis}) using the sub-basis
representation.

In this section we analyze the sub-basis representation of $\Phi$ and $\NUM{}\indsubsprob(\Phi)$ and
see that the coefficients the sub-basis are in direct relationship to the alternating enumerator.
This enables us to find nontrivial graph parameters $\Phi$ such that the $\NUM{}\indsubsprob(\Phi)$
becomes easy to compute. Lastly, in \cref{sec:indsub:harder}, we compare the sub-basis representation of
$\NUM{}\indsubsprob(\Phi)$ with the $\cphom$-basis representation of $\NUM{}\cpindsubsprob(\Phi)$
which helps us to understand why there are graph parameters $\Phi$ such that
$\NUM{}\indsubsprob(\Phi)$ is \w-hard while $\NUM{}\cpindsubsprob(\Phi)$ is FPT solvable.

\subsection{Using the Sub-Basis for Graph Parameters}

\begin{theorem}\label{theo:sub:base}
    Let $\Phi_k \colon \graphs{k} \to \Q$ be any graph parameter.
    Then, there are coefficients $\subCoeff{\Phi}{k}{H}$ such that
    \[\Phi_k(G) =  \sum_{H \in \graphs{k}^* }  \subCoeff{\Phi}{k}{H} \cdot \NUM{}\subs{H}{G}.\]
    Further, we have $\subCoeff{\Phi}{k}{H} = (-1)^{\#E(H)} \cdot \ae{\Phi}{H}$.
\end{theorem}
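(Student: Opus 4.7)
The plan is to verify the identity by direct substitution, exploiting the fact that for graphs $G,H$ on the same number of vertices $k$, the set $\subs{H}{G}$ can be described very concretely in terms of edge subsets of $G$. Specifically, since $|V(H)|=|V(G)|=k$, any subgraph of $G$ isomorphic to $H$ is obtained by choosing an edge subset $E'\subseteq E(G)$ such that the graph $\ess{G}{E'}$ (on the full vertex set $V(G)$) is isomorphic to $H$. Hence
\[
    \NUM{}\subs{H}{G} = \#\{E'\subseteq E(G) \;:\; \ess{G}{E'} \cong H\}.
\]

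The first step is to expand the claimed right-hand side and interchange the two sums. Because $(-1)^{\#E(H)}\cdot \ae{\Phi}{H}$ depends only on the isomorphism type of $H$, and because every $E'\subseteq E(G)$ contributes to exactly one isomorphism class $H = \ess{G}{E'} \in \graphs{k}^*$, we may re-index and obtain
\[
    \sum_{H\in \graphs{k}^*} (-1)^{\#E(H)}\,\ae{\Phi}{H}\cdot \NUM{}\subs{H}{G}
    \;=\; \sum_{E'\subseteq E(G)} (-1)^{\#E'}\,\ae{\Phi}{\ess{G}{E'}}.
\]

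The second step is to substitute the definition of the alternating enumerator: since $E(\ess{G}{E'})=E'$, we have $\ae{\Phi}{\ess{G}{E'}}=\sum_{S\subseteq E'} (-1)^{\#S}\,\Phi(\ess{G}{S})$. Plugging this in and swapping the order of summation gives
\[
    \sum_{S\subseteq E(G)} \Phi(\ess{G}{S})\,(-1)^{\#S} \sum_{E':\, S\subseteq E'\subseteq E(G)} (-1)^{\#E'}.
\]
The inner sum over $E'$ factors, after the substitution $T = E'\setminus S$, as $(-1)^{\#S}\cdot (1-1)^{\#E(G)-\#S}$, which vanishes unless $S=E(G)$ and equals $(-1)^{\#E(G)}$ in that case. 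Substituting back leaves exactly $\Phi(\ess{G}{E(G)})=\Phi(G)$, as desired.

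There is no real obstacle here; the argument is a standard Möbius-inversion-style computation on the Boolean lattice of edge subsets of $G$, and the only care needed is to remember that $H$ and $G$ range over $k$-vertex graphs so that $\subs{H}{G}$ is indexed by full-vertex edge subsets. One could alternatively frame the statement as a change of basis between the natural basis $\{\Phi^G \mid G\in \graphs{k}^*\}$ of indicator parameters and the sub-basis $\{\NUM{}\subs{H}{\cdot} \mid H\in \graphs{k}^*\}$ on the finite-dimensional space of isomorphism-invariant functions on $\graphs{k}$; uniqueness of the coefficients then follows because the transition matrix (indexed by edge-subgraph inclusion) is upper triangular with $1$'s on the diagonal, hence invertible. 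Either way, the explicit coefficient $(-1)^{\#E(H)}\ae{\Phi}{H}$ drops out of the inclusion--exclusion telescoping above.
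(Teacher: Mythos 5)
Your proof is correct, and it takes a genuinely different (and arguably cleaner) route than the paper's. The paper first \emph{defines} the coefficients $\subCoeff{\Phi}{k}{H}$ by the triangular recursion $\subCoeff{\Phi}{k}{H} = \Phi_k(H) - \sum_{F \neq H} \subCoeff{\Phi}{k}{F}\cdot\NUM{}\subs{F}{H}$, which immediately yields existence, and only afterwards identifies these coefficients with $(-1)^{\#E(H)}\ae{\Phi}{H}$ by expanding $\ae{\Phi}{H}$, substituting the sub-basis representation of $\Phi$, and invoking the external fact that $\mathrm{Ext}^{-1}\cdot\mathrm{Ext}$ is the identity matrix (from the Curticapean--Dell--Marx framework). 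You instead plug the claimed coefficients directly into the right-hand side and verify the identity by a self-contained inclusion--exclusion telescoping on the Boolean lattice of $E(G)$: the key observations --- that for $|V(H)|=|V(G)|=k$ one has $\NUM{}\subs{H}{G}=\#\{E'\subseteq E(G): \ess{G}{E'}\cong H\}$, that $(-1)^{\#E(H)}\ae{\Phi}{H}$ is isomorphism-invariant so the double sum re-indexes over edge subsets, and that $\sum_{E'\supseteq S}(-1)^{\#E'}=(-1)^{\#S}(1-1)^{\#E(G)-\#S}$ kills everything except $S=E(G)$ --- are all verified correctly, and the surviving term is exactly $\Phi(G)$. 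Your route proves existence and the explicit formula in one stroke without any appeal to the $\mathrm{Ext}$ matrix machinery; the paper's route has the mild advantage of exhibiting the recursive/triangular structure of the coefficients explicitly (which makes uniqueness and computability manifest), a point you also recover in your closing remark about the upper-triangular transition matrix.
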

\begin{proof}
    We first prove that we can write $\Phi_k$ as a linear combination of subgraph counts. For this,
    we define for all $H \in \graphs{k}^*$ the values
    \begin{align}\label{eq:coeff:sub:base}
        \subCoeff{\Phi}{k}{H} \coloneqq \Phi_k(H) - \sum_{\substack{F \neq H \\ F \in \graphs{k}^*}} \subCoeff{\Phi}{k}{F} \cdot \NUM{}\subs{F}{H}.
    \end{align}
    Observe that $\NUM{}\subs{F}{H}$ is only nonvanishing if $F$ is isomorphic to a subgraph of $H$.
    Thus, \cref{eq:coeff:sub:base} is a well-defined recursive definition. Further, we obtain
    \[ \Phi_k(H) = \subCoeff{\Phi}{k}{H} + \sum_{\substack{F \neq H \\ F \in \graphs{k}^*}} \subCoeff{\Phi}{k}{F} \cdot \NUM{}\subs{F}{H} =
    \sum_{\substack{F \in \graphs{k}^*}} \subCoeff{\Phi}{k}{F} \cdot \NUM{}\subs{F}{H}, \]
    where the last equation is justified since $\NUM\subs{H}{H} = 1$. Lastly, we show that for each $k$-vertex graph $H$, we obtain
    $\subCoeff{\Phi}{k}{H} = (-1)^{\#E(H)} \cdot \ae{\Phi}{H}$.
    For this, observe that $H \in \graphs{k}^\ast$, we obtain
    \begin{align*}
        \ae{\Phi}{H} = \sum_{S \subseteq E(H)} \Phi(\ess{H}{S}) (-1)^{\#S} &= \sum_{F \in \graphs{k}^*} (-1)^{\#E(F)} \cdot \Phi(F) \cdot \NUM{}\subs{F}{H} \\
        &= \sum_{F \in \graphs{k}^*} \sum_{F' \in \graphs{k}^*}  (-1)^{\#E(F)}  \cdot \subCoeff{\Phi}{k}{F'} \cdot  \NUM{}\subs{F'}{F} \cdot \NUM{}\subs{F}{H} \\
        &= \sum_{F' \in \graphs{k}^*} \subCoeff{\Phi}{k}{F'}   \cdot (-1)^{\#E(F')}   \sum_{F \in \graphs{k}^*}  (-1)^{\#E(F)- \#E(F')} \cdot  \NUM{}\subs{F'}{F} \cdot \NUM{}\subs{F}{H} ,
    \end{align*}
    where the last step is archived by swapping the summation. Next, we use that for all graphs $F', H \in \graphs{k}^\ast$
    \begin{align}\label{eq:ae:sub:2}
        \sum_{F \in \graphs{k}^* } (-1)^{\#E(F) - \#E(F')}   \cdot  \NUM{}\subs{F'}{F} \cdot \NUM{}\subs{F}{H}  =
        \begin{cases}
        1 &\text{if $F' = H$,} \\
        0 &\text{otherwise.}
    \end{cases}
    \end{align}
    This equation
    is justified since the sum of \cref{eq:ae:sub:2} is the same as the matrix product of
    $\mathrm{Ext}^{-1}$ and $\mathrm{Ext}$ evaluated at position $(F', H)$
    (see \cite[Section 3.1, (9)]{hom:basis} for the definition of $\mathrm{Ext}$).\footnote{$\mathrm{Ext}$ is an infinite matrix whose rows and columns are enumerated
    by $\graphs{}^\ast$. For a graph $H$ and a graph $G$ we define
    $\mathrm{Ext}(H, G) = [V(H) = V(G)] \cdot \NUM{}\subs{H}{G}$, where $[V(H) = V(G)]$ is
    one if $V(H) = V(G)$  and zero otherwise. }
    Since $\mathrm{Ext}^{-1} \cdot \mathrm{Ext}$
    is the identity matrix, we obtain that the left side of \cref{eq:ae:sub:2}
    is equal to $1$ if $F' = H$ and is equal to zero otherwise. Using \cref{eq:ae:sub:2},
    we continue our computation and obtain $\ae{\Phi}{H} = (-1)^{\#E(H)} \subCoeff{\Phi}{k}{H}$.
\end{proof}

Now \cref{theo:sub:base} allows us to rewrite $\Phi_k$ as a linear combination of subgraph counts.
This in turn allows us to write $\NUM{}\indsubs{(\Phi_k, k)}{\star}$ as a linear linear combination
of subgraph counts. We use this fact later to show that $\NUM{}\indsubs{(\Phi_k, k)}{\star}$ is
easy to compute for certain graph parameters.

\begin{theorem}\label{theo:indsub:sub:base}
    For each $\Phi_k \colon \graphs{k} \to \Q$ with $\Phi_k(G) = \sum_{H \in \graphs{k}^* } \subCoeff{\Phi}{k}{H} \cdot \NUM{}\subs{H}{G}$,
    we have
    \[\NUM{}\indsubs{(\Phi_k, k)}{G} =  \sum_{H \in \graphs{k}^* }  \subCoeff{\Phi}{k}{H} \cdot \NUM{}\subs{H}{G}
    = \sum_{H \in \graphs{k}^* }  (-1)^{\#E(H)} \cdot \ae{\Phi}{H} \cdot \NUM{}\subs{H}{G},\]
    for all $G \in \graphs{}$.
\end{theorem}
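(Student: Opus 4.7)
The proof is a straightforward unfolding of definitions combined with an exchange of summation, followed by an application of \cref{theo:sub:base}. Let me outline the steps.

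First, I would start from the definition $\NUM{}\indsubs{(\Phi_k, k)}{G} = \sum_{A \in \binom{V(G)}{k}} \Phi_k(G[A])$ and substitute the hypothesized sub-basis representation of $\Phi_k$. This gives
\[
\NUM{}\indsubs{(\Phi_k, k)}{G} = \sum_{A \in \binom{V(G)}{k}} \sum_{H \in \graphs{k}^*} \subCoeff{\Phi}{k}{H} \cdot \NUM{}\subs{H}{G[A]},
\]
and after swapping the two (finite) sums I would obtain
\[
\NUM{}\indsubs{(\Phi_k, k)}{G} = \sum_{H \in \graphs{k}^*} \subCoeff{\Phi}{k}{H} \sum_{A \in \binom{V(G)}{k}} \NUM{}\subs{H}{G[A]}.
\]

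The key step is then to observe that for each $H \in \graphs{k}^*$ we have
\[
\sum_{A \in \binom{V(G)}{k}} \NUM{}\subs{H}{G[A]} = \NUM{}\subs{H}{G}.
\]
This identity holds because every subgraph of $G$ isomorphic to $H$ spans exactly $k$ vertices, say $A \subseteq V(G)$, and conversely every subgraph of $G[A]$ isomorphic to $H$ must use all $k$ vertices of $A$ (since $|V(H)| = |A| = k$); hence subgraphs of $G$ isomorphic to $H$ are in bijection with the disjoint union, over $A \in \binom{V(G)}{k}$, of subgraphs of $G[A]$ isomorphic to $H$. This unfolding is the only nontrivial observation in the proof, but it is essentially immediate from the definitions.

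Finally, to obtain the second equality in the theorem statement, I would simply apply the identity $\subCoeff{\Phi}{k}{H} = (-1)^{\#E(H)} \cdot \ae{\Phi}{H}$ from \cref{theo:sub:base}, which yields
\[
\NUM{}\indsubs{(\Phi_k, k)}{G} = \sum_{H \in \graphs{k}^*} \subCoeff{\Phi}{k}{H} \cdot \NUM{}\subs{H}{G} = \sum_{H \in \graphs{k}^*} (-1)^{\#E(H)} \cdot \ae{\Phi}{H} \cdot \NUM{}\subs{H}{G},
\]
as required. There is no real obstacle here beyond being careful that $\subs{H}{G[A]}$ counts (not necessarily induced) subgraphs of $G[A]$ isomorphic to $H$, so that the interchange with a sum over $k$-subsets $A$ of $V(G)$ correctly reassembles $\subs{H}{G}$.
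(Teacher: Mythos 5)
Your proposal is correct and follows essentially the same route as the paper: the paper first verifies the claim for a single basis element $\Phi_k = \NUM{}\subs{H}{\star}$ via the same partition argument (each subgraph of $G$ isomorphic to $H$ lies in $\subs{H}{G[A]}$ for exactly the one $A$ equal to its vertex set) and then extends by linearity, which is just your sum-swap phrased differently; the second equality is obtained in both cases by invoking \cref{theo:sub:base}.
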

\begin{proof}
    We first show the statement for $\Phi_k(G) = \NUM{}\subs{H}{G}$ where $H \in \graphs{k}^*$ is a fixed graph.
    For each graph $G$, we obtain
    \[\NUM{}\indsubs{(\Phi_k, k)}{G} = \sum_{A \in \binom{V(G)}{k}} \Phi_k(G[A]) = \sum_{A \in \binom{V(G)}{k}} \NUM{}\subs{H}{G[A]}. \]
    Next, we show that $\NUM{}\subs{H}{G}$ is equal to $\sum_{A \in \binom{V(G)}{k}} \NUM{}\subs{H}{G[A]}$.
    Note that $\NUM{}\subs{H}{G}$ counts the number of subgraphs $H'$ in $H$ that are isomorphic to $G$ and each $H'$ is in the set
    $\subs{H}{G[A]}$  if and only if $A = V(H')$. Thus $\subs{H}{G[A]}$ forms a partition of the set $\subs{H}{G}$. This shows $\NUM{}
    \indsubs{(\Phi_k, k)}{G} = \NUM{}\subs{H}{G}$.

    For the general case,
    we use
    \[\NUM{}\indsubs{(\alpha \Phi_k + \beta \Psi_k, k)}{G} = \alpha \NUM{}\indsubs{(\Phi_k , k)}{G} + \beta \NUM{}\indsubs
    {(\Psi_k , k)}{G}.\]
    We obtain \[\sum_{H \in \graphs{k}^* }  \subCoeff{\Phi}{k}{H} \cdot \NUM{}\subs{H}{G}
    = \sum_{H \in \graphs{k}^* }  (-1)^{\#E(H)} \cdot \ae{\Phi}{H} \cdot
\NUM{}\subs{H}{G}\] due to \cref{theo:sub:base}.
\end{proof}

\subsection{$\NUM{}\indsubsprob(\Phi)$ is Easy for Bounded Vertex Cover Size}

In this section, we show that there are nontrivial graph parameters for which
we can compute $\NUM{}\indsubsprob(\Phi)$ easily. For this, the
vertex-cover size of a graph $H$ plays an important role (that is, the size of the smallest
vertex-cover of $H$). We write $\VC{H}$ for the vertex-cover size of $H$.
Recall that the vertex-cover size is an upper bound for the treewidth of $H$.

By \cite[Theorem 1.1]{hom:basis}, we can compute $\NUM{}\subs{H}{G}$ in time
$O(|V(H)|^{O(|V(H)|)} |V(G)|^{t+ 1})$, where $t$ is the maximum treewidth in the spasm of $H$.
However, this
value is upper bounded by the vertex-cover size of $H$ (see \cite[Fact 3.4]{hom:basis}).
This in turn
means that we can compute $\NUM{}\subs{H}{G}$ in time $O(|V(H)|^{O(|V(H)|)}  |V(G)|^{\VC{H} + 1})$.

\begin{theorem}\label{theo:indsub:easy}
    If $\Phi \colon \graphs{} \to \Q$ is a graph parameter such that there is a constant $\tau$
    with $\ae{\Phi}{G} = 0$ for all graphs $G$ with $\VC{G} > \tau$, then
    $\NUM{}\indsubsprob(\Phi)$ is FPT and there is an algorithm that computes
    $\NUM{}\indsubs{(\Phi, k)}{G}$ in time $O(f(k) \cdot |V(G)|^{\tau + 1})$
    for some computable function $f$.
\end{theorem}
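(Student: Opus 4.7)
The plan is to apply \cref{theo:indsub:sub:base} directly and then exploit the hypothesis to truncate the sub-basis expansion to a family of graphs whose subgraph counts can be computed quickly.

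First, I would fix an input $(k, G)$ and invoke \cref{theo:indsub:sub:base} to write
\[
    \NUM{}\indsubs{(\Phi, k)}{G} = \sum_{H \in \graphs{k}^*} (-1)^{\#E(H)} \cdot \ae{\Phi}{H} \cdot \NUM{}\subs{H}{G}.
\]
By hypothesis, every $H$ with $\VC{H} > \tau$ contributes zero, so the sum reduces to a sum over the set $\mathcal{F}_k \coloneqq \{H \in \graphs{k}^* : \VC{H} \leq \tau\}$. This set has size at most $2^{\binom{k}{2}}$, and it can be enumerated by iterating over all of $\graphs{k}^*$ and checking the vertex-cover size (which, being bounded by $\tau$, can itself be checked in time depending only on $k$).

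Next, for each $H \in \mathcal{F}_k$, I would compute the coefficient $\ae{\Phi}{H}$ and the count $\NUM{}\subs{H}{G}$. The alternating enumerator is an alternating sum over the $2^{\#E(H)}$ edge-subsets of a $k$-vertex graph, so it is computable in time that depends only on $k$ (absorbed into $f(k)$). The key step is the subgraph count: by the remark preceding the theorem (using \cite[Theorem 1.1]{hom:basis} and \cite[Fact 3.4]{hom:basis}), we can compute $\NUM{}\subs{H}{G}$ in time $O(k^{O(k)} \cdot |V(G)|^{\VC{H}+1}) = O(k^{O(k)} \cdot |V(G)|^{\tau+1})$, since $\VC{H} \leq \tau$ for every $H \in \mathcal{F}_k$.

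Combining, summing over the at most $2^{\binom{k}{2}}$ terms and multiplying by the per-term cost yields a total running time of the form $O(f(k) \cdot |V(G)|^{\tau+1})$ for some computable function $f$, which is precisely the claimed FPT bound. There is no real obstacle here: once \cref{theo:indsub:sub:base} is available and the vanishing hypothesis is used to cut the sum down to graphs of vertex-cover number at most $\tau$, the bound from \cite[Theorem 1.1, Fact 3.4]{hom:basis} takes care of the rest, and all remaining dependencies on $k$ are harmless.
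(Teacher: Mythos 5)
Your proposal is correct and follows essentially the same route as the paper's proof: apply \cref{theo:indsub:sub:base}, use the vanishing hypothesis to restrict the sum to graphs of vertex-cover number at most $\tau$, and compute each remaining $\NUM{}\subs{H}{G}$ in time $O(k^{O(k)}\cdot|V(G)|^{\VC{H}+1})$ via \cite[Theorem 1.1, Fact 3.4]{hom:basis}. The only difference is that you spell out the enumeration of $\graphs{k}^*$ and the computation of the coefficients $\ae{\Phi}{H}$, which the paper leaves implicit.
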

\begin{proof}
    For each $k \in \nat$, we use $\Phi_k \colon \graphs{k} \to \Q$ to denote the restriction of $\Phi$ to $k$-vertex graphs. Observe that
    $\NUM{}\indsubs{(\Phi, k)}{G} = \NUM{}\indsubs{(\Phi_k, k)}{G}$. From \cref{theo:indsub:sub:base}, we obtain
    \[\NUM{}\indsubs{(\Phi_k, k)}{G} =  \sum_{H \in \graphs{k}^* }  (-1)^{\#E(H)} \cdot \ae{\Phi}{H} \cdot \NUM{}\subs{H}{G} \qquad \text{for all $G \in \graphs{}$.} \]
    For each graph $H$ with $\ae{\Phi}{H} \neq 0$, we can compute $\NUM{}\subs{H}{G}$ in time
    $O(k^{O(k)} |V(G)|^{\VC{H} + 1})$. Thus, we can compute
    $\NUM{}\indsubs{(\Phi_k, k)}{G}$ in time $O(f(k) \cdot |V(G)|^{\tau + 1})$.
\end{proof}

\begin{corollary}
    For each $c \in \nat$, let $\Phi_c(G) \coloneq (\#E(G))^c$ be a graph parameter. There is an algorithm that computes
    $\NUM{}\indsubs{(\Phi_c, k)}{G}$ in time $O(f(k) \cdot |V(G)|^{c + 1})$ for some computable function $f$.
\end{corollary}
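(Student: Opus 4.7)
The plan is to apply \cref{theo:indsub:easy} with $\tau = c$. For this, I need to verify two things: that the alternating enumerator $\ae{\Phi_c}{G}$ vanishes whenever $\VC{G} > c$, and that the bound on the vertex-cover number can be reduced to a bound on the number of edges of $G$.

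First, I would rewrite the alternating enumerator in a form that only depends on $|E(G)|$. Since $\ess{G}{S}$ is the edge-subgraph with edge set exactly $S$, we have $\Phi_c(\ess{G}{S}) = |S|^c$. Setting $m \coloneqq |E(G)|$, grouping the edge-subsets $S \subseteq E(G)$ by their size yields
\[
    \ae{\Phi_c}{G} = \sum_{S \subseteq E(G)} (-1)^{|S|} |S|^c = \sum_{j = 0}^{m} \binom{m}{j} (-1)^j j^c.
\]
The next step is the classical finite-difference identity: for every polynomial $P$ of degree strictly less than $m$, one has $\sum_{j=0}^{m} \binom{m}{j} (-1)^j P(j) = 0$ (equivalently, $\sum_{j=0}^{m} \binom{m}{j} (-1)^{m-j} j^c = m!\,S(c,m)$, the Stirling number of the second kind, which is zero for $m > c$). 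Since $j^c$ is a polynomial of degree $c$ in $j$, it follows that $\ae{\Phi_c}{G} = 0$ whenever $m > c$, that is, whenever $|E(G)| > c$.

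Second, I would reduce a bound on the vertex-cover number to a bound on the number of edges. This is immediate: $\VC{G} \leq |E(G)|$ for every graph $G$, because selecting one endpoint from each edge produces a vertex cover. Thus $\VC{G} > c$ implies $|E(G)| > c$, and combined with the previous step this gives $\ae{\Phi_c}{G} = 0$ for every graph $G$ with $\VC{G} > c$.

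Finally, applying \cref{theo:indsub:easy} with $\tau \coloneqq c$ yields the claimed algorithm computing $\NUM{}\indsubs{(\Phi_c, k)}{G}$ in time $O(f(k) \cdot |V(G)|^{c+1})$ for some computable function $f$. There is no real obstacle here; the only nonroutine ingredient is recognizing that the $\ae{\Phi_c}{G}$ reduces to an $m$-th forward difference of the polynomial $x \mapsto x^c$, and all remaining work is handled by \cref{theo:indsub:easy}.
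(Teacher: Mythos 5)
Your proposal is correct and follows essentially the same route as the paper: group the edge-subsets by size to obtain $\sum_{j=0}^{m}\binom{m}{j}(-1)^j j^c$, invoke the finite-difference/Stirling identity to conclude the alternating enumerator vanishes once $\#E(G) > c$, bound $\VC{G} \leq \#E(G)$, and apply \cref{theo:indsub:easy} with $\tau = c$. No gaps.
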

\begin{proof}
    For each fixed $c$, we compute alternating enumerator of $\Phi_c$ which yields
    \begin{align}\label{eq:ae:edge_count}
        \ae{\Phi_c}{H} =  \sum_{S \subseteq E(H)} \Phi_c(\ess{H}{S}) (-1)^{\#S} = \sum_{k = 0}^{k} (-1)^{k} \binom{\#E(H)}{k} k^c =
    (-1)^{\#E(G)} \sum_{k = 0}^{\#E(G)} (-1)^{\#E(G) - k} \binom{\#E(H)}{k} k^c,
    \end{align}
    where we used that we can group together all $\binom{\#E(H)}{k}$ many graphs subgraphs of $H$ that contain $k$ edges. By
    \cite[Chapter 2.4, (9)]{enumeration_book}, the sum (\ref{eq:ae:edge_count}) is zero whenever $\#E(H) > c$.
    Meaning, that $\aek{\Phi}{c}{H} = 0$ for all graphs
    that contain more than $c$ edges, which implies that $\aek{\Phi}{c}{H} \neq 0 $ is only possible if $\VC{H} \leq \#E(H) \leq c$.
    Now, \cref{theo:indsub:easy} shows that
    we can compute $\NUM{}\indsubs{(\Phi_c, k)}{G}$ in time $O(f(k) \cdot |V(G)|^{c + 1})$ for some computable function $f$.
\end{proof}

In the following, we write $S_n \in \graphs{n}$ for the star graph with $n$ vertices. The edge set $S_n$ is
equal to $\{\{1, j\} : 2 \leq j \leq n\}$. Next, we show that there are graph parameters whose codomain
on $k$-vertex graphs is equal to $\{0, \dots, k\}$ but $\NUM{}\indsubsprob(\Phi)$ is easy to compute.

\begin{corollary}
    Let $\Phi(G) = \NUM{}\subs{S_{|V(G)|}}{G}$ denote the graph parameter that counts the
    number of universal vertices, then $\Phi_k \colon \graphs{k} \to \{0, \dots k\}$ and
    $\NUM{}\indsubsprob(\Phi)$ can be computed in time $O(f(k) \cdot |V(G)|^2)$.
\end{corollary}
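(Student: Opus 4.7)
The plan is to apply \cref{theo:indsub:easy} with vertex-cover threshold $\tau = 1$, which directly yields the claimed running time $O(f(k) \cdot |V(G)|^{1+1})$. Hence the only nontrivial task is to show that $\ae{\Phi}{H} = 0$ for every graph $H$ with $\VC{H} \geq 2$.

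First I would make the combinatorial meaning of $\Phi$ explicit: a subgraph of a $k$-vertex graph $G$ that is isomorphic to $S_k$ must use all $k$ vertices, so its center is forced to be a universal vertex of $G$, and conversely each universal vertex $v$ yields exactly one such star subgraph (its edges being those incident to $v$). Thus $\Phi(G) = \sum_{v \in V(G)} [v \text{ is universal in } G]$, which in particular shows that $\Phi_k$ takes values in $\{0, \dots, k\}$.

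Next I would substitute this pointwise decomposition into the definition of the alternating enumerator and swap the two summations to obtain
\[
    \ae{\Phi}{H} = \sum_{v \in V(H)} \sum_{S \subseteq E(H)} (-1)^{\#S} \cdot [v \text{ is universal in } \ess{H}{S}].
\]
For a fixed $v$, the inner indicator is nonzero exactly when $v$ is universal in $H$ \emph{and} $S$ contains all $k-1$ edges incident to $v$; writing $N_v$ for that edge set, the inner sum factors as $(-1)^{\#N_v} \sum_{T \subseteq E(H) \setminus N_v} (-1)^{\#T}$, which vanishes unless $E(H) \setminus N_v = \emptyset$. Therefore a nonzero contribution forces $v$ to be universal in $H$ and all edges of $H$ to be incident to $v$, i.e., $H$ must equal the star $S_{|V(H)|}$, whose vertex-cover number is $1$.

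Putting this together, $\ae{\Phi}{H} = 0$ for every $H$ with $\VC{H} \geq 2$, so \cref{theo:indsub:easy} applies with $\tau = 1$ and gives the desired $O(f(k) \cdot |V(G)|^2)$ algorithm. There is no serious obstacle: once the linear decomposition of $\Phi$ is pushed inside the alternating sum, everything collapses via the elementary identity $\sum_{T \subseteq X} (-1)^{\#T} = [X = \emptyset]$.
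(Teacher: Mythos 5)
Your proof is correct, but it takes a genuinely different route from the paper. The paper observes that $\Phi_k(G) = \NUM{}\subs{S_k}{G}$ is \emph{already} written in the sub-basis, so by the uniqueness of the coefficients in \cref{theo:sub:base} (which identifies $\subCoeff{\Phi}{k}{H}$ with $(-1)^{\#E(H)}\ae{\Phi}{H}$) it follows immediately that $\ae{\Phi}{H} \neq 0$ if and only if $H \cong S_{|V(H)|}$; no computation of the alternating sum is needed. You instead compute $\ae{\Phi}{H}$ from scratch: decompose $\Phi$ into per-vertex universality indicators, swap sums, and collapse each inner sum via $\sum_{T \subseteq X}(-1)^{\#T} = [X = \emptyset]$. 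Your calculation is sound (the condition ``$v$ universal in $\ess{H}{S}$'' correctly forces $N_v \subseteq S$ and hence $v$ universal in $H$, and the factorization kills everything except $E(H) = N_v$), and it actually yields slightly more information, namely the exact value $\ae{\Phi}{S_k} = (-1)^{k-1}$ for the star. What the paper's argument buys is brevity and reuse of machinery; what yours buys is a self-contained, elementary verification that does not depend on the invertibility of the $\mathrm{Ext}$ matrix. One negligible edge case worth a footnote in your write-up: for $k = 2$ the star $S_2 = K_2$ has two universal vertices but only one $S_2$-subgraph, so the identification of $\Phi$ with the universal-vertex count is off by a factor of two there; this affects neither the codomain claim nor the vanishing argument.
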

\begin{proof}
    It is easy to see that $\Phi_k$ maximal for $K_k$ and that $\Phi_k(K_k) = k$. This
    immediately implies that the codomain of $\Phi_k$ is equal to $\{0, \dots k\}$.
    Further, due to \cref{theo:indsub:sub:base} we get $\ae{\Phi}{G} \neq 0$ if
    and only if $G = S_{|V(G)|}$. This allows to use \cref{theo:indsub:easy}
    which shows that $\NUM{}\indsubsprob(\Phi)$ can be computed in time $O(f(k) \cdot |V(G)|^2)$ since $\VC{S_n} = 1$.
\end{proof}

\subsection{$\NUM{}\indsubsprob(\Phi)$ is Harder than $\NUM{}\cpindsubsprob(\Phi)$}\label{sec:indsub:harder}

We write $P_n$ for the path graph that contains $n$ vertices.

\begin{lemma}
    Let $\Phi(G) = \NUM{}\subs{P_{|V(G)|}}{G}$ denote the graph parameter that counts the
    number of Hamiltonian paths, then
    \begin{itemize}
        \item $\NUM{}\indsubsprob(\Phi)$ is \w-hard.
        \item $\NUM{}\cpindsubsprob(\Phi, \mathcal{H})$ is FPT for any recursively enumerable class of graphs $\mathcal{H}$.
        Further, there is an algorithm that computes $\NUM{}\cpindsubs{(\Phi, H)}{G}$ in time $O(f(|V(H)|) \cdot |V(G)|^{2})$
        for some computable function $f$.
    \end{itemize}
\end{lemma}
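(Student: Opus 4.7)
The plan is to observe that in both settings the outer sum collapses to counting path subgraphs of $G$ (unrestricted in one case, colorful in the other), then to invoke a classical hardness result for the first part and a standard color-coding dynamic program for the second.

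For the first claim, I would argue
\[
\NUM{}\indsubs{(\Phi, k)}{G} = \sum_{A \in \binom{V(G)}{k}} \NUM{}\subs{P_k}{G[A]} = \NUM{}\subs{P_k}{G},
\]
using that every $P_k$-subgraph of $G$ sits on a unique $k$-subset of $V(G)$, and every $P_k$-subgraph of $G[A]$ with $|A|=k$ necessarily occupies all of $A$. The right-hand side is the number of $k$-vertex path subgraphs of $G$, which is \w-hard to compute (a classical result of Flum and Grohe on counting paths and cycles).

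For the second claim, let $k = |V(H)|$ and let $c \colon V(G) \to V(H)$ be the coloring. The same collapsing yields that $\NUM{}\cpindsubs{(\Phi, H)}{G}$ equals the number of $P_k$-subgraphs of $G$ whose vertex set is colorful, i.e., mapped bijectively onto $V(H)$ by $c$. I would count these via color coding: for $v \in V(G)$ and $S \subseteq V(H)$ with $c(v) \in S$, let $f(v, S)$ be the number of simple paths in $G$ ending at $v$ whose vertex set is colorful with color set exactly $S$, with base case $f(v, \{c(v)\}) = 1$ and recurrence
\[
f(v, S) = \sum_{\substack{u \in N_G(v) \\ c(u) \in S \setminus \{c(v)\}}} f(u, S \setminus \{c(v)\}).
\]
The answer is $\tfrac{1}{2} \sum_v f(v, V(H))$, where the factor $\tfrac{1}{2}$ accounts for the two orientations of each $P_k$-subgraph (valid for $k \geq 2$ since $|\auts{P_k}| = 2$; the case $k = 1$ is trivial). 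The table has $|V(G)| \cdot 2^k$ entries, each filled in $O(\maxdeg G)$ time, for a total of $O(2^k \cdot k \cdot |V(G)|^2)$, which fits the claimed $O(f(|V(H)|) \cdot |V(G)|^2)$ bound.

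No step poses a real obstacle. The key identity behind both parts is simply that, because $|V(P_k)| = k$, a $P_k$-subgraph of $G$ with vertex set $A$ is the same object as a Hamiltonian path subgraph of $G[A]$; this is what turns the indsub and cpindsub sums into straight counts of (unrestricted, respectively colorful) $k$-path subgraphs of $G$, after which the two parts follow from standard path-counting results.
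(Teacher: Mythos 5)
Your proposal is correct. The first bullet follows the paper's route almost exactly: the paper also collapses $\NUM{}\indsubs{(\Phi, k)}{G}$ to $\NUM{}\subs{P_k}{G}$ (via \cref{theo:indsub:sub:base}, whose proof is precisely your partition argument that every $P_k$-subgraph occupies a unique $k$-subset) and then invokes hardness of counting $k$-paths; the paper cites the unbounded vertex-cover criterion for $\NUM{}\subprob$ of the class of paths, you cite Flum--Grohe, and both are standard and adequate. For the second bullet your route is genuinely different. The paper expands $\NUM{}\cpindsubs{(\Phi, H)}{G}$ in the $\cphom$-basis via \cref{lem:chi:cpindub}, observes via \cref{theo:sub:base} that the alternating enumerator $\ae{\Phi}{\ess{H}{A}}$ vanishes unless $\ess{H}{A} \cong P_{|V(H)|}$, and so reduces to $\NUM{}\subs{P_{|V(H)|}}{H}\cdot\NUM{}\cphoms{P_{|V(H)|}}{G}$, computed by the treewidth-$1$ dynamic program. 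You instead collapse the sum to a direct count of colorful $P_k$-subgraphs and count them with the standard subset-DP over color classes; your recurrence and the factor $\tfrac{1}{2}$ for the two orientations are correct, and the $O(2^k\cdot|V(G)|^2)$ bound matches the claim. Your argument is more elementary and self-contained; the paper's argument is deliberately phrased through the basis machinery because the surrounding section's point is that the coefficient structure (only the path has nonvanishing alternating enumerator, and the path has treewidth $1$ but unbounded vertex cover) is exactly what explains why $\NUM{}\cpindsubsprob(\Phi)$ is FPT while $\NUM{}\indsubsprob(\Phi)$ is \w-hard. Both proofs are sound.
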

\begin{proof}
    \begin{itemize}
        \item Observe that the vertex-cover size of the $k$-vertex path $P_k$ is equal to $\lceil k/2 \rceil$. Let
        $\mathcal{P} = \{P_k : k \in \nat \}$ be the set of all paths. Then, $\NUM{}\subprob(\mathcal{P})$ is \w-hard due
        to \cite[Theorem 1.1]{sub}. Further, for each $k \in \nat$, we use $\Phi_k \colon \graphs{k} \to \Q$
        to denote the restriction of $\Phi$ on $k$-vertex graphs, thus
        $\Phi_k(G) =  \NUM{}\subs{P_{k}}{G}$. We now apply \cref{theo:indsub:sub:base} to each $\Phi_k$ and obtain
        \[\NUM{}\indsubs{(\Phi, k)}{G} = \NUM{}\indsubs{(\Phi_k, k)}{G} =  \NUM{}\subs{P_k}{G} \qquad \text{for all $G \in \graphs{}$.}\]
        This implies that $\NUM{}\indsubsprob(\Phi)$ is also \w-hard.
        \item For each $H \in \mathcal{H}$ and graph $G$, we use \cref{lem:chi:cpindub} and obtain
        \[\NUM{}\cpindsubs{(\Phi, H)}{G}
        = \sum_{A \subseteq E(H)}  (-1)^{\#A} \cdot \ae{\Phi}{\ess{H}{A}} \cdot \NUM{}\cphoms{\ess{H}{A}}{G}.\]
        Next, we use $\Phi_k(G) = \subs{P_k}{G}$ and obtain $\ae{\Phi}{\ess{H}{A}} \neq 0$ if and only
        if $\ess{H}{A} \cong P_{|V(H)|}$ due to \cref{theo:sub:base}. This means that
        \[\NUM{}\cpindsubs{(\Phi, H)}{G} =
           \NUM{}\subs{P_{|V(H)|}}{H} \cdot \NUM{}\cphoms{P_{|V(H)|}}{G}. \]
        Thus, in order to compute $\NUM{}\cpindsubs{(\Phi, H)}{G}$, we first compute $\NUM{}\subs{P_{|V(H)|}}{H}$ in time
        $O(f_1(|V(H)|))$ for some computable function $f_1$. Afterward, we compute $\NUM{}\cphoms{P_{|V(H)|}}{G}$,
        which can be done in time $O(f_2(|V(H)|) \cdot |V(G)|^{2})$ since the treewidth of $P_{|V(H)|}$ is
        equal to $1$ (see \cite[Theorem 2.35]{roth_phd}. Note that the DP algorithm has to be
        modified to work with $\NUM{}\cphom$). Thus, we can compute $\NUM{}\cpindsubs{(\Phi, H)}{G}$
        in time $O(f(|V(H)|) \cdot |V(G)|^{2})$ for some computable function~$f$.
        \qedhere
    \end{itemize}
\end{proof}

\begin{remark}
    The difference of $\NUM{}\indsubsprob(\Phi)$ and $\NUM{}\cpindsubsprob(\Phi, \mathcal{H})$ can be best understood by choosing
    $\mathcal{H} = \{K_k : k \in \nat\}$. Now, according to \cref{lem:chi:cpindub} and \cref{theo:indsub:sub:base} we obtain
    \begin{align}
        \NUM{}\indsubs{(\Phi, k)}{G} &= \sum_{H \in \graphs{k}^* }  (-1)^{\#E(H)} \cdot
        \ae{\Phi}{H} \cdot \NUM{}\subs{H}{G}\qquad\text{and} \label{eq:rem:indsub} \\
        \NUM{}\cpindsubs{(\Phi, K_k)}{G} &= \sum_{H \in \graphs{k}^* }  (-1)^{\#E(H)}
        \cdot \frac{k!}{\NUM{}\aut(H)} \cdot  \ae{\Phi}{H} \cdot \NUM{}\cphoms{H}{G}.\label{eq:rem:cpindsub}
    \end{align}
    This means that $\NUM{}\indsubs{(\Phi, k)}{G}$ can be understood as a sum of $\NUM{}\subs{H}{G}$ where a term appears
    whenever $\ae{\Phi}{H} \neq 0$ and $\NUM{}\cpindsubs{(\Phi, K_k)}{G}$ can be understood as a sum of $\NUM{}\cphoms{H}{G}$
    where a term appears whenever $\ae{\Phi}{H} \neq 0$. Now, the problem $\NUM{}\subprob(\mathcal{F})$ is \w-hard if and
    only if the class $\mathcal{F}$ has unbounded vertex cover size. However, $\NUM{}\cphomsprob(\mathcal{F})$ is
    \w-hard if and only if the class $\mathcal{F}$ has unbounded treewidth. Note, that the treewidth is a lower bound for
    the vertex cover size and that there are classes $\mathcal{F}$ of graphs with unbounded vertex cover size but bounded
    treewidth. This explains why $\NUM{}\cpindsubsprob(\Phi, \mathcal{H})$ could be FPT in cases where
    $\NUM{}\indsubsprob(\Phi)$ is \w-hard. Still, there is a parameterized Turing reduction
    from  $\NUM{}\cpindsubsprob(\Phi, \mathcal{H})$ to $\NUM{}\indsubsprob(\Phi)$ (see \cite[Lemma 10]{alge}).

    Nevertheless, it is often much simpler to use $\NUM{}\cpindsubsprob(\Phi, \mathcal{H})$ since we can apply
    \emph{complexity monotonicity} on the $\NUM{}\cphom$-terms (see \cite[Lemma 7]{alge}). Hence, the hardness of
    $\NUM{}\cpindsubs{(\Phi, K_k)}{G} $ is governed by the term with highest treewidth that appears in the sum
    of \cref{eq:rem:cpindsub}. However, it might be that nontrivial cancellations occur in the sum
    of \cref{eq:rem:indsub}, which is why this sum is much harder to analyze. See \cite[Example 1.12]{hom:basis}
    for an example of nontrivial cancelations.
\end{remark}

\section{Hardness Results for Modular Counting}\label{sec:mod}

In this section, we show that our hardness results from
\cref{sec:reduction,sec:edge_monotone} stay true when considering modulo counting.
To that end, we introduce the following two classes of problems.
\begin{definition}
    For each prime number $p$, and parameterized counting problem
    $(\NUM{}A, \kappa)$, we define the following two problems:
    \begin{itemize}
        \item The parameterized decision problem $(\MODP{p}A, \kappa)$ decides on
        input $x$ if $\NUM{}A(x)$ is divisible by $x$.
        \item The parameterized problem $(\NUMP{p}A, \kappa)$ computes on input
        $x$ the values $\NUM{}A(x) \bmod p$.
        \qedhere
    \end{itemize}
\end{definition}

The definition of $\MODP{p}A$ was introduced by Beigel, Gill, and Hertramp in \cite{Beigel_MOD_p}
and is a generalization of $\MODP{2}A = \oplus A$ which was already studied by
Papadimitriou and Zachos back in the 1980s (see \cite{parity_p}).
However, Faben \cite{Faben_sharp_p} realized that the definition of $\MODP{p}A$ does not
completely capture the complexity of modulo counting which is why
he introduced the definition $\NUMP{p}A$.

It is easy to see that an FPT algorithm for $(\NUMP{p}A, \kappa)$ yields an
FPT algorithm for $(\MODP{p}A, \kappa)$. However, already
in \cite[p. 3]{Faben_sharp_p} Faben mentioned that one can design artificial problems
for which $\NUMP{p}A$ is substantially harder than $\MODP{p}A$, even though
he was not able to find natural examples for this phenomenon.
Observe that for $p = 2$, the problems $(\NUMP{2}A, \kappa)$
and $(\MODP{2}A, \kappa)$ are equivalent. A detailed discussion of the
differences between the complexity classes $\NUMP{p}P$ and $\MODP{p}P$
can be found in Faben's PhD thesis \cite[Section 3.1.2]{Faben_PhD}.

Due to the differences between $\NUMP{p}A$ and $\MODP{p}A$, we introduce two different complexity classes
for which we rely on the $\NUM{}\clique{}$ problem.

\begin{definition}
    For each prime number $p$, we define the following classes of problems.
    \begin{itemize}
        \item The class of $\wMODP{p}$-hard problems consists of all problems $(\MODP{p}A, \kappa)$ that
        have a parameterized Turing reduction from $\MODP{p}\clique{}$ to $\MODP{p}P$.
        \item The class of $\wP{p}$-hard problems consists of all problems $(\NUMP{p}A, \kappa)$ that
        have a parameterized Turing reduction from $\NUMP{p}\clique{}$ to $\NUMP{p}P$.
        \qedhere
    \end{itemize}
\end{definition}

In \cref{lem:clique:mod:equiv}, we show that the problem $\MOD{}\clique$ and
$\NUMP{p}\clique$ are equivant to each other under parameterized Turing reduction.
However, this does not imply that the class of $\wMODP{p}$-hard problems is equivalent to
the class of $\wP{p}$-hard problems.
Further, assuming rETH, we show in \cref{sec:appendix:mod} that there is a constant $\delta$,
such that for all prime numbers $p$, no algorithm decides $\MOD{}\clique{}$
on input $(G, k)$ in time $O(|V(G)|^{\alpha k})$.

\begin{restatable*}{corollary}{corcliquemodeth}\label{cor:k-clique:mod:ETH}
    Assuming rETH, there is a constant $\gamma > 0$ such that for all prime numbers $p$, no
    algorithm (that reads the whole input) solves $\MODP{p}\clique{}$ on input $(G, k)$ in time $O(|V(G)|^{\alpha k})$.
\end{restatable*}

To show the hardness of $\NUMP{p}\indsub(\Phi)$, we use the following corollary
which is a variation of \cref{cor:tight:bounds}.

\begin{corollary}\label{cor:tight:bounds:mod}
    Let $\Phi \colon \graphs{} \to \{0, \dots, c\}$ be a graph parameter, $0 < \psi \leq 1$ and $p$ be a prime number.
    Assume that there set of graphs $A \coloneqq \{G_k\}$ and a function $h \colon A \to \nat$ such that each $G_k$
    contains $K_{h(G_k), h(G_k)}$ as a subgraph and $\ae{\Phi}{G_k} \not \equiv_p 0$.
    \begin{itemize}
        \item If $h(G) \in \omega(1)$, then the problem $\NUMP{p}\indsub(\Phi)$ (and $\NUMP{p}\cpindsubsprob(\Phi)$) is $\wMOD$-hard and
        $\wP{p}$-hard.
        \item If $h(G) \in \Omega(|V(G)|^\psi)$, then there is a global constant $\gamma > 0$
        such that for each fixed $k = |V(G_j)|$, there is no algorithm (that reads the
        whole input)
        that for every \(G\) computes the number $\NUMP{p}\indsubs{(\Phi, k)}{G}$ (or
        $\NUMP{p}\cpindsubs{(\Phi, \graphs{k})}{G}$) in time $O(|V(G)|^{\gamma  k})$
        unless rETH fails.
        Here, the constant $\gamma$ depends on $h$ but is independent of $\Phi$ and $k$.
        \item If $h(G) \in \Omega(|V(G)|^\psi)$, then there is no algorithm (that reads
            the whole input) that for every \(G\) computes
        $\NUMP{p}\indsubs{(\Phi, k)}{G}$  (or $\NUMP{p}\cpindsubs{(\Phi, \graphs{k})}{G}$)
        in time $O(f(k) \cdot |V(G)|^{o(k)})$ for all computable functions $f$ unless rETH
        fails.
    \end{itemize}
\end{corollary}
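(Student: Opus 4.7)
The plan is to mirror the proof of \cref{cor:tight:bounds} step by step, replacing each ingredient with its mod-$p$ analogue. The three reductions we rely on are: (i) \cref{lem:tight:biclique:construction}, which reduces $\clique$ to counting color-prescribed homomorphisms into a graph $F$ containing a large biclique; (ii) \cref{lem:cphom:to:indsub}, which uses complexity monotonicity to recover $\NUM{}\cphoms{F}{\star}$ from $\NUM{}\cpindsubs{(\Phi,F)}{\star}$ whenever $\ae{\Phi}{F}\neq 0$; and (iii) \cref{lem:redu:cpindsub:indsub}, which reduces $\NUM{}\cpindsub(\Phi)$ to $\NUM{}\indsubsprob(\Phi)$ via $2^k$ inclusion-exclusion queries. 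Each of these is an integer-valued identity and so descends to $\Z/p\Z$; the only place that needs extra care is the interpolation in (ii).

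For the first bullet, I will construct parameterized Turing reductions from $\MODP{p}\clique$ and $\NUMP{p}\clique$ to $\NUMP{p}\cpindsub(\Phi)$. Given $(G,k)$, search the set $A$ exhaustively for a graph $G_j$ with $h(G_j)\geq k$; by hypothesis $G_j$ contains $K_{k,k}$ as a subgraph and $\ae{\Phi}{G_j}\not\equiv_p 0$. Applying \cref{lem:tight:biclique:construction} yields a $G_j$-colored graph $G'$ of size $O(|V(G)|)$ such that the number of $k$-cliques in $G$ equals $\NUM{}\cphoms{G_j}{G'}$; this is an equality in $\Z$ and therefore holds mod $p$. Now \cref{lem:chi:cpindub} expresses $\NUM{}\cpindsubs{(\Phi,G_j)}{\star}$ as a linear combination of $\NUM{}\cphoms{\ess{G_j}{A}}{\star}$ in which $\NUM{}\cphoms{G_j}{\star}$ carries coefficient $\pm\ae{\Phi}{G_j}$; since this coefficient is a unit in $\Z/p\Z$, the complexity-monotonicity interpolation of \cite[Lemma~7]{alge} isolates $\NUM{}\cphoms{G_j}{G'}\bmod p$ from mod-$p$ values of $\NUM{}\cpindsubs{(\Phi,G_j)}{\star}$ just as it does over $\Q$ (this is precisely the setting in which the mod-$p$ hardness statements of \cite{alge} are obtained). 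Recovering the number of $k$-cliques mod $p$ answers both $\NUMP{p}\clique$ and $\MODP{p}\clique$, giving $\wMODP{p}$-hardness and $\wP{p}$-hardness of $\NUMP{p}\cpindsub(\Phi)$. Finally, \cref{lem:redu:cpindsub:indsub} is an inclusion-exclusion identity over $\Z$ and so transports the hardness to $\NUMP{p}\indsub(\Phi)$.

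For the quantitative bullets, I will re-run the proof of \cref{theo:lower:bound:bicliques} using \cref{cor:k-clique:mod:ETH} in place of \cref{theo:k-clique:ETH}. That mod-$p$ rETH bound furnishes a universal $\alpha>0$ such that, for every prime $p$, $\MODP{p}\clique$ on $k$-vertex instances cannot be solved in $O(|V(G)|^{\alpha k})$ time; plugging this into the reduction above with $k$ replaced by $h(G_j)$ and tracking that $|V(G')|=O(|V(G)|)$, a hypothetical $O(|V(G)|^{\gamma k^\psi})$-time algorithm for $\NUMP{p}\indsub(\Phi)$ on $k$-vertex patterns with $h(k)\in\Omega(k^\psi)$ would, composed with \cref{lem:cphom:to:indsub} and \cref{lem:redu:cpindsub:indsub} interpreted mod $p$, give an $O(|V(G)|^{\alpha h(k)})$-time algorithm for $\MODP{p}\clique$, contradicting rETH. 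The choice of $\gamma$, the threshold $N$ above which the reduction is meaningful, and the handling of small $k$ via the trivial sublinear-time obstruction are all as in the proof of \cref{cor:tight:bounds}; in particular, taking $\gamma\coloneqq\min(\alpha/c,\alpha/2)$ for an appropriate slack constant $c$ absorbs the $+2$ overhead introduced by \cref{lem:redu:cpindsub:indsub}. The third bullet (ruling out $f(k)|V(G)|^{o(k)}$) follows from the second in the usual manner, since any such algorithm would eventually beat $|V(G)|^{\gamma k^\psi}$ for $k$ chosen sufficiently large relative to $|V(G)|$.

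The main obstacle is the step that passes from ``$\ae{\Phi}{F}\neq 0$'' to ``$\ae{\Phi}{F}\not\equiv_p 0$'' inside the complexity-monotonicity argument of \cite[Lemma~7]{alge}. The argument isolates one specific $\NUM{}\cphoms{F}{\star}$ term from the linear combination provided by \cref{lem:chi:cpindub} by inverting a square subsystem whose only pivot that actually needs to be a unit is the coefficient of $\NUM{}\cphoms{F}{\star}$, namely $\pm\ae{\Phi}{F}$; our hypothesis places this coefficient outside $p\Z$, so the entire linear-algebraic manipulation carries over verbatim to $\Z/p\Z$. Everything else is a direct transcription of the proof of \cref{cor:tight:bounds}.
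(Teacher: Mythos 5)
Your proposal is correct and follows essentially the same route as the paper: emulate the proof of \cref{cor:tight:bounds}, observe that the reduction chain (\ref{reduction:chain}) descends to $\field{p}$ because the pivotal coefficient $\pm\ae{\Phi}{G_k}$ in \cref{lem:chi:cpindub} is a unit modulo $p$, and replace \cref{theo:k-clique:ETH} by \cref{cor:k-clique:mod:ETH} in the quantitative step. The paper's own proof is just a terser version of this, citing \cite[Appendix A]{alge} for the fact that the interpolation works over $\field{p}$ where you argue it directly.
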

\begin{proof}
    We show that an algorithm for $\NUMP{p}\indsubs{(\Phi, k)}{\star}$ (or $\NUMP{p}\cpindsubs{(\Phi, \graphs{k})}{\star}$)
    can be used to compute $\NUMP{p} \clique{}$. This is enough since $\NUMP{p} \clique{}$
    and $\MODP{p} \clique{}$ are equivalent.

    We emulate the proof \cref{cor:tight:bounds}. For this, observe that the chain of reductions
    (\ref{reduction:chain}) still holds when taking all computations modulo $p$. This was already
    shown in \cite[Appendix A]{alge}. In particular, we observe that we can still write
    $\NUMP{p}\cpindsub(\Phi)$ as a linear combination of $\NUMP{p}\cphom$-terms
    over the finite field $\field{p}$ (see \cref{lem:chi:cpindub}). This means that our
    reduction is possible if we can find graphs with large bicliques whose
    alternating enumerator is nonvanishing mod $p$.
    This is exactly the case when there is a graph $G_k$ that contains $K_{h(G_k), h(G_k)}$
    as a subgraph and $\ae{\Phi}{G_k} \not \equiv_p 0$. We obtain the tight bounds under
    rETH by replacing \cref{theo:k-clique:ETH} with
    \cref{cor:k-clique:mod:ETH} in the proof of \cref{theo:lower:bound:bicliques}.
\end{proof}

\thmedgemonMOD
\begin{proof}
    The proof of this theorem works exactly like the proof of \cref{theo:edge_mono} with
    the only difference that we replace
    \cref{cor:tight:bounds} with \cref{cor:k-clique:mod:ETH} and we observe that the
    reduction from \cref{lem:inc:exc}
    still works for modulo counting. Hereby it is important to point out that
    in the proof of \cref{theo:edge_mono}, we picked a prime $p > c$ and only
    considered prime powers of the form $q(k) \coloneqq p^t$. This allows us to again use
    \cref{theo:edge_mono:prime:power:bicliques}
    to get our hardness results for the prime power case. Note, that \cref{theo:edge_mono:prime:power:bicliques}
    ensures that we can find graphs that contain large bicliques and whose alternating
    enumerator is nonvanishing modulo $p$ (see definition of concentrated);
    or we can find a reduction to $\Scat{\Phi}{q}$ which is nontrivial on prime
    powers of the form $p^t$.
\end{proof}

Lastly, observe that each graph property $\Phi$ can be viewed as a graph parameter with
codomain $\{0, 1\}$.
Thus, we can apply \cref{cor:hardness:mod} to obtain that $\NUMP{p}{}\indsubsprob(\Phi)$ is
$\wP{p}$-hard for all prime numbers $p$.

\coredgemonMOD*

{
    \bibliographystyle{alphaurl}
    \bibliography{main}
}

\appendix
\clearpage
\normalsize
\section{The Fixed Points of Automorphism Groups}

The following explanations, definitions, and statements are taken from \cite{edge_monotone}. However,
the idea of using fixed points to analyze the alternating enumerator was originally developed in \cite{alge}.

\subsection{The Fixed Points of Graph Automorphisms} \label{app:fix:basics}

For each graph $G$, the set of automorphisms $\auts{G}$ forms a group with composition as
the group operation. Observe that the automorphism group of a clique \(\auts{K_n}\) is just
the symmetric group \(\sym{n}\).

Let \(G\) denote a graph and consider a subgroup \(\Gamma \subseteq \auts{G}\).
Any \(g \in G\) is a bijection on \(V(G)\), which we may interpret as \(g\) permuting the
vertices of \(G\).
We may also interpret $g$ as an operation on the edges of \(G\).
However, as we wish to use results from the literature (such as
the Orbit-stabilizer Theorem), we need to describe this operation on the edges of \(G\)
using the language of group actions.

Formally, we can turn the operation on vertices into a {\em
group action} $\cdot : \Gamma\times E(G)\to E(G)$ that tells us how each member of
$\Gamma$ moves the edges of $G$.
Specifically, we define $g\cdot \{u,v\} \coloneqq \{g(u),g(v)\}$.
We interpret said group action \(\cdot\) multiplicatively and typically write just \(g\{u, v\}\).

Extending the previous group action, \(\Gamma\) also acts on
edge-subgraphs of \(G\) via
\begin{align*}
    V(g\cdot A) &\coloneqq V(A),\; \text{and}\\
    E(g\cdot A) &\coloneqq \{\{g(u), g(v)\} \mid \{u, v\} \in E(A)\},
\end{align*}
where \(g \in \Gamma\) and \(A\) is an edge-subgraphs of \(G\).
Again, we interpret said group actions \(\cdot\) multiplicatively and typically write just \(gA\).

For each $\{u, v\} \in E(G)$ the set $\Gamma \cdot \{u, v\} \coloneqq \{g \cdot \{u, v\}
\mid g \in \Gamma\}$ is the \emph{orbit} of $\{u, v\}$.
Two different edges either have disjoint orbits or equal orbits.
We write \(E(G)/\Gamma\) to denote the set of all orbits of \(\cdot\).
Recall that \(E(G)/\Gamma\) forms a partition of \(E(G)\).
In a slight abuse of notation, for an orbit \(O\), we also write \(O\) for the
edge-subgraph \(\ess{G}{O}\); similarly, we use \(E(H)/\Gamma\) to denote the set of all
such edge-subgraphs.

We say that an edge-subgraph $A$ is a \emph{fixed point} of $\Gamma$ in $G$
if $gA = A$ for all $g \in \Gamma$.
We write $\fps{G}$ for the set of all fixed points of $\Gamma$ in $G$. As it turns out, a graph \(A\)
being a fixed point of \(\Gamma\) in \(G\) is a strong property that is
very useful to us. For now and as a first useful observation, we see that \(H\) inherits key
properties from \(G\).

\begin{lemmaq}[See {\cite[Lemma 3.1]{edge_monotone}}]\label{remark:subgroup:fixpoint}
    Let \(G\) denote a graph and let $\Gamma \subseteq \auts{G}$ denote a group.
    For any $H \in \fps{G}$ all of the following hold.
    \begin{enumerate}[(1)]
        \item We have $\Gamma \subseteq \auts{H}$.
        \item Any fixed point of \(\Gamma\) in \(H\) is a fixed point of \(\Gamma\) in
            \(G\) and
            $\fps{H} = \{A \in \fps{G} \mid E(A) \subseteq E(H)\}$.
            \qedhere%
    \end{enumerate}
\end{lemmaq}

Intuitively, \cref{remark:subgroup:fixpoint} implies that fixed points of fixed points are again fixed points,
which allows us to define a structure on the set of fixed points. Now, the following lemma shows that
this structure of fixed points is made up of orbits, meaning that orbits are the
\emph{basic building blocks} of $\fp(\Gamma, H)$.

\begin{lemmaq}[See {\cite[Lemma 4.1]{edge_monotone}}]\label{remark:fixed:point:union}
    Let \(H\) denote a graph and let \(\Gamma \subseteq \aut(H)\) denote a group.
    Further, let \(\Gamma\) act on \(E(H)\) and write \(E(H)/\Gamma\) for the set of all
    resulting orbits.
    Finally, let \(\Gamma\) act on edge-subgraphs of \(H\) and write \(\fp(\Gamma, H)\) for
    the set of all resulting fixed points of \(\Gamma\) in \(H\).

    Then, the edge set of each fixed point $F \in \fp(\Gamma, H)$ is the (possible empty)
    disjoint union of orbits $O_1, \dots, O_s$, where $O_i \in E(H)/\Gamma$
    and each disjoint union of orbits yields a fixed point.
    The partition into orbits is unique.
\end{lemmaq}

\cref{remark:fixed:point:union} induce an order of the fixed points in
\(\fps{H}\): we say a fixed point \(F\) is a \emph{sub-point} of another fixed point
\(G\) if we can obtain \(G\) from the union of \(F\) and (potentially multiple) orbits in
\(E(H)/\Gamma\). This allows us to define fixed points as the union of orbits.

\begin{definition}[See {\cite[Definition 4.5]{edge_monotone}}]\label{def:10.4-1}
    Let \(H\) denote a graph and let \(\Gamma \subseteq \auts{H}\) denote a group.
    Further, let \(\Gamma\) act on \(E(H)\) and write \(E(H)/\Gamma\) for the set of all
    resulting orbits.
    Finally, let \(\Gamma\) act on edge-subgraphs of \(H\) and write \(\fps{H}\) for
    the set of all resulting fixed points of \(\Gamma\) in \(H\).

    For a fixed point \(F\in \fps{H}\), its \emph{orbit factorization} \(\od{F}\) is the
    unique subset of \(E(H)/\Gamma\) whose union is~\(F\):
    \[
    F = \bigcup \od{F}.
    \]
    The \emph{level} of \(F\), denoted by \(\hl{F}\), is the size of the orbit
    factorization of \(F\)
    \[
    \hl{F} \coloneqq |\od{F}|.
    \]
    Finally, for two fixed points \(F_1, F_2  \in \fps{H}\), we say that \(F_2\) is a \emph{sub-point}
    of \(F_1\), denoted by \(F_2 \subseteq F_1\), if the orbit factorization of \(F_2\) is
    a subset of the orbit factorization of \(F_1\). If the inclusion is strict, we say
    that \(F_2\) is a \emph{proper sub-point} of \(F_1\).
\end{definition}

The next lemma makes it possible to group the sub-points of a fixed point according to their level. It can
be proven by observing that each orbit divides the group order $|\Gamma|$, which is a $p$ power, due to the orbit-stabilizer-theorem.

\begin{lemmaq}[See {\cite[Lemma 4.6]{edge_monotone}}]\label{lem:10.4-2}
    Let \(H\) denote a graph and let \(\Gamma \subseteq \auts{H}\) denote a \(p\)-group.

    For any fixed point \(F \in \fps{H}\), we have \[
        (-1)^{\NUM{}E(F)} \equiv_p (-1)^{\hl{F}}. \qedhere%
    \]
\end{lemmaq}

Finally, we are at the point where we can use fixed points to compute the alternating enumerator
modulo a prime number. For this, we use the following lemma which was originally proven in
\cite{alge}.

\begin{lemma}[\cite{alge}]\label{lem:chi:comp}
    Let \(H\) denote a graph, let \(\Gamma \subseteq \auts{H}\) denote a \(p\)-group, and
    let  $\Phi \colon \graphs{} \to \Z$ denote a graph parameter, then
    \begin{align*}
        \ae{\Phi}{H} \equiv_p \sum_{ A \in \fp(\Gamma, H) } \Phi(A) (-1)^{\NUM{}E(A)} \equiv_p \sum_{ A \subseteq H} \Phi(A) (-1)^{\hl{A}} .
        \tag*{\qedhere}
    \end{align*}
\end{lemma}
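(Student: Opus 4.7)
The plan is to establish both congruences by exploiting the $\Gamma$-action on the set of edge-subgraphs of $H$ together with the orbit-stabilizer theorem. Since $\Gamma$ acts on edge-subgraphs, the collection of all edge-subgraphs of $H$ partitions into $\Gamma$-orbits. I will show that only the fixed points of size~$1$ contribute modulo $p$, and that for any fixed point the two versions of the sign match modulo $p$.

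First, I would note two invariances within any orbit $O$. If $A, A' \in O$, then $A' = gA$ for some $g \in \Gamma \subseteq \auts{H}$, hence $A' \cong A$; since $\Phi$ is isomorphism-invariant, $\Phi(A') = \Phi(A)$, and of course $|E(A')| = |E(A)|$. Therefore the summand $\Phi(A)(-1)^{\#E(A)}$ appearing in the alternating enumerator is constant on every orbit, so I may rewrite
\begin{equation*}
    \ae{\Phi}{H} \;=\; \sum_{S \subseteq E(H)} \Phi(\ess{H}{S})(-1)^{\#S} \;=\; \sum_{O} |O| \cdot \Phi(A_O)(-1)^{\#E(A_O)},
\end{equation*}
where the outer sum ranges over all $\Gamma$-orbits of edge-subgraphs of $H$ and $A_O$ is any representative.

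Next comes the key modular step. By the orbit-stabilizer theorem, $|O| = |\Gamma|/|\mathrm{Stab}_\Gamma(A_O)|$, which divides $|\Gamma|$. Since $\Gamma$ is a $p$-group, $|\Gamma|$ is a power of $p$ and hence so is $|O|$. Therefore either $|O| = 1$ (meaning $A_O$ is a fixed point) or $|O| \equiv_p 0$. Reducing modulo $p$ kills all non-singleton orbits, which leaves exactly
\begin{equation*}
    \ae{\Phi}{H} \;\equiv_p\; \sum_{A \in \fp(\Gamma, H)} \Phi(A)\,(-1)^{\#E(A)},
\end{equation*}
establishing the first congruence.

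For the second congruence, I apply \cref{lem:10.4-2} term by term: for every fixed point $A \in \fp(\Gamma, H)$ we have $(-1)^{\#E(A)} \equiv_p (-1)^{\hl{A}}$. Since (as recorded in the paper just after \cref{def:10.4-1}) the sub-points $A \subseteq H$ are precisely the elements of $\fp(\Gamma, H)$, substituting yields
\begin{equation*}
    \sum_{A \in \fp(\Gamma, H)} \Phi(A)\,(-1)^{\#E(A)} \;\equiv_p\; \sum_{A \subseteq H} \Phi(A)\,(-1)^{\hl{A}},
\end{equation*}
completing the proof. There is no real obstacle here: the only subtlety is making explicit why $\Phi$ and $|E(\cdot)|$ are constant on orbits (isomorphism invariance plus the fact that $\Gamma$-elements are graph automorphisms), and why orbit-stabilizer forces orbit sizes to be $p$-powers.
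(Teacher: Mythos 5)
Your proposal is correct and follows essentially the same route as the paper's proof: partition the edge-subgraphs of $H$ into $\Gamma$-orbits, use isomorphism-invariance of $\Phi$ and preservation of edge counts to collapse each orbit to a single term weighted by its size, and invoke the orbit-stabilizer theorem so that all non-singleton orbits vanish modulo $p$. The paper leaves the second congruence implicit, whereas you correctly make it explicit via \cref{lem:10.4-2} and the identification of the sub-points of $H$ with $\fp(\Gamma, H)$; this is a faithful completion rather than a different argument.
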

\begin{proof}
    The proof originates from the proof of \cite[Lemma~1]{alge}. The main idea is
    that we can group together edge-subgraphs of $H$ that are in the same orbit
    of the group action $\cdot$ that takes as input a group element $g \in \Gamma$
    and an edge-subgraph $A$ of $H$, and outputs
    $g \cdot A$. Observe that graphs from the same orbit are always isomorphic
    to each other, thus $\Phi(A) = \Phi(g \cdot A)$, which allows us to group them
    together in the sum. Due to the orbit-stabilizer-theorem, the size of each orbit
    divides the group order of $\Gamma$, which is a $p$ power. Thus, if we compute
    the alternating enumerate mod $p$, then only the orbits of size $1$ (that is, fixed
    points) remain.
\end{proof}

We mainly use \cref{lem:chi:comp} to show that the alternating enumerator is non-vanishing in some special cases.

\lemcompae
\begin{proof}
    We show that $\ae{\Phi}{A} \equiv_p (-1)^{\hl{A} + 1}(b - a)$ which is sufficient since
    $b - a \not \equiv_p 0$.
    By definition, the fixed point \(A\) has exactly \(\binom{\hl{A}}{i}\)
    sub-points that have a level of exactly \(i\).
    Starting from \cref{lem:chi:comp}, we rewrite the alternating enumerator and obtain
    \begin{align*}
        \ae{\Phi}{A}
        &\equiv_p \sum_{ B \subseteq A } \Phi(B) (-1)^{\hl{B}} \\
        &\equiv_p -(-1)^{\hl{A}}(b - a) + \sum_{ B \subseteq A } b (-1)^{\hl{B}}
        & \text{(\(\Phi(A) = a\) and \(\Phi(B) = b\) for all \(B \subsetneq A\))}\\
        &\equiv_p  -(-1)^{\hl{A}}(b - a) + b \sum_{i = 0}^{\hl{A}} \binom{{\hl{A}}}{i} (-1)^i
        & \text{(group by level)}\\
        &\equiv_p -(-1)^{\hl{A}}(b - a) + b (-1 + 1)^{\hl{A}}
        & \text{(Binomial Theorem)} \\
        &\equiv_p (-1)^{\hl{A} + 1}(b - a).
    \end{align*}
    In total, this yields the claim.
\end{proof}

\subsection{The $p$-Sylow groups of \(\aut(K_{p^m})\) and its Fixed Point Structure}\label{sec:sylow}

We intend to construct a special subgroup of \(\auts{K_{p^m}}\) that we call the $p$-Sylow groups of \(\aut(K_{p^m})\).
To make things simpler, we identify the vertex set of $K_{p^m}$ with \(\fragmentco{0}{p}^{m}\). We start by
defining special types of bijections on \(\fragmentco{0}{p}^{m}\).

\begin{definition}[See {\cite[Definition 7.1]{edge_monotone}}]\label{11.12-3}
    Consider a prime $p$ and a positive integer $m$.
    For each \(j \in \fragmentco{0}{m}\), write \(\varphi_i\) for a function
    \(\fragmentco{0}{p}^{j} \to \fragmentco{0}{p}\) and set \(\varphi \coloneqq
    (\varphi_0,\dots,\varphi_{m-1})\).
    We define the function $\sylelm : \fragmentco{0}{p}^{m} \to \fragmentco{0}{p}^{m}$ via\footnote{We write $\varphi_0$ for
    $\varphi_0(())$ since $\varphi_0$ is a function that is defined on a single element}
    \[\sylelm(x_1, \dots, x_m) \coloneqq (x_1 + \varphi_0, x_2 + \varphi_1(x_1), x_3 + \varphi_2(x_1,
    x_2), \dots, x_m + \varphi_{m-1}(x_1, \dots, x_{m-1})), \] where all computations are done
    modulo $p$.

    We write \(\overline{p}^m\) for the set of all functions $\sylelm :
    \fragmentco{0}{p}^{m} \to \fragmentco{0}{p}^{m}$ that are obtained in this fashion,
    that is,
    \[
        \overline{p}^m \coloneqq \{ \sylelm \mid \varphi_i \in \fragmentco{0}{p}^{j} \to
        \fragmentco{0}{p} \}.
        \qedhere
    \]
\end{definition}

Now, it is easy to check that, together with composition $\circ$ as a group operation,
these specific bijections form a transitive $p$-group.

\begin{lemmaq}\label{11.13-1}
    The pair $\sylow_{p^m} \coloneqq (\overline{p}^m, \circ)$ is a transitive $p$-group on the set $V(K_{p^m})$.
\end{lemmaq}

We see that the fixed point structure of $\fp(\sylow_{p^m}, K_{p^m})$ has some very nice properties (e.g. containing large
bicliques). However, we need the lexicographic product of graphs in order to describe the fixed points of $\sylow_{p^m}$ in $K_{p^m}$.

\begin{definition}[See {\cite[Definition 7.8]{edge_monotone}}]
    For graphs $G_1, \dots, G_m$,
    we define their \emph{lexicographic product} $G_1 \wrGraph \cdots \wrGraph G_m$
    via
    \begin{align*}
        V(G_1 \wrGraph \cdots \wrGraph G_m) &\coloneqq
        V(G_1) \times \cdots \times V(G_m) \quad\text{and}\\
        E(G_1 \wrGraph \cdots \wrGraph G_m) &\coloneqq
        \{ \{(u_1, \dots, u_m), (v_1, \dots, v_m) \}\\&\hskip3em
        \mid \text{there is an $i \in \setn{m}$ with  $u_j = v_j$ for all $j < i$ and $\{u_i,
        v_i\} \in E(G_i)$}\}.
        \tag*{\qedhere}
    \end{align*}
\end{definition}
As an easy example, observe that we have  $G_1 \wrGraph G_2 \cong \metaGraph{G_1}{G_2,
\dots, G_2}$ and $G_1 \wrGraph G_2    \wrGraph G_3 \cong \metaGraph{G_1}{G_2 \wrGraph G_3,
\dots, G_2 \wrGraph G_3}$. Now, it turns out that the fixed points of $\fp(\sylow_{p^m}, K_{p^m})$
are the lexicographic product of so-called \emph{difference graphs}.

\begin{definition}[See {\cite[Definition 2.4]{edge_monotone}}]\label{def:difference:graphs}
    For a prime \(p\), an integer $m > 0$, and a set \(A \subseteq \field{p^m}^+\),
    we define the \emph{difference graph} \(\cgr{p^m}{A}\) via
    \begin{align*}
        V( \cgr{p^m}{A} ) &\coloneqq \field{p^m} \quad\text{and}\quad
        E( \cgr{p^m}{A} ) \coloneqq \{ \{u, v\} \mid u, v \in \field{p^m},  (u - v) \in A \cup (-A) \},
    \end{align*}
    where $-A = \{-x \mid x \in A\}$. Observe that $\cgr{p^m}{A} = K_{p^m}$ whenever \(A = \field{p^m}^+\).

    The \emph{level} of a difference graph $\cgr{p^m}{A}$ is the cardinality of $A$; we write
    \(\hl{\cgr{p^m}{A}}\) for the level of \(\cgr{p^m}{A}\).
\end{definition}

Note that for all odd prime numbers $p$ the set $\field{p}^+$ is equal to $\{1, \dots, (p-1)/2\}$ and
that $\field{2}^+$ is equal to $\{1\}$.

\begin{lemmaq}[See {\cite[Lemma 7.11]{edge_monotone}}]\label{lemma:fixed:points:p^m}
    For any prime \(p\) and any positive integer \(m\), we have
    \begin{align*}
        \fpb{\sylow_{p^m}}{K_{p^m}}
        = \{\cgr{p}{A_1} \wrGraph \cdots \wrGraph \cgr{p}{A_m} \mid A_i \subseteq \field{p}^+\}.
        \tag*{\qedhere}
    \end{align*}
\end{lemmaq}

Additionally, it is useful to know that the level of a fixed point $\cgr{p}{A_1} \wrGraph \cdots \wrGraph \cgr{p}{A_m}$ in $\fpb{\sylow_{p^m}}{K_{p^m}}$
(that is, the number of orbits) is simply $\sum_{i = 1}^m |A_i|$.

\begin{lemmaq}[See {\cite[Lemma 7.12]{edge_monotone}}]\label{lem:level:sylow:graphs}
    For any prime \(p\) and any positive integer \(m\),
    the level of $\cgr{p}{A_1} \wrGraph \dots
    \wrGraph \cgr{p}{A_m} \in \fpb{\sylow_{p^m}}{K_{p^m}}$ is
    \[\Hasselevel(\cgr{p}{A_1} \wrGraph \dots
    \wrGraph \cgr{p}{A_m}) = \sum_{i = 1}^m |A_i|. \qedhere\]
\end{lemmaq}

\subsection{Fixed Points of Sylow Groups Contain Large Bicliques}

The goal of this section is to understand the fixed points structure of $p$-Sylow groups.
We start by defining the \emph{empty-prefix} of a fixed point in $\fpb{\sylow_{p^m}}{K_{p^m}}$.

\begin{definition}[See {\cite[Definition 7.14]{edge_monotone}}]
    Write $H \coloneqq \cgr{p}{A_1} \wrGraph \dots \wrGraph \cgr{p}{A_m}$ for a fixed
    point of \(\fpb{\sylow_{p^m}}{K_{p^m}}\).
    The \emph{empty-prefix} of \(H\)
    is the smallest index \(i\) with $A_i \neq \emptyset$, minus one;
    we write $\wrLevel(A_1, \dots, A_m) \coloneqq i-1$.
\end{definition}

Our first observation is that each fixed point with a low \emph{empty-prefix} contains a large
biclique as an edge-subgraph. This allows us to use \cref{cor:tight:bounds} which proves \w-hardness and tight bounds.
The proof itself is a simple computation.

\begin{lemmaq}[See {\cite[Lemma 7.15]{edge_monotone}}]\label{lem:treewidth:wreath:level}
    Let $p$ denote a prime number and let $m$ denote a positive integer.
    For each \(i \in \setn{m}\), let $A_i \subseteq \field{p}^+$ denote a subset and set
    $A \coloneqq (A_1, \dots, A_m)$.
    Then, $\cgr{p}{A_1} \wrGraph \cdots \wrGraph \cgr{p}{A_{m}}$ contains
    $K_{p^{m - 1 - \wrLevel(A)}, p^{m - 1 - \wrLevel(A)}}$ as a subgraph.
\end{lemmaq}

Our second observation is that fixed points with a high empty-prefix are
always isomorphic to edge-subgraphs of fixed points with a low
empty-prefix. This allows us to consider only fixed points with low
empty-prefix (see \cref{theo:edge_mono:prime:power:bicliques}).

\begin{lemma}[See {\cite[Lemma 7.17]{edge_monotone}}]\label{lem:wreath:sub:iso}
    Let $p$ denote a prime number and let $m$ denote a positive integer.
    For each \(i \in \setn{m}\), let $A_i \subseteq \field{p}^+$ denote a subset.
    Then, for all $j \in \setn{m}$, the graph
    $\cgr{p}{\emptyset} \wrGraph \cdots\cgr{p}{\emptyset} \wrGraph
    \cgr{p}{A_1} \wrGraph \cdots \wrGraph \cgr{p}{A_{m-j}}$ is isomorphic to an
    edge-subgraph of $\cgr{p}{A_1} \wrGraph \cdots \wrGraph \cgr{p}{A_{m}}$.
\end{lemma}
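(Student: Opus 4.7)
The plan is to produce an explicit isomorphism by cyclically rotating coordinates. Writing $G \coloneqq \cgr{p}{A_1} \wrGraph \cdots \wrGraph \cgr{p}{A_m}$ and $H \coloneqq \cgr{p}{\emptyset} \wrGraph \cdots \wrGraph \cgr{p}{\emptyset} \wrGraph \cgr{p}{A_1} \wrGraph \cdots \wrGraph \cgr{p}{A_{m-j}}$, both graphs share the common vertex set $\fragmentco{0}{p}^m$. I would define the bijection $\phi \colon V(H) \to V(G)$ by
\[
\phi(w_1, \dots, w_j, w_{j+1}, \dots, w_m) \coloneqq (w_{j+1}, \dots, w_m, w_1, \dots, w_j),
\]
that is, $\phi$ brings the last $m-j$ coordinates to the front. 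Since $\phi$ is a bijection on $\fragmentco{0}{p}^m$, it automatically induces an isomorphism from $H$ onto a graph $\phi(H)$ whose vertex set equals $V(G)$. What remains to check is that every edge of $\phi(H)$ is also an edge of $G$, which is exactly what is needed for $\phi(H)$ to be an edge-subgraph of $G$.

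For the verification I would take an arbitrary edge $\{u, v\} \in E(H)$, writing $u = (w_1, \dots, w_m)$ and $v = (z_1, \dots, z_m)$. Because $E(\cgr{p}{\emptyset}) = \emptyset$, the definition of the lexicographic product forces the first coordinate $i$ on which $u$ and $v$ disagree to satisfy $i > j$, and moreover $w_i - z_i \in A_{i-j} \cup -A_{i-j}$. After applying $\phi$, the images $\phi(u)$ and $\phi(v)$ agree on the first $i-j-1$ coordinates (these are inherited from the original indices $j+1, \dots, i-1$ where $u$ and $v$ already agreed) and differ at position $i-j$ precisely by $w_i - z_i$. Since the $(i-j)$-th factor of $G$ is $\cgr{p}{A_{i-j}}$, this places $\{\phi(u), \phi(v)\}$ in $E(G)$, as required.

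The argument has no real obstacle beyond identifying the right bijection: the key observation is that cyclically rotating the coordinates of $\fragmentco{0}{p}^m$ realigns the empty prefix of $H$ with the tail factors of $G$, which are never needed to witness an edge in the first-difference sense. The boundary cases $j = 0$ (where $\phi$ is the identity and $H = G$) and $j = m$ (where $H$ has no edges, so the containment $E(\phi(H)) \subseteq E(G)$ is vacuous) both fit seamlessly into the same formula for $\phi$.
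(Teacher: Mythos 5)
Your proposal is correct and follows essentially the same route as the paper: both realize the isomorphism as a cyclic rotation of the coordinates of $\fragmentco{0}{p}^{m}$ and check that the first index of disagreement is shifted from $i$ to $i-j$, landing on the factor $\cgr{p}{A_{i-j}}$. The only cosmetic difference is that you perform the $j$-fold rotation in a single step with a full verification, whereas the paper rotates by one position and iterates; your writeup is, if anything, more explicit about why no earlier disagreement can arise after the rotation.
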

\begin{proof}
    The intuition behind this proof is that we can use the following bijection
    \[\push_{p^m} \colon V(K_{p^m}) \to V(K_{p^m}); (a_1 \dots, a_m) \mapsto (a_m, a_1, \dots, a_{m-1})\]
    to show that $\push_{p^m}(\cgr{p}{\emptyset} \wrGraph \cgr{p}{A_1} \wrGraph \cdots \wrGraph \cgr{p}{A_{m-1}})$
    is an edge-subgraph of $\cgr{p}{A_1}  \wrGraph \cdots \wrGraph \cgr{p}{A_{m}}$.
    Now, the iteration of this result yields the claim.
\end{proof}

\subsection{Product Groups, Graphs Unions, and Graph Joins}\label{sec:product}

In the last two sections, we consider fixed points of $K_m$ whenever $m$ is a $p$ power.
However, if $m$ is not a $p$-power then we can use the product of multiple $p$-groups to get a $p$-subgroup of
$\aut(K_m)$. For (not necessarily disjoint) sets $X_1, \dots, X_m$, the disjoint union $X_1 \unionSet \cdots \unionSet X_m$
is the set $\{(i, x) \mid i \in \setn{m}, x \in X_i\}$.

\begin{definition}[See {\cite[Definition 6.3]{edge_monotone}}]
    For permutation groups $\Gamma_1 = (G_1, \circ), \dots, \Gamma_m = (G_m, \circ)$
    with $\Gamma_i \subseteq \sym{X_i}$,
    their product group $\Gamma \coloneqq \Gamma_1 \prodGroup \cdots \prodGroup \Gamma_m$ is
    the set $G \coloneqq G_1 \times \cdots \times G_m$ together with the component-wise
    function composition, that is, for \((g_1, \dots, g_m), (g_1', \dots,
    g_m') \in \Gamma\), we set
    \[
        (g_1, \dots, g_m) \circ (g_1', \dots, g_m')
        \coloneqq (g_1 \circ  g_1',\dots, g_m \circ g_m').
    \]

    We let \(\Gamma\) act on \(X \coloneqq X_1 \unionSet \cdots \unionSet X_m\) via
    \[
        g(j, x_j) \coloneqq (j, g_j(x_j)), \text{ for all } g = (g_1, \dots, g_m) \in G
        \text{ and } x = (j, x_j) \in X.
        \tag*{\qedhere}
    \]
\end{definition}

Let $\Gamma_1 \subseteq \aut(G_1), \dots, \Gamma_s \subseteq \aut(G_m)$ denote $p$-groups.
Now, it is easy to see that $\Gamma \coloneqq \Gamma_1 \prodGroup \cdots \prodGroup \Gamma_m$
is also a $p$-group, and that it is a subgroup of $\aut(G)$ for $G \coloneqq G_1 \times \cdots \times G_m$.

Next, we try to understand the fixed points structure of $\fp(\Gamma, G)$. For this, we need some
way to \emph{merge} multiple graphs together.

\definhab

Now, our observation is that the fixed points of $\fp(\Gamma, G)$ are \emph{inhabited graphs}.
In particular, each fixed point $A  \coloneqq \metaGraph{C}{A_1, \dots, A_m}$ consists of
$m$ blocks, where the $i$-th block is a fixed point $A^i \in
\fpb{\Gamma_i}{G_i}$ and we fully connect two blocks $i$ and $j$ with each other if
$\{i, j\} \in E(C)$.

\begin{corollaryq}[See {\cite[Corollary 6.8]{edge_monotone}}]\label{theo:prod:fixed:points}
    For \(i  \in \setn{m}\), let \(G_i\) denote a graph
    and let \(\Gamma_i
    \subseteq \auts{G_i}\) denote a transitive permutation group. Then, we have
    \begin{align*}
        \fpb{\BigProdGroup_{i = 1}^m \Gamma_i}{\BigJoinGraph_{i = 1}^m G_i} =
    \left\{ \metaGraph{C}{A^1, \dots, A^m} \mid C \in \graphs{m}  , A^i \in \fp(\Gamma_i,
    G_i)\right\}.  \tag*{\qedhere}%
    \end{align*}
\end{corollaryq}

The structure of these fixed points is very useful whenever we use large graphs. For example, consider a fixed point
$\metaGraph{C}{A^1, \dots, A^m} \in \fpb{\BigProdGroup_{i = 1}^m \Gamma_i}{\BigJoinGraph_{i = 1}^m G_i}$.
Now, if $\{i, j\} \in E(C)$, then $K_{|V(G_i)|, |V(G_j)|}$ is a subgraph of $\metaGraph{C}{A^1, \dots, A^m}$.
This implies that for large graphs $G_i$ and $G_j$, we can use \cref{cor:tight:bounds} to show \w-hardness and tight bounds.
Lastly, the following remark is quite useful.

\begin{remark}\label{rem:sub-points:IS}
    For \(i  \in \setn{m}\), let \(G_i\) denote a graph, let \(\Gamma_i \subseteq \auts{G_i}\) denote a transitive permutation group, and let
    $C \in \graphs{m}$ be a graph. The sub-points of a fixed point $\metaGraph{C}{\IS, \dots, \IS}$ are precisely the fixed points
    $\metaGraph{C'}{\IS, \dots, \IS}$, where $C'$ is an edge-subgraph of $C$.
\end{remark}

Using \cite[Lemma 6.5]{edge_monotone}, we obtain that the level of the fixed point
\[\metaGraph{C}{A^1, \dots, A^m} \in \fpb{\BigProdGroup_{i = 1}^m
\Gamma_i}{\BigJoinGraph_{i = 1}^m G_i}\]
is equal to
\[\hl{\metaGraph{C}{A^1, \dots, A^m}} = |E(C)| + \sum_{i = 1}^m \hl{A^i}.\]

\section{The Parameterized Complexity of $\MOD{}\clique{}$}\label{sec:appendix:mod}

For our modular counting results from \cref{sec:mod}, we rely on the hardness
$\MODP{p}\clique{}$. However, to show that $\MODP{p}\clique{}$ is hard,
we introduce the following two problems.

\begin{definition}
    We define the following two decision problems
    \begin{itemize}
        \item The problem $\UniSat{k}$ gets as input a $\sat{k}$ instance with the guarantee that this instance either has 0 or 1 solutions.
        The problem is to decide whether a satisfying assignment exists.
        \item The problem $\UniKClique{k}$ gets as input a graph $G$ with the guarantee that this instance either has 0 or 1 many $k$-cliques.
        The problem is to decide whether a $k$-clique exists.
        \qedhere
    \end{itemize}
\end{definition}

To show that there is a constant $\alpha > 0$ such that
no algorithm solves $\UniKClique{k}$ in times $O(|V(G)|^{\alpha k})$, we use
a reduction from $\UniSat{3}$. According to \cite{CALABRO2008386} there
is a constant  $\gamma > 0$ such that no algorithm solves  $\UniSat{3}$ in time $O(2^{\gamma n})$
unless the randomized exponential time hypothesis fails. The randomized exponential time
hypothesis (rETH) states that there is a constant $\delta_3$ such that no randomized
algorithm solves \sat{3} in expected time $O(2^{\delta_3 n})$.

\begin{theoremq}[\cite{CALABRO2008386}]\label{theo:ETH:unique:SAT}
    Assuming rETH, there is a constant $\gamma > 0$ such that no
    algorithm solves $\UniSat{3}$ in time $O(2^{\gamma n})$, where $n$ is the number of variables.
\end{theoremq}

Next, we combine \cref{theo:ETH:unique:SAT} with the sparsification lemma to obtain the following statement.

\begin{corollary}\label{cor:ETH:unique:SAT}
    Assuming rETH, there is a constant $\delta > 0$ such that no algorithm
    solves $\UniSat{3}$ in time $O(2^{\delta (n + m)})$,
    where $n$ is the number of variables and $m$ is the number of clauses.
\end{corollary}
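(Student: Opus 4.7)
The plan is a textbook reduction via the Impagliazzo–Paturi–Zane sparsification lemma, adapted to the unique-SAT setting. Assume for contradiction that for every $\delta>0$ there is an algorithm $\mathbb{A}_\delta$ solving $\UniSat{3}$ in time $O(2^{\delta(n+m)})$. Fix the constant $\gamma>0$ from \cref{theo:ETH:unique:SAT}. The goal is to use $\mathbb{A}_\delta$ for a sufficiently small $\delta$ to build an algorithm that solves $\UniSat{3}$ in time $O(2^{\gamma n})$, contradicting \cref{theo:ETH:unique:SAT} and hence rETH.

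First, I would recall the sparsification lemma in its standard form: for every $\epsilon>0$ there is a constant $c_\epsilon$ and a deterministic algorithm that, given a 3-CNF formula $F$ on $n$ variables, produces a list of $t \le 2^{\epsilon n}$ formulas $F_1,\dots,F_t$, each on the same variables and with at most $c_\epsilon\, n$ clauses, such that $\mathrm{Sat}(F)=\bigcup_{i=1}^t \mathrm{Sat}(F_i)$ and each $F_i$ is obtained by adding clauses to $F$ (so $\mathrm{Sat}(F_i)\subseteq \mathrm{Sat}(F)$). The second property is the key point for the unique-SAT promise: if $F$ is a $\UniSat{3}$ instance, i.e.\ $|\mathrm{Sat}(F)|\le 1$, then each $F_i$ also satisfies $|\mathrm{Sat}(F_i)|\le 1$, so each $F_i$ is a legitimate $\UniSat{3}$ instance. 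Moreover, $F$ is satisfiable iff some $F_i$ is.

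Next, given a $\UniSat{3}$ instance with $n$ variables (and arbitrarily many clauses), I would first run sparsification with parameter $\epsilon$ to produce the list $F_1,\dots,F_t$, each with $n_i = n$ variables and $m_i \le c_\epsilon n$ clauses. Then I would run $\mathbb{A}_\delta$ on each $F_i$, which takes time
\[
O\bigl(2^{\delta(n + c_\epsilon n)}\bigr) = O\bigl(2^{\delta(1+c_\epsilon)n}\bigr)
\]
per call. Returning ``yes'' iff any call returns ``yes'' is correct by the union property, and uses total time
\[
O\bigl(2^{\epsilon n}\cdot 2^{\delta(1+c_\epsilon)n}\bigr)=O\bigl(2^{(\epsilon+\delta(1+c_\epsilon))n}\bigr).
\]

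Finally, I would choose $\epsilon\coloneqq \gamma/2$ (which fixes some $c_\epsilon$ from the sparsification lemma) and then choose
\[
\delta\coloneqq \frac{\gamma}{4(1+c_\epsilon)},
\]
so that $\epsilon+\delta(1+c_\epsilon)\le \gamma/2+\gamma/4<\gamma$. This yields an $O(2^{\gamma n})$ algorithm for $\UniSat{3}$, contradicting \cref{theo:ETH:unique:SAT}. I do not foresee a serious obstacle: the only nontrivial point is checking that the sub-instances produced by sparsification inherit the ``at most one satisfying assignment'' promise, which follows immediately from the inclusion $\mathrm{Sat}(F_i)\subseteq\mathrm{Sat}(F)$ built into the standard sparsification construction.
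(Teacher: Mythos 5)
Your proposal is correct and follows the same route as the paper: both reduce to \cref{theo:ETH:unique:SAT} via the sparsification lemma. The paper's proof is a two-line sketch, while you helpfully spell out the one genuinely non-routine point — that the sparsified sub-formulas $F_i$ satisfy $\mathrm{Sat}(F_i)\subseteq\mathrm{Sat}(F)$ and therefore inherit the uniqueness promise — which the paper leaves implicit.
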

\begin{proof}
    Suppose that there is a randomized algorithm that solves the $\UniSat{3}$ problem in
    expected time $O(2^{\delta (n + m)})$.
    Then, we can use the Sparsification Lemma (see \cite[Theorem 14.4]{param_algo}) to get an
    algorithm that solves $\UniSat{3}$ in expected time $O(2^{\delta' n})$ for some $\delta' > 0$.
\end{proof}

Using \cref{cor:ETH:unique:SAT}, we can show tight bounds for $\UniKClique{k}$ assuming rETH.
The proof constructs a parsimonious reduction from $\sat{3}$ to $k$-\clique{} that is fast
enough to keep the exponent.

\begin{theorem}\label{thm:uni-k-clique:ETH}
    Assuming rETH, there is a constant $\gamma > 0$ such that, no
    algorithm (that reads the whole input) solves $\UniKClique{k}$
    in time $O(|V(G)|^{\alpha k})$.
\end{theorem}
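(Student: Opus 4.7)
The plan is to prove the bound by giving a parsimonious reduction from $\UniSat{3}$ to $\UniKClique{k}$ and then invoking \cref{cor:ETH:unique:SAT}. This mirrors the standard ETH-based proof of \cref{theo:k-clique:ETH} (the $k$-clique lower bound), but replaces the usual literal-per-occurrence gadget with an enumeration-based gadget that preserves the \emph{exact} number of solutions.

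I fix $k \geq 3$ and consider a 3-SAT formula $\phi$ with $n$ variables and $m$ clauses. After a trivial preprocessing step that discards instances in which some variable appears in no clause (in such cases $\phi$ cannot have a unique satisfying assignment, so the answer is ``no''), I partition the clauses of $\phi$ into $k$ groups $C_1, \dots, C_k$ of size at most $\lceil m/k \rceil$ each. For each $i \in \setn{k}$, let $V_i$ denote the variables occurring in $C_i$, so $|V_i| \leq 3 \lceil m/k \rceil$; I then enumerate the set $S_i$ of partial assignments $\sigma \colon V_i \to \{0,1\}$ that satisfy every clause of $C_i$. The resulting graph $G$ has vertex set $\{(i, \sigma) : i \in \setn{k}, \sigma \in S_i\}$ and an edge between $(i, \sigma)$ and $(j, \tau)$ for $i \neq j$ whenever $\sigma$ and $\tau$ agree on $V_i \cap V_j$. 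Note that $|V(G)| \leq k \cdot 2^{3 \lceil m/k \rceil}$ and that $G$ can be built in time polynomial in $|V(G)|$.

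The key point will be that the map $\tau \mapsto \{(i, \tau|_{V_i}) : i \in \setn{k}\}$ is a bijection between satisfying assignments of $\phi$ and $k$-cliques of $G$: any $k$-clique yields pairwise consistent partial assignments which glue to a unique assignment on $V_1 \cup \cdots \cup V_k = \mathrm{vars}(\phi)$, and this assignment satisfies every clause since every clause lies in some $C_i$. Consequently, $\phi$ has at most one satisfying assignment iff $G$ has at most one $k$-clique, so an algorithm for $\UniKClique{k}$ on $G$ decides $\UniSat{3}$ on $\phi$. A hypothetical $O(|V(G)|^{\alpha k})$ algorithm for $\UniKClique{k}$ composes with this reduction (for fixed $k$) to an algorithm for $\UniSat{3}$ of running time $O(2^{3 \alpha m + O(1)})$; choosing $\alpha < \delta/3$, where $\delta$ is the constant of \cref{cor:ETH:unique:SAT}, then contradicts rETH.

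The main obstacle will be verifying parsimoniousness: the textbook SAT-to-clique reduction (with one vertex per literal occurrence) is \emph{not} parsimonious, since a single satisfying assignment may yield many cliques whenever several literals of some clause are simultaneously true; enumerating satisfying partial assignments per group of clauses is precisely what circumvents this. A secondary technicality is that, for the smallest values of $k$, the polynomial construction cost of the reduction can exceed $|V(G)|^{\alpha k}$; however, by picking $\alpha$ small enough that $\alpha k < 1$ for these $k$, the purported running time $|V(G)|^{\alpha k}$ is already sublinear and thus unachievable by any algorithm that reads the whole input, so the statement holds vacuously in that regime.
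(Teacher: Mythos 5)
Your proof is correct, but it takes a genuinely different route from the paper's. The paper keeps the textbook pipeline $\sat{3} \to 3\text{-}\textsc{Coloring} \to \clique{}$ and repairs its failure of parsimony at the gadget level: for each satisfying assignment of a clause's three literals it designates exactly one of the possibly-many proper colorings of the Garey--Johnson clause gadget as ``valid,'' so that satisfying assignments biject with proper-and-valid colorings, and then builds a consistency graph over $2k+1$ groups (one for $\{T,F,B\}$, $k$ of variables, $k$ of clause gadgets) whose $(2k+1)$-cliques are exactly those colorings. You discard the coloring layer entirely and enumerate, per group of clauses, the satisfying partial assignments themselves; consistency of partial assignments is automatically parsimonious, so no canonical-choice argument is needed. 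Your route is shorter, avoids the gadget case analysis (the ``exactly 7 valid colorings'' bookkeeping), and produces $k$-cliques directly for every $k$ rather than cliques of size $2k+1$; the paper's route stays closer to the reference proof it modifies. Your preprocessing for variables occurring in no clause is a genuine necessity of your construction (the paper's variable groups cover all variables regardless of occurrence), and your justification is sound: under the uniqueness promise such a variable forces the count to be $0$, so answering ``no'' is correct, and when the promise is violated any answer is acceptable. Both proofs handle small $k$ identically, via the unconditional impossibility of sublinear algorithms that read their input.

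One phrasing to tighten: you write that one should pick ``$\alpha$ small enough that $\alpha k < 1$ for these $k$,'' but $\alpha$ must be a single universal constant fixed before $k$. The correct order of quantifiers is the paper's: fix $\alpha$ (small enough relative to $\delta$ to absorb both the construction cost $|V(G)|^{O(1)} \cdot \mathrm{poly}(n+m) = 2^{O(m/k)}\mathrm{poly}(n+m)$ and the oracle cost $2^{O(\alpha m)}$ whenever $k \geq c/\alpha$), and then observe that for the finitely many $k < c/\alpha$ the bound $|V(G)|^{\alpha k}$ is sublinear and hence vacuous. With that adjustment the argument is complete.
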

\begin{proof}
    The proof is a modification of the proof of Theorem 14.21 in \cite{param_algo} and \cite[Lemma B.2]{edge_monotone}.
    Originally, the proof works by first using a reduction from $\sat{3}$ to $3$-$\textsc{Coloring}$ and later
    a reduction from $3$-$\textsc{Coloring}$ to $k$-$\clique{}$. The problem with this approach is that this
    reduction is not parsimonious since each valid satisfying assignment generates multiple valid
    3 colorings. We circumvent this problem by combining both reductions into a single one.
    This gives us more freedom in defining the colorings, which ensures that each
    satisfying assignment only yields a single \emph{valid} coloring that later leads to exactly one $k$-clique. This
    parsimonious reduction proves the claim.

    Give a $\sat{3}$ formula $\phi$ with $n$ variables and $m$ clauses, we construct a graph $G_{\phi}$ using the
    standard reduction \cite[Theorem 2.1]{GAREY1976237} from  $\sat{3}$ to $3$-$\textsc{Coloring}$.

    Let $x_1, \dots, x_n$ be the variables of $\phi$ and $C_1, \dots, C_m$ be the clauses of $\phi$. The $i$-th clause
    is equal to $C_i \coloneqq \ell_{i, 1} \lor \ell_{i, 2} \lor \ell_{i, 3}$, where $\ell_{i, 1}, \ell_{i, 2}, \ell_{i, 3}$
    are literals. The vertex set of $G_{\phi}$ is defined as
    \[V(G_{\phi}) = \{T, F, B\} \cup \{x_i, \overline{x}_i : i \in [n] \} \cup \{y_{i, 1}, \dots, y_{i, 6} : i \in [m]\}\]
    this is, we introduce two vertices $x_i$ and $\overline{x}_i$ per variable and six vertices per clause. We call the vertices
    $x_i$ and $\overline{x}_i$ \emph{literal vertices} since they represent the literals of $\Phi$. Further, we
    introduce three auxiliary vertices $T, F, B$.

    Next, we describe the edge set of $G_{\phi}$. First, we connect the vertices
    $T, F, B$ with each other to form a triangle. Next, we connect $x_i$ with $\overline{x}_i$ and connect both
    $x_i$ and $\overline{x}_i$ with $B$. Further, we connect each vertex $y_{i, 6}$ to both $F$ and $B$. Lastly,
    for each clause $C_i$, we add the clause gadget from \cref{fig:clause:gadget} to the graph $G_{\Phi}$.

    \begin{figure}[t]
        \centering
        \begin{subfigure}{.35\textwidth}
        \centering
        \includegraphics{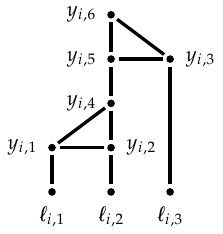}
        \label{fig:clause:gadget}
        \caption{The clause gadget of the clause $C_i$}
        \end{subfigure}\qquad%
        \begin{subfigure}{.55\textwidth}
        \centering
        \includegraphics{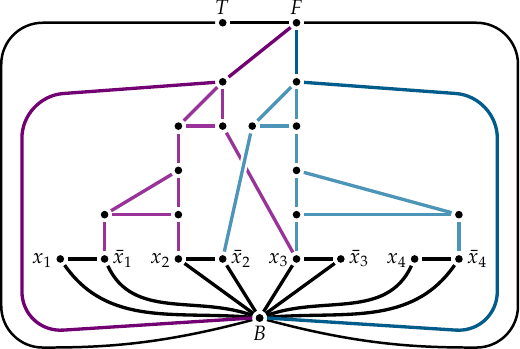}
        \label{fig:clause:gadget}
        \caption{The graph $G_\phi$ for \(\phi = (\bar{x}_1 \vee x_2 \vee x_3) \wedge
        (\bar{x}_4 \vee x_3 \vee \bar{x}_2)\).}
        \end{subfigure}
        \caption{How to construct the graph $G_{\phi}$.}\label{fig:clause:gadget}
    \end{figure}

    Observe, that the vertex set of each clause gadget $C_i$ contains three literal vertices
    and six additional vertices $y_{i, 1}, \dots, y_{i, 6}$. This means that a literal vertex
    $x_i$ or $\overline{x_i}$ might occur in multiple clause gadgets.

    We introduce three colors $\colorT, \colorF, \colorB$. Let $f \colon V(G_{\phi}) \to \{\colorT, \colorF, \colorB\}$
    be a coloring of $G_{\phi}$.
    Assume that $f$ is a proper coloring with $f(z) = z$, for all $z \in \{\colorT, \colorF, \colorB\}$,
    then $f(x_{i}), f(\overline{x}_i) \in \{\colorT, \colorF\}$
    and $f(x_i) \neq f(\overline{x}_i)$.
    Intuitively, this allows us to interpret the color $\colorT$ as \emph{true} and the color $\colorF$ as
    \emph{false}. The color $\colorB$ stands for \emph{base}.

    Next, we define what it means to be a valid coloring. We say that $f$ is a \emph{valid} coloring if
    \begin{itemize}
        \item $f(z) = z$, for all $z \in \{\colorT, \colorF, \colorB\}$,
        \item and the coloring of each clause gadget $C_i$ is valid.
    \end{itemize}

    We define what it means that a clause gadget $C_i$ has a valid color. By \cite[Theorem 2.1]{GAREY1976237}, if
    each literal $\ell_{i, 1}, \ell_{i, 2}, \ell_{i, 3}$
    is colored using the colors $\colorT$ and $\colorF$, then each proper coloring $f$ of a clause gadget of $C_i$
    has the property that $y_{i, 6}$ can only be colored with the color $\colorT$ if at least one
    literal is colored with $\colorT$. Intuitively this means that the gadget simulates a disjunction
    of the literals $\ell_{i, 1}, \ell_{i, 2}, \ell_{i, 3}$. However, if we have an assignment of
    variables such that $\ell_{i, 1} \lor \ell_{i, 2} \lor \ell_{i, 3}$ is true,
    then they might exist multiple proper colorings $f$ of the clause gadget $C_i$ with $f(y_{i, 6}) = \colorT$.
    We avoid this ambiguity by using the following approach.

    For each assignment of the literals  $\ell_{i, 1}, \ell_{i, 2}, \ell_{i, 3}$ (i.e. colorings using $\colorT$ or $\colorF$)
    such that $\ell_{i, 1} \lor \ell_{i, 2} \lor \ell_{i, 3}$
    is true, we consider all proper colors $f$ of the clause gadget with $f(y_{i, 6}) = \colorT$. Now, we choose one of those
    proper colors and define it to be a valid coloring. This way, we get exactly 7 valid colors for clause gadgets.

    Next, we show that there is a one-to-one mapping between satisfying assignments and colorings that are
    both proper and valid.
    \begin{claim}
        The number of satisfying assignments in $\phi$ is equal to the number of proper and valid colorings
        $f \colon V(G_{\phi}) \to \{\colorT, \colorF, \colorB\}$.
    \end{claim}
    \begin{claimproof}
        Let $f$ be a proper and valid coloring of $G_{\phi}$ then $f$ defines a satisfying assignment $\psi(f)$ of $\phi$
        by setting the $i$th variable to true if and only if $f(x_i) = \colorT$. Since $f$ is a valid coloring
        we get that $f(z) = z$ for $z \in \{T, F, B\}$. However, this implies $f(y_{i, 6}) = \colorT$ for all
        clause gadgets $C_i$ because $y_{i, 6}$ is connected to the vertices $T$ and $F$. This is only possible
        if at least one of the literals $\ell_{i, 1}, \ell_{i, 2}, \ell_{i, 3}$ is colored using $\colorT$
        which implies that $\psi(f)$ is an assignment that satisfies all clauses $C_i$ of $\phi$.

        On the other side, let $A \coloneqq \{z_1, \dots, z_n\}$ be a satisfying assignment of $f_A$. We show that this
        assignment corresponds to exactly one valid and proper coloring $f_A$. First, we define
        $f_A(z) = z$ for $z \in \{T, F, B\}$. Next, we define $f_A(x_i) = \colorT$ and $f_A(\overline{x}_i) = \colorF$
        if $z_i$ is true, and  $f_A(x_i) = \colorF$ and $f_A(\overline{x}_i) = \colorT$ if $z_i$ is false.
        Lastly, let $C_i$ be a literal. Since $A$ is satisfying, we get that $\ell_{i, 1} \lor \ell_{i, 2} \lor \ell_{i, 3}$ is true.
        This means that there exists exactly one valid coloring of $C_i$ with the property that $y_{i, 6}$ gets the color
        $\colorT$. We use this coloring to color the vertices $y_{i, 1}, \dots, y_{i, 6}$. This defines our coloring $f_A$.
        It is now easy to check that $f_A$ is a valid and proper coloring.

        Observe that for all colorings $f$ we get $f_{\psi(f)} = f$ and for all satisfying assignments $A$, we obtain
        $\psi(f_A) = A$. This shows that the set of proper and valid colorings is equal to the set of satisfying assignments.
    \end{claimproof}

    We show how to compute the number of proper and valid $3$-colorings by counting cliques of size $(2k + 1)$.

    \begin{claim}\label{claim:clique:sat}
        If we can solve $k$-$\clique{}$ in time $O(n^{\alpha k})$, then we can solve \sat{3}
        in time $O( b^{ 2(n + m) / k} \cdot (n + m)^2 + ( b^{2\alpha(n + m)}))$ where $b = 3^9$.
    \end{claim}
    \begin{claimproof}
    First, we split the variables and clauses of the
    formula $\phi$ into different groups. For this, we split the variables $x_1, \dots, x_n$ into $k$ groups such
    that each group contains at most $\lceil n / k \rceil$ many variables. Next, we do the same for the clauses and
    obtain $k$ groups of clauses such that each group contains at most $\lceil m / k \rceil$ many clauses.
    We use this to define $2k + 1$ many groups on $V(G_\phi)$:
    \begin{itemize}
        \item We define one group containing the vertices $\colorT$, $\colorF$, and $\colorB$.
        \item Each group of variables $x_{a_1}, \dots, x_{a_s}$ defines a group of \emph{literal vertices}
        $x_{a_1}, \overline{x}_{a_1}, \dots, x_{a_s}, \overline{x}_{a_s}$.
        \item Each group of clauses $C_{b_1}, \dots, C_{b_s}$ defines a group of clause gadget vertices.
        The vertices of this group are precisely those vertices that are part of some clause gadget $C_{b_i}$ (see \cref{fig:clause:gadget}).
    \end{itemize}
    Note that a vertex $x_i \in V(G_\phi)$ might be contained in multiple groups. We use $V_1, \dots, V_{2k +1}$ to enumerate the groups.
    Observe that $|V_i| \leq 9 (n + m) / k$. Next, we create a graph $\Tilde{G}_\phi$ in the following.

    The idea is that the vertices of $\Tilde{G}_\phi$ represent valid colorings of subgraphs of $G_\phi$, meaning that
    each vertex of $\Tilde{G}_\phi$ has the form $(V_i, f_i)$, where $f_i$ is a valid and proper coloring of $V_i$. For the group
    $\{\colorT, \colorF, \colorB\}$, we create a single vertex with the coloring $f_i(z) = z$. Next, let $V_i$ be a group of \emph{literal vertices}
    then we create a vertex $(V_i, f)$ for each valid and proper coloring $f_i$ on $V_i$ meaning that $f_i$ can only use colorers
    in $\{\colorT, \colorF\}$. Lastly, let $V_i$ be a group of clause gadget vertices, then we create a vertex $(V_i, f_i)$
    for each valid and proper coloring $f_i$, meaning that $f_i$ is a valid color of each clause gadget $C_j$ that is part of the group $V_i$.
    Note that $V_i$ either contains all vertices of a clause gadget or no vertex of a clause gadget. Observe that $\Tilde{G}_\phi$ has
    at most $(2k + 1) \cdot 3^{9 (n + m) / k} = (2k + 1) \cdot b^{(n + m) / k}$ many vertices, where $b \coloneqq 3^9$.

    Next, we add an edge between $(V_i, f_i)$ and $(V_j, f_j)$ if
    $i \neq j$, $f_i(x) = f_j(x)$ for all $x \in V_i \cap V_j$, and $f$ is a proper $3$-coloring of $G_\phi(V_i \cup V_j)$,
    where $f$ is defined as $f(x) = f_i(x)$ if $x \in V_i$ and $f(x) = f_j(x)$ if $x \in V_j$. It is easy to see that can
    construct the graph $\Tilde{G}_\phi$ in time
    \[O( ((2k + 1) \cdot b^{ (n + m) / k})^2 \cdot (n + m)^2 )\]
    Further, it is easy to see that each proper and valid coloring of $G_\phi$ corresponds to exactly one
    $(2k + 1)$-clique in $\Tilde{G}_\phi$. Assume next that we can solve $k$-clique in time $O(n^{\alpha k})$, then
    we can compute the number of satisfying assignments of $\phi$ in time
    \[O(((2k + 1) \cdot b^{ (n + m) / k})^2 \cdot (n + m)^2 + ((2k + 1) \cdot b^{(n + m) / k})^{\alpha (2k + 1)})\]
    For a fixed $k$, this time is in $O(b^{ 2(n + m) / k} \cdot (n + m)^2 + ( b^{2\alpha(n + m)}))$.
    \end{claimproof}

    Let $\delta > 0$ be constant from \cref{cor:ETH:unique:SAT}. We set $\alpha \coloneqq \log_b(2) \delta /3$.
    If there is an $k \geq 1 / \alpha$ such that we can solve $k$-$\clique{}$ in time $O(n^{\alpha k})$,
    then we use \cref{claim:clique:sat} to solve \sat{3} in time
    \[O(b^{ 2(n + m) / k} \cdot (n + m)^2 + ( b^{2\alpha(n + m)}))
    \subseteq O(b^{ 2\alpha(n + m) } \cdot (n + m)^2)
    \subseteq  O(2^{2 \delta(n + m) / 3 } \cdot (n + m)^2) \subseteq  O(2^{ \delta(n + m) }). \]
    However, according to \cref{cor:ETH:unique:SAT} this is only possible if ETH fails.
    If $k < 1 / \alpha$ then $O(n^{\alpha k})$ would be sublinear and there is no
    sublinear algorithm that reads the whole input.
\end{proof}

Now, \cref{thm:uni-k-clique:ETH} gives us tight lower bounds for $\UniKClique{k}$, which immediately yields
tight lower bounds for \cref{cor:k-clique:mod:ETH}.

\corcliquemodeth
\begin{proof}
    Let $\alpha$ be the constant from \cref{thm:uni-k-clique:ETH}.
    Assume that we have an algorithm that decides if the number of $k$-cliques is divisiable by $p$ in time $O(|V(G)|^{\alpha k})$
    then this algorithm can be used to solve $\UniClique$ in time $O(|V(G)|^{\alpha k})$. However, this is not possible
    unless rETH fails.
\end{proof}

Lastly, we show that $\MODP{p}\clique{}$ and $\NUMP{p}\clique{}$ are equivalent under parameterized
Turing reductions

\begin{lemma}\label{lem:clique:mod:equiv}
    The problems $\MODP{p}\clique{}$ and $\NUMP{p}$-$\clique{}$ are equivalent under parameterized Turing reductions.
\end{lemma}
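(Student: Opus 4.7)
The plan is to establish both directions separately. The easy direction $\MODP{p}\clique \fpt \NUMP{p}\clique$ is already observed in the paper: given an oracle for $\NUMP{p}\clique$, we compute $\NUM{}\clique(G,k) \bmod p$ and return \emph{true} iff the result is $0$. This is a parameterized Turing reduction making a single oracle query on the same instance $(G,k)$, so the parameter is preserved exactly.

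For the nontrivial direction $\NUMP{p}\clique \fpt \MODP{p}\clique$, I would use a simple disjoint-union trick to shift the clique count by any prescribed residue. For each $j \in \{0, 1, \dots, p-1\}$, let $G_j \coloneqq G \unionSet j \cdot K_k$ denote the disjoint union of $G$ with $j$ vertex-disjoint copies of $K_k$. Since any $k$-clique of $G_j$ either lies entirely in $G$ or is one of the $j$ added copies (a $k$-clique cannot use vertices from two disjoint components), we have
\[
\NUM{}\clique(G_j, k) \;=\; \NUM{}\clique(G, k) + j.
\]
Therefore $\NUM{}\clique(G_j, k) \equiv 0 \pmod{p}$ iff $j \equiv -\NUM{}\clique(G,k) \pmod{p}$. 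Querying the $\MODP{p}\clique$ oracle on $(G_j, k)$ for each $j \in \{0, \dots, p-1\}$, exactly one value $j^\star$ yields \emph{true}, and we output $(-j^\star) \bmod p$ as the value of $\NUMP{p}\clique(G,k)$.

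It remains to verify that this is a parameterized Turing reduction. Each $G_j$ has $|V(G)| + j k \leq |V(G)| + pk$ vertices and can be built in polynomial time; since $p$ is a fixed constant, we make exactly $p$ oracle queries, all with the same parameter $k$, so both the FPT running time bound and the requirement $\kappa'(y) \leq g(\kappa(x))$ on query parameters are satisfied (with $g = \id$). The only thing to note is the boundary case $k = 1$ or trivial inputs, where the count is immediately computable without the oracle, so no issue arises. No step requires a deep idea: the reduction exists because disjoint $K_k$'s give us direct additive control over the count modulo $p$, which turns the single-bit oracle $\MODP{p}\clique$ into a residue-determining procedure by exhaustion over the $p$ possible residues.
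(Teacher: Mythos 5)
Your proof is correct and takes essentially the same approach as the paper: the easy direction is the trivial zero-test, and the nontrivial direction uses exactly the same trick of adding $j$ disjoint copies of $K_k$ to shift the clique count by $j$ and then exhausting over the $p$ residues. The paper's proof is identical in substance, so nothing further is needed.
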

\begin{proof}
    Let $A$ be an algorithm that computes $\NUMP{p}\clique{}$. On input $(G, k)$, we use the
    algorithm to compute the number of $k$-cliques modulo $p$. If this number is zero, then we return
    true. Otherwise, we return false. This algorithm decides if the number $k$-cliques is divisible
    by $k$.

    On the other side, let $A$ be an algorithm that decides $\MODP{p}\clique{}$. For a given input
    $(G, k)$, let $G_i$ be the graph that we get by adding $i$ many $k$-cliques to $G$. Let $x$
    be the number of $k$-cliques in $G$, then observe that the number  of $k$-cliques in $G_i$
    is equal to $x + 1$ since we get $x$ many $k$-cliques in $G$ plus one $k$-clique for
    each $k$-clique that we add to $G$. Thus $G_i$ contains $x + i$ many $k$-cliques.
    We now simply run $A$ on input $(G_0, k)$, $(G_1, k)$, $\dots$ $(G_{p-1}, k)$.
    Observe that $A$ returns true exactly once and let $G_i$ be this graph.
    This means that $x + i \equiv_p 0$ which implies that $x \equiv_p -i$. Thus
    we simply return $-i + p$ which is equal to the number of $k$-cliques
    in $G$ modulo $p$.
\end{proof}

\end{document}